\tikzstyle{vertex}=[circle, draw, inner sep=0pt, minimum size=6pt]
\newtheorem{theorem}{Theorem}
\newtheorem{definition}[theorem]{Definition}
\newtheorem{lemma}[theorem]{Lemma}
\newtheorem{remark}[theorem]{Remark}
\newtheorem{example}{Example}
\newcommand{\llbr}{[\![}
\newcommand{\rrbr}{]\!]}
\newcommand{\C}{\mathbb C}
\newcommand{\Z}{\mathbb Z}
\newcommand{\F}{\mathbb F}
\newcommand\numberthis{\addtocounter{equation}{1}\tag{\theequation}}
\newcommand{\ket}[1]{|{#1}\rangle}
\newcommand{\bra}[1]{\langle{#1}|}
\begin{document}

\title{Designing the Quantum Channels Induced by Diagonal Gates}

\author{Jingzhen Hu}
\email{jingzhen.hu@duke.edu}
\orcid{0000-0002-2699-3966}
\affiliation{Department of Mathematics, Duke University, Durham, NC 27708, USA}
\thanks{}
\author{Qingzhong Liang}
\email{qingzhong.liang@duke.edu}
\orcid{0000-0002-5073-9431}
\affiliation{Department of Mathematics, Duke University, Durham, NC 27708, USA}
\thanks{}
\author{Robert Calderbank}
\email{robert.calderbank@duke.edu}
\orcid{0000-0003-2084-9717}
\affiliation{Department of Mathematics, Duke University, Durham, NC 27708, USA}
\affiliation{Department of Electrical and Computer Engineering, Department of Computer Science, Duke University, NC 27708, USA}
\thanks{\newline J.H and Q.L contributed equally to this work.}
\maketitle

\begin{abstract}
  	The challenge of quantum computing is to combine error resilience with universal computation. Diagonal gates such as the transversal $T$ gate play an important role in implementing a universal set of quantum operations. This paper introduces a framework that describes the process of preparing a code state, applying a diagonal physical gate, measuring a code syndrome, and applying a Pauli correction that may depend on the measured syndrome (the average logical channel induced by an arbitrary diagonal gate). It focuses on CSS codes, and describes the interaction of code states and physical gates in terms of generator coefficients determined by the induced logical operator. The interaction of code states and diagonal gates depends very strongly on the signs of $Z$-stabilizers in the CSS code, and the proposed generator coefficient framework explicitly includes this degree of freedom. The paper derives necessary and sufficient conditions for an arbitrary diagonal gate to preserve the code space of a stabilizer code, and provides an explicit expression of the induced logical operator. When the diagonal gate is a quadratic form diagonal gate (introduced by Rengaswamy et al.), the conditions can be expressed in terms of divisibility of weights in the two classical codes that determine the CSS code. These codes find application in magic state distillation and elsewhere. When all the signs are positive, the paper characterizes all possible CSS codes, invariant under transversal $Z$-rotation through $\pi/2^l$, that are constructed from classical Reed-Muller codes by deriving the necessary and sufficient constraints on $l$. The generator coefficient framework extends to arbitrary stabilizer codes but there is nothing to be gained by considering the more general class of non-degenerate stabilizer codes.
\end{abstract}

	\section{Introduction and Review\footnote{Section \ref{sec:prelims} introduces notation and provides technical background for the results described in this section.}} 
	\label{sec:Intro}
	    	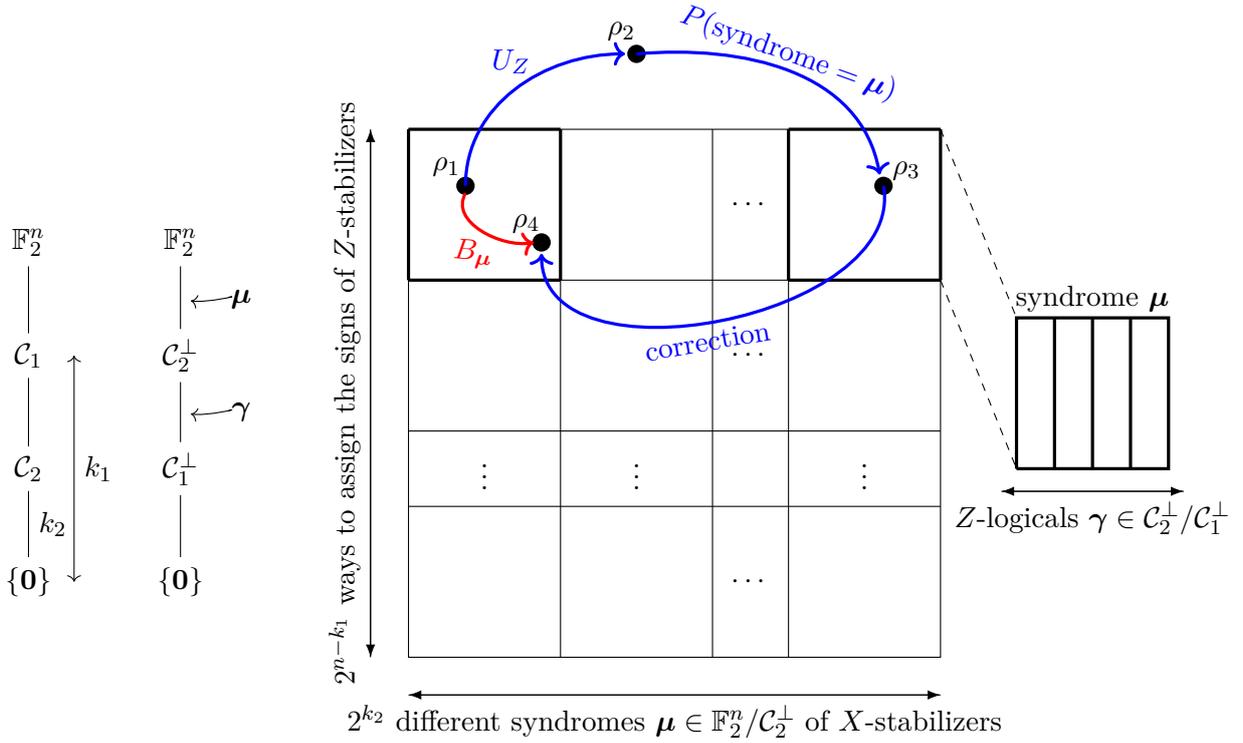
\begin{figure*}[t]
				\centering
			\begin{tikzpicture}
			\node (Z) at (0,0) {$\{ \bm{0} \}$};
			\node (C2) at (0,1.5) {$\mathcal{C}_2$};
			\node (C1) at (0,3) {$\mathcal{C}_1$};
			\node (F2m) at (0,4.5) {$\mathbb{F}_2^{n}$};
			\path[draw] (Z) -- (C2) node[midway,right] {{$k_2$}} -- (C1) node[midway,left] {{}} -- (F2m) node[midway,left] {{}};
			\path[draw,<->,black] (0.6,0) -- (0.6,3) node [midway,right] {$k_1$}; 
			\node (Zp) at (2,0) {$\{ \bm{0} \}$};
			\node (C1p) at (2,1.5) {$\mathcal{C}_1^{\perp}$};
			\node (C2p) at (2,3) {$\mathcal{C}_2^{\perp}$};
			\node (F2m) at (2,4.5) {$\mathbb{F}_2^{n}$};
			\node (mu) at (2.8,3.75) {$\bm{\mu}$};
			\node (frommu) at (2.5,3.75) {};
			\node (tomu) at (2,3.75) {};
			\node (gamma) at (2.8,2.25) {$\bm{\gamma}$};
			\node (fromgamma) at (2.5,2.25) {};
			\node (togamma) at (2,2.25) {};
			\draw[->,black] (frommu) to [out=10,in=-10] (tomu);
			\draw[->,black] (fromgamma) to [out=10,in=-10] (togamma);
			
			\path[draw] (Zp) -- (C1p) node[midway,left] {{}} -- (C2p) node[midway,left] {{}} -- (F2m) node[midway,left] {{}};
			\tikzmath{\x1 = 5; \y1 =-1; 
            \x2 = \x1 + 7; \y2 =\y1 +7; } 

			\draw[] (\x1,\y1) -- (\x2,\y1) -- (\x2,\y2) -- (\x1,\y2) -- (\x1,\y1);
		    \draw[] (\x1,\y1+2) -- (\x2,\y1+2);
		    \draw[] (\x1,\y1+3) -- (\x2,\y1+3);
		    \draw[] (\x1,\y1+5) -- (\x2,\y1+5);
		    \draw[] (\x1+2,\y1) -- (\x1+2,\y2);
		    \draw[] (\x1+4,\y1) -- (\x1+4,\y2);
		    \draw[] (\x1+5,\y1) -- (\x1+5,\y2);
		    \draw[-Latex] (\x1-0.5,\y1) -- (\x1-0.5,\y2) node [midway,left] {\rotatebox{90}{$2^{n-k_1}$ ways to assign the signs of $Z$-stabilizers}}; 
		    \draw[Latex-] (\x1-0.5,\y1) -- (\x1-0.5,\y2) node [midway,left] {};
		    
		    \draw[-Latex] (\x1,\y1-0.5) -- (\x2,\y1-0.5) node [midway,below] {$2^{k_2}$ different syndromes $\bm{\mu} \in \F_2^n/\mathcal{C}_2^\perp$ of $X$-stabilizers}; 
		    \draw[Latex-] (\x1,\y1-0.5) -- (\x2,\y1-0.5) node [midway,below] {}; 
		    
		    \node (vdots1) at (\x1+1,\y1+2.5) {$\vdots$};
		    \node (vdots2) at (\x1+3,\y1+2.5) {$\vdots$};
		    \node (vdots2) at (\x1+6,\y1+2.5) {$\vdots$};
		    \node (cdots1) at (\x2-2.5,\y1+1) {$\cdots$};
		    \node (cdots2) at (\x2-2.5,\y1+4) {$\cdots$};
		    \node (cdots2) at (\x2-2.5,\y1+6) {$\cdots$};
            \draw[very thick] (\x1,\y2) -- (\x1+2,\y2);
            \draw[very thick] (\x1,\y2) -- (\x1,\y2-2);
            \draw[very thick] (\x1+2,\y2-2) -- (\x1+2,\y2);
            \draw[very thick] (\x1+2,\y2-2) -- (\x1,\y2-2);
            
            \draw[very thick] (\x2,\y2) -- (\x2-2,\y2);
            \draw[very thick] (\x2,\y2) -- (\x2,\y2-2);
            \draw[very thick] (\x2-2,\y2-2) -- (\x2-2,\y2);
            \draw[very thick] (\x2-2,\y2-2) -- (\x2,\y2-2);
            \filldraw[very thick](\x1+0.75,\y2-0.75) circle (0.1);
            \node (rho1) at (\x1+0.5,\y2-0.5) {$\rho_1$};
            
            \filldraw[very thick](\x1+3,\y2+1) circle (0.1);
            \node (rho2) at (\x1+2.8,\y2+1.3) {$\rho_2$};
            
            \filldraw[very thick](\x2-0.75,\y2-0.75) circle (0.1);
            \node (rho3) at (\x2-0.45,\y2-0.55) {$\rho_3$};
            
            \filldraw[very thick](\x1+1.75,\y2-1.5) circle (0.1);
            \node (rho4) at (\x1+1.55,\y2-1.25) {$\rho_4$};
            
            \draw[->,blue,very thick] (\x1+0.75,\y2-0.75) to [out=90,in=180] (\x1+2.85,\y2+1);
            \node[blue] (U_Z) at (\x1+1.35,\y2+0.9) {$U_Z$};
            
            \draw[->,blue,very thick] (\x1+3,\y2+1) to [out=5,in=100] (\x2-0.8,\y2-0.6);
            \node[blue] (msmt) at (\x2-2,\y2+1) {\rotatebox{-20}{$P(\text{syndrome}=\bm{\mu})$}};
            
            \draw[->,blue,very thick] (\x2-0.75,\y2-0.75) to [out=-80,in=-90] (\x1+1.75,\y2-1.65);
            \node[blue] (msmt) at (\x2-3,\y2-2.8) {\rotatebox{10}{correction 
            }};
            
            \draw[->,red,very thick] (\x1+0.75,\y2-0.85) to [out=-120,in=-170] (\x1+1.65,\y2-1.5);
            \node[red] (ILO) at (\x1+0.85,\y2-1.65) {\rotatebox{-5}{$B_{\bm{\mu}}$}};
			\draw[very thick] (\x2+1,\y2-4.5) -- (\x2+3,\y2-4.5) -- (\x2+3,\y2-2.5) -- (\x2+1,\y2-2.5) -- (\x2+1,\y2-4.5);
			
			\draw[dashed] (\x2,\y2) to (\x2+1,\y2-2.5);
			\draw[dashed] (\x2,\y2-2) to (\x2+1,\y2-4.5);
			
			\draw[very thick] (\x2+1.5,\y2-4.5) -- (\x2+1.5,\y2-2.5);
			\draw[very thick] (\x2+2,\y2-4.5) -- (\x2+2,\y2-2.5);
			\draw[very thick] (\x2+2.5,\y2-4.5) -- (\x2+2.5,\y2-2.5);
			\node (mu) at (\x2+2,\y2-2.25) {syndrome $\bm{\mu}$};
			
			 \draw[-Latex] (\x2+3.2,\y2-4.8) -- (\x2+0.8,\y2-4.8) node [midway,below] {$Z$-logicals $\bm{\gamma}\in \mathcal{C}_2^\perp /\mathcal{C}_1^\perp$}; 
		    \draw[Latex-] (\x2+3.2,\y2-4.8) -- (\x2+0.8,\y2-4.8) node [midway,below] {}; 
		    
			\end{tikzpicture}
			\caption{
			The $2^{n-k_1}$ rows of the array are indexed by the $\llbr n,k_1-k_2,d \rrbr $ CSS codes corresponding to all possible signings of the $Z$-stabilizer group. The $2^{k_2}$ columns of the array are indexed by all possible $X$-syndromes $\bm{\mu}$. The logical operator $B_{\bm{\mu}}$ is induced by (1) preparing any code state $\rho_1$; (2) applying a diagonal physical gate $U_Z$ to obtain $\rho_2$; (3) using $X$-stabilizers to measure $\rho_2$, obtaining the syndrome $\bm{\mu}$ with probability $p_{\bm{\mu}}$, and the post-measurement state $\rho_3$; (4) applying a Pauli correction to $\rho_3$, obtaining $\rho_4$. The generator coefficients $A_{\bm{\mu},\bm{\gamma}}$ are obtained by expanding the logical operator $B_{\bm{\mu}}$ in terms of $Z$-logical Pauli operators $\epsilon_{(\bm{0},\bm{\gamma})}E(\bm{0},\bm{\gamma})$, where $\epsilon_{(\bm{0},\bm{\gamma})}\in\{\pm 1\}$. 
			}
			\label{fig:GC_frame} 
		\end{figure*}
	We approach quantum computing through fault tolerant implementation of a universal set of gates. There are many finite sets of gates that are universal
	, and a standard choice is to augment the set of Clifford gates by a non-Clifford unitary \cite{boykin1999universal} such as the $T$ gate ($\pi/8$ rotation). Gottesman and Chuang \cite{gottesman1999demonstrating} introduced the \emph{Clifford hierarchy} of unitary operators. The first level is the \emph{Pauli group}. The second level is the \emph{Clifford group}, which consists of unitary operators that normalize the Pauli group. The $l$-th level consists of unitary operators that map Pauli operators to the $(l-1)$-th level under conjugation. The teleportation model of quantum computation introduced in \cite{gottesman1999demonstrating} is closely related to the structure of the Clifford hierarchy (for details, see \cite{zeng2008semi,beigi2008c3,bengtsson2014order,Anderson2016Classification, cui2017diagonal,rengaswamy2019unifying,pllaha2020weyl}). The diagonal gates in the Clifford hierarchy form a group \cite{zeng2008semi,cui2017diagonal}, and the diagonal entries are $2^l$-th roots of unity raised to some polynomial function of the qubit state. Cui et al. \cite{cui2017diagonal} determined the level of a diagonal gate in the Clifford hierarchy in terms of $l$ and the degree of the polynomial function. \emph{Quadratic form diagonal} (QFD) gates are a family of diagonal gates associated with quadratic forms. The class of QFD gates includes transversal $Z$-rotations through $\pi/2^l$, and encompasses all $2$-local gates in the hierarchy \cite{rengaswamy2019unifying}. 
		
	Quantum error-correcting codes (QECCs) protect information as it is transformed by logical gates. 
    In general, a logical non-Clifford gate is more difficult to implement than a logical Clifford gate \cite{gottesman1998heisenberg}. Any non-Clifford operation on the $k$ logical qubits of an $\llbr  n,k,d  \rrbr $ QECC must be induced by a non-Clifford operation on the $n$ physical qubits \cite{cui2017diagonal}. We derive a \emph{global} necessary and sufficient condition for any diagonal physical gate to preserve the code space of a stabilizer code \cite{gottesman1997stabilizer,calderbank1998quantum}. A \emph{transversal} gate \cite{gottesman1997stabilizer} is a tensor product of unitaries on individual code blocks. In the case of transversal $Z$-rotation through $\pi/2^l$, we show that this global condition is equivalent to the \emph{local} trigonometric conditions derived by Rengaswamy et al. \cite{rengaswamy2020optimality}. Our approach has the advantage of providing insight into the induced logical operator.
		
		It is essential that a set of gates be both universal and fault-tolerant. Fault-tolerance of transversal gates follows from the observation that uncorrelated errors remain uncorrelated in code blocks. The Eastin-Knill Theorem \cite{eastin2009restrictions} reveals that we cannot implement a universal set of logical operations on a QECC using transversal operations alone. Magic state distillation (MSD) combines transversal gates with an ancillary magic state to circumvent this restriction \cite{bravyi2005universal,reichardt2005quantum,bravyi2012magic,anwar2012qutrit,campbell2012magic,landahl2013complex,campbell2017unified,haah2018codes,krishna2019towards,vuillot2019quantum}. If the initial fidelity of magic state exceeds a certain threshold, then it can be purified by successive application of the quantum teleportation protocol on stabilizer codes that are able to realize a logical non-Clifford gate. \emph{(Generalized) triorthogonal codes} \cite{bravyi2012magic,haah2018codes} are \emph{Calderbank-Shor-Steane} (CSS) codes \cite{Calderbank-physreva96,Steane-physreva96} designed to implement a non-Clifford logical gate (up to some diagonal Clifford logical gates). Hamming weights in the classical codes that determine the CSS codes are required to satisfy certain divisibility properties \cite{campbell2012magic,landahl2013complex,haah2018towers,vuillot2019quantum,nezami2021classification}. Many examples employ Reed-Muller (RM) codes. In Section 5 we characterize CSS codes constructed from classical RM codes that are fixed by transversal $Z$-rotation through $\pi/2^l$.
		
		MSD provides a path to universal fault tolerant computation, where success depends on engineering the interaction of code states and physical gates. Here we consider the interaction of a diagonal physical gate $U_Z$ with the code states of a stabilizer code, as shown in Figure 1. We prepare an initial code state, apply a physical gate, then measure a code syndrome $\bm{\mu}$, and finally apply a correction based on $\bm{\mu}$. For each syndrome, we expand the induced logical operator in the Pauli basis to obtain the \emph{generator coefficients} that capture state evolution. Intuitively, the diagonal physical gate preserves the code space if and only if the induced logical operator corresponding to the trivial syndrome is unitary.  

   The effectiveness of magic state distillation (MSD) depends on the probability of observing a given syndrome, and it is possible to combine syndrome measurement with a decoder (see Krishna and Tillich \cite{krishna2019towards} for example). Generator coefficients provide a framework for investigating the effectiveness and the threshold of distillation. We describe the design space that is available through a running example.
   
   \begin{example}[The $\llbr 7,1,3 \rrbr $ Steane code]
   \normalfont
   \label{examp1_in_intro}
   Reichardt \cite{reichardt2005quantum} demonstrated that it is possible to distill the magic state $\ket{A}=(\ket{0}+e^{\imath \pi/4}\ket{1})/\sqrt{2}$ by post-selecting on the trivial syndrome, even though the Steane code is not perfectly preserved by the transversal T gate. He also demonstrated the distillation threshold is optimal for $\ket{A}$. In Section \ref{sec:avg_log_chl}, we use generator coefficients to describe the average-logical channel induced by the transversal $T$ gate on the Steane code. When we observe the trivial syndrome, the induced logical operator is $T^\dagger$. Otherwise it is a logical Pauli $Z$ followed by a logical $T^\dagger$. The induced logical operator becomes $T^\dagger$ for all syndromes after applying a logical Pauli $Z$ correction to all non-trivial syndromes. However, the distillation protocol no longer converges, despite the higher probability of success\footnote{See Appendix \ref{sec:MSD_Steane}}. Generator coefficients encode the probabilities of observing different syndromes, which can be used to analyze variants of the Steane protocol (such as applying a decoder to subsets of syndromes), as well as MSD protocols that use different codes (such as the $\llbr 15,1,3 \rrbr $ code). 
   \end{example}
    
    The introduction of magic state distillation by Bravyi and Kitaev \cite{bravyi2005universal} led to the construction of CSS codes where the code space is preserved by a transversal $Z$-rotation of the underlying physical space \cite{bravyi2005universal,reichardt2005quantum,bravyi2012magic,campbell2012magic,landahl2013complex,campbell2017unified,haah2018codes,vuillot2019quantum}. The approach taken in each paper is to examine the action of a transversal $Z$-rotation on the basis states of a CSS code. This approach results in \emph{sufficient} conditions for a transversal $Z$-rotation to realize a logical operation on the code space. In contrast we derive \emph{necessary and sufficient} conditions by analyzing the action of a transversal diagonal gate on the stabilizer group that determines the code. In effect, we study the code space by studying symmetries of the codespace.
    
    The interaction of transversal physical operators and code states depends very strongly on the signs of stabilizers \cite{coherent_noise,debroy2021optimizing}. Consider for example, the design of CSS codes that are oblivious to coherent noise. We can model the effective error as a uniform $Z$-rotation on each qubit through some (small) angle $\theta$. We require the noise to preserve the code space and to act trivially (as the logical identity operator). It is possible to demonstrate the existence of weight-$2$ $Z$-stabilizers, and to show that their signs must be balanced \cite{coherent_noise}. Our generator coefficient framework includes the freedom to choose signs and this degree of freedom is relatively unexplored. We describe the design space that is available through a running example. 
    
    \begin{example}[The $\llbr 4,2,2 \rrbr $ code]
    \normalfont
   \label{examp2_in_intro}
   
   Generator coefficients encode correlation between the initial code state and syndrome measurement, which may result in loss of logical information. The $\llbr 4,2,2 \rrbr $ code shows that correlation can depend very strongly on the signs of $Z$-stabilizers. The stabilizer group is $\mathcal{S}=\langle X^{\otimes 4}, Z^{\otimes 4} \rangle$. In Section \ref{sec:avg_log_chl} we show that if $Z^{\otimes 4}$ has a positive sign, then there is an embedded decoherence free subspace spanned by the three encoded basis states $\ket{\overline{01}}$, $\ket{\overline{10}}$, and $\ket{\overline{11}}$. We also show that syndrome measurement collapses logical information. If $Z^{\otimes 4}$ has a negative sign, then we show that logical information does not collapse, but the embedded decoherence free subspace disappears. Generator coefficients encode the different ways that code states can evolve. 
   \end{example}
	
	We now summarize our main technical contributions.
	\begin{enumerate}[1)]
	    \item We derive an explicit expression for the logical channel induced by a diagonal physical gate (Section \ref{sec:avg_log_chl}, \eqref{eqn:kraus_ops} describes the induced logical operator for each syndrome $\bm{\mu}$ and \eqref{eqn:prob_mu} describes the probability of observing $\bm{\mu}$). We quantify the correlation between initial code state and measured syndrome by separating the probability of observing a given syndrome into two components, one depending on the generator coefficients, the other on the choice of initial state (Section \ref{subsec:prob}). We analyze the $\llbr 4,2,2 \rrbr $ code (Example \ref{examp2_in_intro}) to show that each component depends strongly on the choice of signs in the stabilizer code, and that we can choose signs to create a embedded decoherence free subspace. 
	    
	    \item We derive \emph{necessary and sufficient conditions} for an arbitrary diagonal physical gate to preserve the codespace of a CSS code with arbitrary signs (Section \ref{sec:nece_suff_cond_invariant}, Theorem \ref{thm:preserved_by_Uz}), and describe the logical operator that results (Section \ref{sec:nece_suff_cond_invariant}, Remark \ref{rem:log_op_U_Z}). These conditions generalize earlier conditions found by Rengaswamy et al \cite{rengaswamy2020optimality} for transversal $Z$-rotation through $\pi/2^l$.
	    
	    \item We further simplify the \emph{necessary and sufficient conditions} for a QFD gate to preserve the code space of a CSS code (Section \ref{sec:nece_suff_cond_invariant}, Theorem \ref{thm:div_cond_qfd}). These conditions govern divisibility of Hamming weights in the classical codes that determine the CSS codes. In the case of transversal $Z$-rotation through $\pi/2^l$ applied to CSS codes with positive signs, we show the necessity of divisibility conditions derived in \cite{landahl2013complex,vuillot2019quantum}.  
	    
	    \item We characterize \emph{all} CSS codes with positive signs, invariant under transversal $Z$-rotation through $\pi/2^l$, that are constructed from classical Reed-Muller (RM) codes (and their derivatives obtained by puncturing or removing the first coordinate). We derive \emph{necessary and sufficient} conditions that relate $l$ to the parameters of the component RM codes (Section \ref{sec:nece_suff_cond_invariant}, Theorem \ref{thm:RM_cosntruction} and Remark \ref{rem:elemtary_op}). 
	    
	    \item We extend the generator coefficient framework to stabilizer codes (Appendix \ref{sec:stab_gcf}). This extension shows that given an $\llbr n, k, d \rrbr $ \emph{non-degenerate} stabilizer code preserved by a diagonal gate $U_Z$, we can construct an $\llbr n, k, d_Z \ge d \rrbr $ CSS code preserved by $U_Z$ with the same induced logical operator. Note that $d_Z$ (the minimum weight of any nontrivial $Z$-logical Pauli operator) is the relevant distance for MSD. Recall that an $\llbr n, k, d \rrbr $ stabilizer code is non-degenerate if the weight of every stabilizer element is at least $d$. 
	\end{enumerate}
    
	The rest of the paper is organized as follows. Section \ref{sec:prelims} introduces notation and provides the necessary background. 
    Our review of stabilizer codes takes account of the freedom to choose signs in the stabilizer group, and provides the general encoding map and logical Pauli operators for CSS codes with arbitrary signs. 
    Section \ref{sec:intro_gcs} introduces the generator coefficients that describe how a diagonal gate acts on a CSS code. Section \ref{sec:avg_log_chl} describes how generator coefficient govern the average logical channel. Section \ref{sec:nece_suff_cond_invariant} establishes necessary and sufficient conditions for a CSS code to support a diagonal physical gate, and derives the induced logical operator. We then derive the divisibility conditions and introduce RM constructions. Section \ref{sec:concln} concludes the paper and discusses future directions. 
	In Appendix \ref{sec:stab_gcf}, we extends the generator coefficient framework to general stabilizer codes and show that CSS codes perform at the least as well as non-degenerate stabilizer codes for diagonal gates. 
	
	\section{Preliminaries and Notation}
	\label{sec:prelims}
	
	\subsection{Classical Reed-Muller Codes}
	\label{subsec:prem_RM_codes}
	 Let $\F_2 = \{0,1\}$ denote the binary field. Let $m\ge1$, and let  $x_1$, $x_2$, $\dots$, $x_m$ be binary variables (monomials of degree $1$). Monomials of degree $r$ can be written as $x_{i_1}x_{i_2}\cdots x_{i_r}$ where $i_j \in \{1,2,\dots,m\}$ are distinct. A boolean function with degree $r$ is a binary linear combination of monomials with degrees at most $r$. There is a one-to-one correspondence between boolean functions $h$ and evaluation vectors $\bm{h}=[h(x_1,x_2,\cdots,x_m)]_{(x_1,x_2,\dots,x_m)\in\F_2^{m}}$. The degree $0$ boolean function corresponds to the constant evaluation vector $\bm{1}\in \F_2^{2^m}$.
    
    For $0\le r\le m$, the Reed-Muller code RM$(r,m)$ is the set of all evaluation vectors $\bm{h}$ associated with boolean functions $h(x_1,x_2,\cdots,x_m)$ of degree at most $r$, 
    $\mathrm{RM}(r,m) \coloneqq \{\bm{h}\in \F_2^{2^m}\mid h\in\F_2[x_1,x_2,\cdots,x_m],$ $\deg(h)\le r\}.$ 
    The length of the RM$(r,m)$ code is $2^m$, the dimension is given by $k=\sum_{j=0}^{r}\binom{m}{j}$, and the minimal distance is $2^{m-r}$. The dual of RM$(r,m)$ is RM$(m-r-1,m)$, and we can construct the RM codes by a recursively observing RM$(r,m+1) = \{(\bm{u},\bm{u}+\bm{v})\mid \bm{u}\in \mathrm{RM}(r,m), \bm{v}\in \mathrm{RM}(r-1,m)\}$ \cite{macwilliams1977theory}. Note that all weights in RM($r,m$) are multiples of $2^{\left\lfloor(m-1)/r\right\rfloor}$ \cite{ax1964zeroes,mceliece1971periodic,macwilliams1977theory}, and the highest power of $2$ that divides all weights of codewords in RM($r,m$) is exactly $2^{\left\lfloor(m-1)/r\right\rfloor}$ \cite{borissov2013mceliece}. 
	
	\subsection{The MacWilliams Identities}
    \label{subsec:prem_Mac_RM}
    Let $\imath\coloneqq \sqrt{-1}$ be the imaginary unit. We denote the Hamming weight of a binary vector $\bm{v}$ by $w_H(\bm{v})$. 
    The weight enumerator of a binary linear code $\mathcal{C} \subset \F_2^m$ is the polynomial
    \begin{equation}
    P_{\mathcal{C}}(x,y) = \sum_{\bm{v}\in \mathcal{C}} x^{m-w_H\left(\bm{v}\right)}y^{w_H\left(\bm{v}\right)}.
    \end{equation}
    The MacWilliams Identities \cite{Mac} relate the weight enumerator of a code $\mathcal{C}$ to that of the dual code $\mathcal{C}^\perp$, and are given by
    \begin{equation}
    P_{\mathcal{C}}(x,y) = \frac{1}{|\mathcal{C}^\perp|} P_{\mathcal{C}^\perp}(x+y,x-y).
    \end{equation}
    Given an angle $\theta \in (0,2\pi)$, we make the substitution $x=\cos\frac{\theta}{2}$ and $y=-\imath\sin\frac{\theta}{2}$, and define
    \begin{align}
    P_{\theta}[\mathcal{C}] 
    &\coloneqq P_{\mathcal{C}}\left(\cos\frac{\theta}{2},-\imath\sin\frac{\theta}{2}\right) \\
    &= \sum_{\bm{v}\in \mathcal{C}} \left(\cos\frac{\theta}{2}\right)^{m-w_H(\bm{v})}\left(-\imath\sin\frac{\theta}{2}\right)^{w_H(\bm{v})}. \label{eqn:key_sub_Mac}
    \end{align}

    \subsection{The Pauli Group}
    \label{subsec:prem_paulis}
    Any $2\times 2$ Hermitian matrix can be uniquely expressed as a real linear combination of the four single qubit Pauli matrices/operators
    \begin{align}
    I_2 \coloneqq \begin{bmatrix}
    1 & 0\\
    0 & 1
    \end{bmatrix},~  
    X \coloneqq \begin{bmatrix} 
    0 & 1\\
    1 & 0
    \end{bmatrix},~  
    Z \coloneqq  \begin{bmatrix} 
    1 & 0\\
    0 & -1
    \end{bmatrix}, 
    \end{align}
    and $Y\coloneqq \imath XZ$.
    The operators satisfy 
    $
    X^2= Y^2= Z^2=I_2,~  X Y=- Y X,~  X Z=- Z X,~ \text{ and }  Y Z=- Z Y.
    $
    
    Let $A \otimes B$ denote the Kronecker product (tensor product) of two matrices $A$ and $B$. Let $n\ge 1$ and $N=2^n$. Given binary vectors $\bm{a}=[a_1,a_2,\dots,a_n]$ and $\bm{b}=[b_1,b_2,\dots,b_n]$ with $a_i,b_j =0$ or $1$, we define the operators
    \begin{align}
    D(\bm{a},\bm{b})&\coloneqq X^{a_1} Z^{b_1}\otimes \cdots \otimes  X^{a_n} Z^{b_n},\\
    E(\bm{a},\bm{b}) &\coloneqq\imath^{\bm{a}\bm{b}^T \bmod 4}D(\bm{a},\bm{b}).
    \end{align}
    
    We often abuse notation and write $\bm{a}, \bm{b} \in \F_2^n$, though entries of vectors are sometimes interpreted in $\mathbb{Z}_4 = \{ 0,1,2,3 \}$. Note that $D(\bm{a},\bm{b})$ can have order $1,2$ or $4$, but $E(\bm{a},\bm{b})^2=\imath^{2\bm{a}\bm{b}^T}D(\bm{a},\bm{b})^2=\imath^{2ab^T}( \imath^{2\bm{a}\bm{b}^T} I_N)=I_N$. The $n$-qubit \textit{Pauli group} is defined as
    \begin{equation}
    \mathcal{HW}_N \coloneqq\{\imath^\kappa D(\bm{a},\bm{b}): \bm{a},\bm{b}\in \F_2^n, \kappa\in \Z_{4} \},
    \end{equation}
    where $\Z_{2^l} = \{0,1,\dots,2^l-1\}$.
    The $n$-qubit Pauli matrices form an orthonormal basis for the vector space of $N\times N$ complex matrices ($\C^{N\times N}$) under the normalized Hilbert-Schmidt inner product $\langle A,B\rangle \coloneqq \mathrm{Tr}(A^\dagger B)/N$ \cite{gottesman1997stabilizer}. 
    
    We use the \textit{Dirac notation}, $|\cdot \rangle$ to represent the basis states of a single qubit in $\C^2$. For any $\bm{v}=[v_1,v_2,\cdots, v_n]\in \F_2^n$, we define $|\bm{v}\rangle=|v_1\rangle\otimes|v_2\rangle\otimes\cdots\otimes|v_n\rangle$, the standard basis vector in $\C^N$ with $1$ in the position indexed by $\bm{v}$ and $0$ elsewhere. 
    We write the Hermitian transpose of $|\bm{v}\rangle$ as $\langle \bm{v}|=|\bm{v}\rangle^\dagger$. 
    We may write an arbitrary $n$-qubit quantum state as $|\psi\rangle=\sum_{\bm{v}\in \F_2^n} \alpha_{\bm{v}} |\bm{v}\rangle \in \C^N$, where $\alpha_{\bm{v}}\in \C$ and $\sum_{\bm{v}\in\F_2^n}|\alpha_{\bm{v}}|^2=1$. The Pauli matrices act on a single qubit as
    $X\ket{0}=\ket{1},  X\ket{1}=\ket{0},  Z\ket{0}=\ket{0}, \text{ and }  Z\ket{1}=-\ket{1}.$
    
    The symplectic inner product is $\langle [\bm{a},\bm{b}],[\bm{c},\bm{d}]\rangle_S=\bm{a}\bm{d}^T+\bm{b}\bm{c}^T \bmod 2$. Since $ X Z=- Z X$, we have 
    \begin{equation}
    E(\bm{a},\bm{b})E(\bm{c},\bm{d})=(-1)^{\langle [\bm{a},\bm{b}],[\bm{c},\bm{d}]\rangle_S}E(\bm{c},\bm{d})E(\bm{a},\bm{b}).
    \end{equation}

    \subsection{The Clifford Hierarchy}
    \label{subsec:prem_Cliff}
    
    The \textit{Clifford hierarchy} of unitary operators was introduced in \cite{gottesman1999demonstrating}. The first level of the hierarchy is defined to be the Pauli group $\mathcal{C}^{(1)}=\mathcal{HW}_N$. For $l\ge 2$, the levels $l$ are defined recursively as 
    \begin{equation}\label{eqn:def_Cliff_hierarchy}
    \mathcal{C}^{(l)}:=\{U\in \mathbb{U}_N: U \mathcal{HW}_N U^\dagger\subset \mathcal{C}^{(l-1)}\},
    \end{equation}
    where $\mathbb{U}_N$ is the group of $N\times N$ unitary matrices. The second level is the Clifford Group, $\mathcal{C}^{(2)}$, which can be generated (up to overall phases) using the ``elementary" unitaries \textit{Hadamard}, \textit{Phase}, and either of \textit{Controlled-NOT} (C$X$) or \textit{Controlled-$Z$} (C$Z$) defined respectively as
    \begin{equation}
    H\coloneqq \frac{1}{\sqrt{2}}\begin{bmatrix}
    1 & 1\\
    1 & -1
    \end{bmatrix} , 
    P\coloneqq\begin{bmatrix}
    1 & 0\\
    0 & \imath
    \end{bmatrix} , 
    \end{equation}
    \begin{align}
    \text{C}Z_{ab}  &\coloneqq \ket{0}\bra{0}_a \otimes (I_2)_b + \ket{1}\bra{1}_a \otimes Z_b,\\~
    \text{C}X_{a \rightarrow b}  &\coloneqq \ket{0}\bra{0}_a \otimes (I_2)_b + \ket{1}\bra{1}_a \otimes X_b.
    \end{align}
    
    Note that Clifford unitaries in combination with \emph{any} unitary from a higher level can be used to approximate any unitary operator arbitrarily well~\cite{boykin1999universal}. 
    Hence, they form a universal set for quantum computation. A widely used choice for the non-Clifford unitary is the $T$ gate in the third level defined by 
    \begin{equation}
    T:=\begin{bmatrix}
    1 & 0\\
    0 & e^{\frac{\imath\pi}{4}}
    \end{bmatrix}=
    Z^{\frac{1}{4}}\equiv 
    \begin{bmatrix}
    e^{-\frac{\imath\pi}{8}} & 0\\
    0 & e^{\frac{\imath\pi}{8}}
    \end{bmatrix}
    =e^{-\frac{\imath\pi}{8} Z}.
    \end{equation}
    
    \subsection{Stabilizer Codes}
    \label{subsec:prem_stab}
    We define a stabilizer group $\mathcal{S}$ to be a commutative subgroup of the Pauli group $\mathcal{HW}_N$, where every group element is Hermitian and no group element is $-I_N$. We say $\mathcal{S}$ has dimension $r$ if it can be generated by $r$ independent elements as $\mathcal{S}=\langle \nu_i E(\bm{c_i},\bm{d_i}): i=1,2,\dots, r \rangle$, where $\nu_i\in\{\pm1\}$ and $\bm{c_i},\bm{d_i}\in \F_2^n$. Since $\mathcal{S}$ is commutative, we must have $\langle [\bm{c_i},\bm{d_i}],[\bm{c_j},\bm{d_j}]\rangle_S=\bm{c_i}\bm{d_j}^T+\bm{d_i}\bm{c_j}^T=0\bmod 2$.
    
    Given a stabilizer group $\mathcal{S}$, the corresponding \textit{stabilizer code} is the fixed subspace $\mathcal{V}(\mathcal{S)}:=\{|\psi\rangle \in \C^N: g|\psi\rangle=|\psi\rangle \text{ for all } g\in \mathcal{S} \}$. 
    We refer to the subspace $\mathcal{V}(\mathcal{S})$ as an $\left[\left[n,k,d\right]\right]$ stabilizer code because it encodes $k:=n-r$ \text{logical} qubits into $n$ \textit{physical} qubits. The minimum distance $d$ is defined to be the minimum weight of any operator in $\mathcal{N}_{\mathcal{HW}_N}\left(\mathcal{S}\right)\setminus \mathcal{S}$. Here, the weight of a Pauli operator is the number of qubits on which it acts non-trivially (i.e., as $ X,~Y$ or $ Z$), and $\mathcal{N}_{\mathcal{HW}_N}\left(\mathcal{S}\right)$ denotes the normalizer of $\mathcal{S}$ in $\mathcal{HW}_N$ defined by 
        \begin{align}
        \mathcal{N}_{\mathcal{HW}_N}\left(\mathcal{S}\right) & \coloneqq \{\imath^\kappa E\left(\bm{a},\bm{b}\right)\in \mathcal{HW}_N: \nonumber\\ &~~~~~~E\left(\bm{a},\bm{b}\right) \mathcal{S} E\left(\bm{a},\bm{b}\right)= \mathcal{S}, \kappa\in \Z_4 \} \nonumber\\
       & = \{\imath^\kappa E\left(\bm{a},\bm{b}\right)\in \mathcal{HW}_N:\nonumber\\ &~~~~~E\left(\bm{a},\bm{b}\right)E\left(\bm{c},\bm{d}\right)E\left(\bm{a},\bm{b}\right)=E\left(\bm{c},\bm{d}\right) \nonumber\\
       &~~~~\text{ for all } E\left(\bm{c},\bm{d}\right)\in \mathcal{S}, \kappa\in \Z_4 \}.
    \end{align}
    Note that the second equality defines the centralizer of $\mathcal{S}$ in $\mathcal{HW}_N$, and it follows from the first since Pauli matrices commute or anti-commute.
    
    For any Hermitian Pauli matrix $E\left(\bm{c},\bm{d}\right)$ and $\nu\in\{\pm 1\}$, the operator $\frac{I_N+\nu E\left(\bm{c},\bm{d}\right)}{2}$ projects onto the $\nu$-eigenspace of $E\left(\bm{c},\bm{d}\right)$. Thus, the projector onto the codespace $\mathcal{V}(\mathcal{S})$ of the stabilizer code defined by $\mathcal{S}=\langle \nu_i E\left(\bm{c_i},\bm{d_i}\right): i=1,2,\dots, r \rangle$ is 
    \begin{align}
    \Pi_{\mathcal{S}}
    &=\prod_{i=1}^{r}\frac{\left(I_N+\nu_iE\left(\bm{c_i},\bm{d_i}\right)\right)}{2} \nonumber\\
    &=\frac{1}{2^r}\sum_{j=1}^{2^r}\epsilon_jE\left(\bm{a_j},\bm{b_j}\right),
    \end{align}
    where $\epsilon_j\in \{\pm 1 \}$ is a character of the group $\mathcal{S}$, and is determined by the signs of the generators that produce $E(\bm{a_j},\bm{b_j})$: $\epsilon_jE\left(\bm{a_j},\bm{b_j}\right)=\prod_{t\in J\subset \{1,2,\dots,r\} } \nu_t E\left(\bm{c_t},\bm{d_t}\right)$ for a unique $J$.
    
    Let $\ket{\bm{\alpha}}_L$, $\bm{\alpha}\in \F_2^k$ be the protected logical state. We define the generating set $\{X^L_j, Z^L_j \in \mathcal{HW}_{2^k} : j = 1, \dots k=k_1-k_2\}$ for the logical Pauli operators by the actions 
	\begin{align}
	X_j^L \ket{\bm{\alpha}}_L = \ket{\bm{\alpha'}}_L, 
	\end{align}
	where
	\begin{align}\alpha'_i = 
	\left\{ \begin{array}{lc}
	\alpha_i,  &\text{ if } i \neq j, \\
	\alpha_i \oplus 1, & \text{ if } i = j,
	\end{array}\right.
	\end{align}
	and $Z_j^L \ket{\bm{\alpha}}_L = (-1)^{\alpha_j}\ket{\bm{\alpha}}_L.$
	Let $\bar{X}_j, \bar{Z}_j$ be the $n$-qubit operators which are physical representatives of $X_j^L, Z^L_j$ for $j= 1,\dots,k $. Then $\bar{X}_j, \bar{Z}_j$ commute with the stabilizer group $S$ and satisfy
	\begin{align}\label{eqn:logical_condition}
	\bar{X}_i \bar{Z}_j = 	
	\left\{ 
	\begin{array}{lc}
	\bar{Z}_j \bar{X}_i,  &\text{ if } i \neq j, \\
	-\bar{Z}_j \bar{X}_i, &\text{ if } i = j.
	\end{array}\right. 
	\end{align}
   \begin{remark}
   \normalfont
   \label{rem:signs_matter}
    A stabilizer code determines a resolution of the identity with the different subspaces fixed by different signings of the stabilizer generators. When we correct stochastic and independent Pauli errors, different signings of stabilizer generators lead to quantum codes with identical performance. 
	However, when we consider correlated errors such as the coherent errors (rotations of $Z$ axis for any angle $\theta$), the signs of stabilizers play an important role  \cite{coherent_noise,debroy2021optimizing}. 
	\end{remark}
	\begin{example}[$3$-qubit bit flip code with negative signs]
	\label{examp1}
	    \normalfont
		Consider the stabilizer code defined by the group 
		$\mathcal{S}=\langle -Z_1Z_2, Z_2Z_3\rangle $,
		which differs from the stabilizer group of the $3$-qubit bit flip code, 
		$\mathcal{S'}=\langle Z_1Z_2, Z_2Z_3\rangle $,
		just by the sign of $Z_1Z_2$. The encoding circuit of 
		$\mathcal{V}(\mathcal{S'})$ consist of C$X_{1\to 2}$ and C$X_{1\to 3}$ gates, which maps $\ket{0}_L$ to $ \ket{000}$ and $\ket{1}_L$ to $ \ket{111}$. Since $XZX^\dagger = -Z$, the encoding circuit of $\mathcal{V}(\mathcal{S})$ has an extra $X$ gate on the first qubit, which has $\ket{\bar{0}} = \ket{100}$ and $\ket{\bar{1}} = \ket{011}$. Moreover, the physical representation of logical Pauli $X$ and $Z$ for $\mathcal{S}$ is
		$X_1X_2X_3$ and $Z_1$ 
		respectively, i.e., $\bar{X} = X_1X_2X_3, ~\bar{Z} = { -}Z_1$.
	\end{example}
    \subsection{CSS Codes}
    \label{subsec:prem_CSS}
    A \textit{CSS (Calderbank-Shor-Steane) code} is a particular type of stabilizer code with generators that can be separated into strictly $X$-type and strictly $Z$-type operators. Consider two classical binary codes $\mathcal{C}_1,\mathcal{C}_2$ such that $\mathcal{C}_2\subset \mathcal{C}_1$, and let $\mathcal{C}_1^\perp$, $\mathcal{C}_2^\perp$ denote the dual codes. Note that $\mathcal{C}_1^\perp\subset \mathcal{C}_2^\perp$. Suppose that $\mathcal{C}_2 = \langle \bm{c_1},\bm{c_2},\dots,\bm{c_{k_2}} \rangle$ is an $[n,k_2]$ code and $\mathcal{C}_1^\perp =\langle \bm{d_1},\bm{d_2}\dots,\bm{d_{n-k_1}}\rangle$ is an $[n,n-k_1]$ code. Then, the corresponding CSS code has the stabilizer group 
	\begin{align*}
	\mathcal{S} 
	&=\langle \nu_{(\bm{c_i},\bm{0})} E\left(\bm{c_i},\bm{0}\right), \nu_{(\bm{0},\bm{d_j})} E\left(\bm{0},\bm{d_j}\right)\rangle_{i=1;~j=1}^{i=k_2;~j=n-k_1} \nonumber\\
	&=\{\epsilon_{(\bm{a},\bm{0})} \epsilon_{(\bm{0},
	\bm{b})} E\left(\bm{a},\bm{0}\right)E\left(\bm{0},\bm{b}\right): \bm{a}\in \mathcal{C}_2, \bm{b}\in \mathcal{C}_1^\perp\}, 
	\end{align*}
	where $\nu_{(\bm{c_i},\bm{0})},\nu_{(\bm{0},\bm{d_j})},\epsilon_{(\bm{a},\bm{0})},\epsilon_{(\bm{0},
	\bm{b})}  \in\{\pm 1 \}$.
    The CSS code projector can be written as the product:
	\begin{align}
	\Pi_{\mathcal{S}} 
	=\Pi_{\mathcal{S}_X}\Pi_{\mathcal{S}_Z},
	\end{align}
	where 
	\begin{align}
	\Pi_{\mathcal{S}_X} &\coloneqq \prod_{i=1}^{k_2} \frac{(I_N+\nu_{(\bm{c_i},\bm{0})} E(\bm{c_i},\bm{0}))}{2} \nonumber\\
	&= \frac{\sum_{\bm{a}\in \mathcal{C}_2} \epsilon_{(\bm{a},\bm{0})} E(\bm{a},\bm{0})}{|\mathcal{C}_2|},
	\end{align}
	and 
	\begin{align}
	\Pi_{\mathcal{S}_Z} &\coloneqq \prod_{j=1}^{n-k_1} \frac{(I_N + \nu_{(\bm{0},\bm{d_j})} E(\bm{0},\bm{d_j}))}{2} \nonumber \\
	&= \frac{\sum_{\bm{b}\in \mathcal{C}_1^\perp} \epsilon_{(\bm{0},
	\bm{b})}  E(\bm{0},\bm{b})}{|\mathcal{C}_1^\perp|}.
	\end{align}
	Each projector defines a resolution of the identity, and we focus on $\Pi_{\mathcal{S}_X}$ since we consider diagonal gates.  
	Note that any $n$-qubit Pauli $Z$ operator can be expressed as $E(\bm{0},\bm{b})E(\bm{0},\bm{\gamma})E(\bm{0},\bm{\mu})$ for a $Z$-stabilizer representation $\bm{b}\in\mathcal{C}_1^\perp$, a $Z$-logical representation $\bm{\gamma}\in \mathcal{C}_2^\perp/\mathcal{C}_1^\perp$, and a $X$-syndrome representation $\bm{\mu} \in\F_2^n/\mathcal{C}_2^\perp$. For $\bm{\mu}\in \F_2^n / \mathcal{C}_2^\perp$, we define 
	\begin{align} \label{eqn:Pi_SX_mu}
	\mathcal{S}_X(\bm{\mu})&\coloneqq\left\{ (-1)^{\bm{a}\bm{\mu}^T}\epsilon_{(\bm{a},0)} E(\bm{a},\bm{0}) : \bm{a}\in \mathcal{C}_2 \right\},\\
	\Pi_{\mathcal{S}_X(\bm{\mu})} &\coloneqq \frac{1}{|\mathcal{C}_2|}\sum_{\bm{a}\in \mathcal{C}_2} (-1)^{\bm{a}\bm{\mu}^T} \epsilon_{(\bm{a},\bm{0})} E(\bm{a},\bm{0}).
	\end{align}
	Then, we have 
		\begin{align}
		&\Pi_{\mathcal{S}_X(\bm{\mu})} \Pi_{\mathcal{S}_X(\bm{\mu}')} = 
		\left\{ \begin{array}{lc}
		\Pi_{\mathcal{S}_X(\bm{\mu})},  & \text{if}\ \bm{\mu} = \bm{\mu}', \\
		0, & \text{if}\ \bm{\mu} \neq \bm{\mu}',
		\end{array}
		\right.
		\\
	    & \text{and } \sum_{\bm{\mu} \in \F_2^n/\mathcal{C}_2^\perp} \Pi_{S_X(\bm{\mu})} = I_{2^n}.
		\end{align}

	If $\mathcal{C}_1$ and $\mathcal{C}_2^\perp$ can correct up to $t$ errors, then $S$ defines an $\left[\left[n, k_1-k_2, d\right]\right]$ CSS code with $d\ge 2t+1$, which we will represent as CSS($X,\mathcal{C}_2;Z,\mathcal{C}_1^\perp$). If $G_2$ and $G_1^\perp$ are the generator matrices for $\mathcal{C}_2$ and $\mathcal{C}_1^\perp$ respectively, then the $(n-k_1+k_2)\times (2n)$ matrix
	\begin{equation}
	G_{\mathcal{S}}=\left[\begin{array}{c|c}
	G_2 &  \\ \hline
	&  G_1^\perp
	\end{array} \right]
	\end{equation}
	generates $\mathcal{S}$. 

    \subsection{General Encoding Map for CSS codes}
    \label{subsec:CSS_general_Enc}
    Given an $\llbr n,k,d \rrbr $ CSS($X,\mathcal{C}_2;Z,\mathcal{C}_1^\perp$) code with all positive signs, let $G_{\mathcal{C}_1/\mathcal{C}_2}$ be the generator matrix for all coset representatives for $\mathcal{C}_2$ in $\mathcal{C}_1$ (note that the choice of coset representatives is not unique). The canonical encoding map $e:\mathbb{F}_2^k \to \mathcal{V}(\mathcal{S})$ is given by 
	$
	e(\ket{\bm{\alpha}}_L) 
	\coloneqq \frac{1}{\sqrt{|\mathcal{C}_2|}} \sum_{\bm{x}\in \mathcal{C}_2}\ket{\bm{\alpha}  G_{\mathcal{C}_1/\mathcal{C}_2} \oplus \bm{x}}$. 
	Note that the signs of stabilizers change the fixed subspace by changing the eigenspaces that enter into the intersection. Thus, the encoding map needs to include information about nontrivial signs. 
	\begin{center}
		\begin{tikzpicture}
		\node (B) at (1,1) {$\mathcal{B}\coloneqq\{\bm{z}\in \mathcal{C}_1^\perp|\epsilon_{\bm{z}}=1\}$};
		\node (C) at (1,2) {$\mathcal{C}_1^\perp$};
		\path[draw] (B) -- (C);
		
		\node (D) at (5,1) {$\mathcal{D}\coloneqq\{\bm{x}\in \mathcal{C}_2|\epsilon_{\bm{x}}=1\}$};
		\node (C2) at (5,2) {$\mathcal{C}_2$};
		\path[draw] (D) -- (C2);
		
		\end{tikzpicture}
	\end{center}
	We capture sign information through character vectors 
	$\bm{y}\in \F_2^n/\mathcal{C}_1,\bm{r} \in \F_2^n/\mathcal{C}_2^\perp$ (note that the choice of coset representatives is not unique) 
	defined for $Z$-stabilizers and $X$-stabilizers respectively by
	\begin{equation}
	\mathcal{B}=\mathcal{C}_1^\perp\cap \bm{y}^\perp,\text{equivalently, } \mathcal{B}^\perp = \langle \mathcal{C}_1,\bm{y}\rangle,
	\end{equation}
	and 
	\begin{equation}
	\mathcal{D} =\mathcal{C}_2 \cap \bm{r}^\perp,\text{equivalently, }D^\perp = \langle \mathcal{C}_2^\perp,\bm{r}\rangle.
	\end{equation}
    Then, for $ \epsilon_{(\bm{a},\bm{0})} \epsilon_{(\bm{0},\bm{b})} E\left(\bm{a},\bm{0}\right)E\left(\bm{0},\bm{b}\right) \in S$, we have $ \epsilon_{(\bm{a},\bm{0})} = (-1)^{\bm{a}\bm{r}^T}$ and $ \epsilon_{(\bm{0},\bm{b})} = (-1)^{\bm{b}\bm{y}^T}$. 
	In Example \ref{examp1}, we may choose the character vectors $\bm{r}=\bm{0}$ (character vector of $X$-stabilizers) and $\bm{y}=[1,0,0]$ (character vector of $Z$-stabilizers). 
	
	The generalized encoding map $g_e:\ket{\bm{\alpha}}_L\in \mathbb{F}_2^k \to \ket{\overline{\bm{\alpha}}}\in \mathcal{V}(\mathcal{S})$ is defined by
	\begin{equation} \label{eqn:gen_encode_map}
		\ket{\overline{\bm{\alpha}}}
		\coloneqq \frac{1}{\sqrt{|\mathcal{C}_2|}} \sum_{\bm{x}\in \mathcal{C}_2} (-1)^{\bm{x}\bm{r}^T}\ket{\bm{\alpha}  G_{\mathcal{C}_1/\mathcal{C}_2} \oplus \bm{x} \oplus \bm{y}}.
	\end{equation}
	To verify that the image of the general encoding map $g_e$ is in $\mathcal{V}(\mathcal{S})$, we show that for $\epsilon_{(\bm{a},\bm{0})} \epsilon_{(\bm{0},\bm{b})} E\left(\bm{a},\bm{0}\right)E\left(\bm{0},\bm{b}\right)  \in \mathcal{S}$ (that is $\bm{a}\in \mathcal{C}_2$, $ \epsilon_{(\bm{a},\bm{0})} = (-1)^{\bm{a}\bm{r}^T}$, $\bm{b}\in \mathcal{C}_1^\perp$, and $ \epsilon_{(\bm{0},\bm{b})} = (-1)^{\bm{b}\bm{y}^T}$), 
	\begin{widetext}
	\begin{align*}
	& \epsilon_{(\bm{a},\bm{0})} \epsilon_{(\bm{0},\bm{b})} E\left(\bm{a},\bm{0}\right)E\left(\bm{0},\bm{b}\right)  \ket{\overline{\bm{\alpha}}}\\
	&=\frac{1}{\sqrt{|\mathcal{C}_2|}}\sum_{\bm{x}\in \mathcal{C}_2} \Big(\epsilon_{(\bm{a},\bm{0})} (-1)^{\bm{x}\bm{r}^T} \epsilon_{(\bm{0},\bm{b})}(-1)^{\bm{b}(\bm{\alpha}G_{\mathcal{C}_1/\mathcal{C}_2}\oplus \bm{x} \oplus \bm{y})^T}\ket{\bm{\alpha}  G_{\mathcal{C}_1/\mathcal{C}_2} \oplus \bm{a} \oplus \bm{x}\oplus \bm{y}}\Big)\\
	&=\frac{1}{\sqrt{|\mathcal{C}_2|}} \sum_{\bm{x}\in \mathcal{C}_2} (-1)^{(\bm{a}\oplus \bm{x})\bm{r}^T}  \ket{\bm{\alpha}  G_{\mathcal{C}_1/\mathcal{C}_2} \oplus \bm{a} \oplus \bm{x}\oplus \bm{y}}\\
	&=  \ket{\overline{\bm{\alpha}}}.\numberthis
	\end{align*}
	\end{widetext}
	
    \subsection{General Logical Pauli Operators for CSS codes}
    \label{subsec:general_logical_op_CSS}
    Given the choice of $G_{\mathcal{C}_1/\mathcal{C}_2}$, there exists a unique set of vectors $\{\bm{\gamma_1},\cdots,\bm{\gamma_k} \in \mathcal{C}_2^\perp: G_{\mathcal{C}_1/\mathcal{C}_2}\bm{\gamma_i} = \bm{e_i} \text{ for all } i = 1,\dots,k \}$, where $\{\bm{e_i}\}_{i =1,\dots,k}$ is the standard basis of $\F_2^k$.
	If $\bm{\gamma_i}$ is the $i$-the row of generator matrix $G_{\mathcal{C}_2^\perp / \mathcal{C}_1^\perp}$, then 
	\begin{equation}\label{eqn:log_req}
	G_{\mathcal{C}_1/\mathcal{C}_2} G_{\mathcal{C}_2^\perp /\mathcal{C}_1^\perp }^T = I_{k}.
	\end{equation}
	Assume we have 
	\begin{align}
	G_{\mathcal{C}_1/\mathcal{C}_2} =
	\left[
	\begin{array}{c}
	 \bm{w_1} \\
	 \bm{w_2}  \\
	 \vdots   \\
	 \bm{w_k} 
	\end{array}
	\right],~
	G_{\mathcal{C}_2^\perp / \mathcal{C}_1^\perp} =
	\left[
	\begin{array}{c}
	\bm{\gamma_1} \\
    \bm{\gamma_2} \\
	 \vdots    \\
	\bm{\gamma_k} 
	\end{array}
	\right].
	\end{align}
	Thus, we have for $i=1,\dots, k$
	\begin{align*}
	&E(\bm{w_i},\bm{0}) \ket{\overline{\bm{\alpha}}} \\
	&= \frac{1}{\sqrt{|\mathcal{C}_2|}} \sum_{\bm{x}\in \mathcal{C}_2} (-1)^{\bm{x}\bm{r}^T}\ket{\bm{\alpha}  G_{\mathcal{C}_1/\mathcal{C}_2} \oplus \bm{w_i} \oplus \bm{x} \oplus \bm{y}} \\
	&= \frac{1}{\sqrt{|\mathcal{C}_2|}} \sum_{\bm{x}\in \mathcal{C}_2} (-1)^{\bm{x}\bm{r}^T}\ket{ (X_i^L \bm{\alpha}) G_{\mathcal{C}_1/\mathcal{C}_2} \oplus \bm{x} \oplus \bm{y}}\\
	&= \bar{X}_i \ket{\overline{\bm{\alpha}}},\numberthis
	\end{align*}
	and
	\begin{align*}
	&(-1)^{\bm{\gamma_i}\bm{y}^T} E(\bm{0},\bm{\gamma_i}) \ket{\overline{\bm{\alpha}}} \\
	&=  \frac{1}{\sqrt{|\mathcal{C}_2|}} \Big(\sum_{\bm{x}\in \mathcal{C}_2} (-1)^{{\bm{x}\bm{r}^T} \oplus {\bm{\gamma_i}\bm{y}^T} \oplus {\bm{\gamma_i} (\bm{\alpha}G_{\mathcal{C}_1/\mathcal{C}_2}\oplus \bm{x} \oplus \bm{y})^T} } \\
	&~~~~~~~~~~~~~~~~~~~ \ket{\bm{\alpha}  G_{\mathcal{C}_1/\mathcal{C}_2} \oplus \bm{x} \oplus \bm{y}}\Big)\\
	&= \frac{1}{\sqrt{|\mathcal{C}_2|}} \sum_{\bm{x}\in \mathcal{C}_2} (-1)^{\bm{x}\bm{r}^T} (-1)^{\bm{\alpha}\bm{e_i}^T}  \ket{\bm{v}  G_{\mathcal{C}_1/\mathcal{C}_2} \oplus \bm{x} \oplus \bm{y}} \\
	&= \bar{Z}_i \ket{\overline{\bm{\alpha}}},\numberthis
	\end{align*}
	where the second to last step follows from \eqref{eqn:log_req}. 
	Thus we can choose 
	\begin{equation}\label{eqn:css_log_paulis}
		\bar{X}_{i} = E(\bm{w_i},\bm{0}) 
		\text{ and }
		\bar{Z}_{i} 
		= \epsilon_{(\bm{0},\bm{\gamma_i})}E(\bm{0},\bm{\gamma_i}),
	\end{equation}
	where $\bm{w_i}, \bm{\gamma_i}$ are the $i$-th rows of the above coset generator matrices $G_{\mathcal{C}_1/\mathcal{C}_2}$, $G_{\mathcal{C}_2^\perp / \mathcal{C}_1^\perp}$ respectively. 
	\begin{remark}
	\normalfont
	\label{rem:signs_matter_logical_op}
    Applying appropriate Pauli operators takes care of different signs in the stabilizer group and changes the sign of logical Pauli operators. Although the sign for a single logical Pauli operator is not observable, a general logical operator is a linear combination of logical Pauli operators, which may bring the global sign into some local phase. 
    \end{remark}
    \addtocounter{example}{-2}
	\begin{example}[The basis state and logical Pauli operators of the $\llbr 4,2,2 \rrbr $ code] 
	\label{examp2}
	\normalfont
	    Consider the CSS($X,\mathcal{C}_2 ;Z,\mathcal{C}_1^\perp$) code with $\mathcal{C}_2 = \mathcal{C}_1^\perp = \{\bm{0},\bm{1}\}$.
	    We may choose the generator matrices of $\mathcal{C}_1/\mathcal{C}_2$ and $\mathcal{C}_2^\perp/\mathcal{C}_1^\perp$ as
		\begin{align}
		\setlength\aboverulesep{0pt}\setlength\belowrulesep{0pt}
		\setlength\cmidrulewidth{0.5pt}
		G_{\mathcal{C}_1/\mathcal{C}_2 } = 
		\begin{bmatrix}
		0 & 1 & 1 & 0    \\
		0 & 0 & 1 & 1   
		\end{bmatrix},~
		G_{\mathcal{C}_2^\perp/\mathcal{C}_1^\perp } = 
		\begin{bmatrix}
		0 & 0 & 1 & 1   \\
		0 & 1 & 1 & 0    
		\end{bmatrix}.
		\end{align}
	    The encoded basis states and logical Pauli operators for two choices of the signs are given below.
	    If $\mathcal{S}=\langle X^{\otimes 4}, Z^{\otimes 4} \rangle$ ($\bm{r} =\bm{y}=\bm{0}$), we have 
	        \begin{align*}
	            \ket{\overline{00}} &= \frac{1}{\sqrt{2}} \left( \ket{0000} +\ket{1111} \right),\\
	            \ket{\overline{01}} &=  \frac{1}{\sqrt{2}} \left( \ket{0011} +\ket{1100} \right),\\
	            \ket{\overline{10}} &= \frac{1}{\sqrt{2}} \left( \ket{0110} +\ket{1001} \right),\\
	            \ket{\overline{11}} &= \frac{1}{\sqrt{2}} \left( \ket{0101} +\ket{1010} \right),\\
	            \bar{X}_1 &= X_2X_3,~ \bar{X}_2 = X_3X_4,~  \\
	            \bar{Z}_1 &= Z_3Z_4,~~ \bar{Z}_2 = Z_2Z_3.
	        \end{align*}
	    When $\mathcal{S'}=\langle X^{\otimes 4}, -Z^{\otimes 4} \rangle$ ($\bm{r}'=\bm{0}$, $\bm{y}'= [0,0,0,1]$), we have
	        \begin{align*}
	            \ket{\overline{00}} &= \frac{1}{\sqrt{2}} \left( \ket{0001} +\ket{1110} \right),\\
	            \ket{\overline{01}} &=  \frac{1}{\sqrt{2}} \left( \ket{0010} +\ket{1101} \right),\\
	            \ket{\overline{10}} &= \frac{1}{\sqrt{2}} \left( \ket{0111} +\ket{1000} \right),\\
	            \ket{\overline{11}} &= \frac{1}{\sqrt{2}} \left( \ket{0100} +\ket{1011} \right),\\
	            \bar{X}_1 &= X_2X_3,~ \bar{X}_2 = X_3X_4, \\
	            \bar{Z}_1 &= -Z_3Z_4, ~~\bar{Z}_2 = Z_2Z_3.
	        \end{align*}
	\end{example}

    \subsection{Quantum Channels} 
    The quantum states defined in Section \ref{subsec:prem_paulis} are called \emph{pure states}.
    When a system contains multiple pure states $\ket{\psi_x}$ with probabilities $p_x$, the ensemble $\{p_x,\ket{\psi_x}\}$, is described by a \emph{density operator} $\rho$ given by
    \begin{equation}\label{eqn:def_ds}
        \rho \coloneqq \sum_{x}p_x\ket{\psi_x}\bra{\psi_x} \in \C^{N\times N}.
    \end{equation}
    Every density operator is Hermitian, positive semi-definite, with unit trace. Conversely, any operator with these three properties can be written in the form \eqref{eqn:def_ds}. Every ensemble determines a unique density operator but a density operator can describe different ensembles.
    
    Suppose we measure the density operator $\rho$ with a finite set of projectors $\{\Pi_j\}_j$ forming a resolution of the identity.
     If the initial state in the ensemble is $\ket{\psi_x}$, then we observe the outcome $j$ with probability $p(j|x) = \bra{\psi_x}\Pi_j\ket{\psi_x} = \mathrm{Tr}(\Pi_j\ket{\psi_x}\bra{\psi_x})$ and obtain the reduced state
    $\frac{\Pi_j\ket{\psi_x}}{\sqrt{p(j|x)}}$. From the perspective of density operators, we observe the outcome $j$ with probability $p_j = \sum_{x}p_xp(j|x) = \mathrm{Tr}(\Pi_j\rho)$ and the density operator evolves to be $\frac{\Pi_j\rho\Pi_j}{p_j}$. Thus, after measurement, we have a ensemble of ensembles described by a new density operator $\rho'$ given by \cite{wilde2013quantum}
    \begin{equation}\label{eqn:do_measure_evo}
        \rho' = \sum_{j}p_j\frac{\Pi_j\rho\Pi_j}{p_j} = \sum_j \Pi_j\rho\Pi_j.
    \end{equation}
    
     A quantum channel is linear, completely-positive, and trace-preserving, and can be characterized by a Kraus representation~\cite{nielsen2011quantum, wilde2013quantum}. A map $\Phi: \mathcal{H} \to \mathcal{G}$ is linear, completely-positive, and trace-preserving if and only if there exists a finite set of operators $\{B_k\}_{k}$ (from $\mathcal{H}$ to $\mathcal{G}$) such that for any $\rho\in \mathcal{H}$
     \begin{equation}
         \Phi(\rho) = \sum_{k} B_k\rho B_k^\dagger.
     \end{equation}
     The operators $\{B_k\}_{k}$ are called \emph{Kraus operators} and satisfy
     \begin{equation}
         \sum_k B_k^\dagger B_k = I_{2^{\dim(\mathcal{H})}} 
    \end{equation}
    and
    \begin{equation}
     |\{B_k\}_{k}| \le \dim(\mathcal{H})\dim(\mathcal{G}).
     \end{equation}
     Note that the Kraus representation of a quantum channel is not unique.

	\section{Generator Coefficients}
	\label{sec:intro_gcs}
	Starting from the general encoding map and logical Pauli operators of CSS codes introduced in Section \ref{subsec:CSS_general_Enc}, we study gates interacting with these codes. We consider quantum gates for which the Pauli expansion consists only of tensor products of Pauli $Z$'s (or Pauli $X$'s). We partition $\F_2^n$ into cosets of the $Z$-stabilizers (or $X$-stabilizers), and define generator coefficients that take advantage of the structure of stabilizer group. The framework of generator coefficients provides insight into the average logical channel, the necessary and sufficient conditions for a CSS code to be invariant under a particular gate, and the induced logical operator. We extend the framework of generator coefficients to general stabilizer codes in Appendix \ref{sec:stab_gcf}.
	
	Consider a $2^n \times 2^n$ unitary matrix (quantum gate) $U_Z=\sum_{\bm{v}\in \F_2^n}f(\bm{v})E(\bm{0},\bm{v})$, where $f(\bm{v})\in \C$. Since
	\begin{align*}
	I&=U_ZU_Z^\dagger\\
	&=\left( \sum_{\bm{v}\in \F_2^n}f(\bm{v})E(\bm{0},\bm{v})\right)\left( \sum_{\bm{v'}\in \F_2^n}\overline{f(\bm{v'})}E(\bm{0},\bm{v'})\right)\\
	&=\sum_{\bm{w}\in \F_2^n}\left(  \sum_{\bm{v}\in \F_2^n}f(\bm{v})\overline{f(\bm{v}\oplus \bm{w})}\right)E(0,\bm{w}),\numberthis
	\end{align*}
	we have
	\begin{equation}\label{eqn:unitarycoefficients}
	\sum_{\bm{v}\in \F_2^n}f(\bm{v})\overline{f(\bm{v}\oplus \bm{w})}
	=\left\{\begin{array}{lc}
	1, &  \text{ if } \bm{w}=\bm{0},\\
	0, &  \text{ if } \bm{w}\neq \bm{0}.
	\end{array} \right.
	\end{equation}
	We define the generator coefficients for $U_Z$ acting on a given CSS code as follows. 
	\begin{definition}[Generator Coefficients for $U_Z$]
	\label{def:gcs}
	Let CSS($X,\mathcal{C}_2;Z,\mathcal{C}_1^\perp$) be an $\left[\left[n, k_1-k_2, d\right]\right]$ stabilizer code defined by the stabilizer group $\mathcal{S}=\{\epsilon_{(\bm{a},\bm{0})} \epsilon_{(\bm{0},\bm{b})} E\left(\bm{a},\bm{0}\right)E\left(\bm{0},\bm{b}\right): \bm{a}\in \mathcal{C}_2, \bm{b}\in \mathcal{C}_1^\perp\}$ and the character vector $\bm{y}\in \F_2^n/\mathcal{C}_1$ for $Z$-stabilizers. 
		Let $\bm{\mu} \in \F_2^n / \mathcal{C}_2^\perp$ be any  $X$-syndrome and $\bm{\gamma} \in \mathcal{C}_2^\perp / \mathcal{C}_1^\perp$ be any $Z$-logical. Then, for any pair $\bm{\mu}$, $\bm{\gamma}$, we define the generator coefficient $A_{\bm{\mu},\bm{\gamma}}$ corresponding to the diagonal unitary gate $U_Z=\sum_{\bm{v}\in \F_2^n}f(\bm{v})E(\bm{0},\bm{v})$ by
		\begin{equation}\label{eqn:def_GC_UZ}
	    A_{\bm{\mu},\bm{\gamma}}\coloneqq \sum_{\bm{z}\in \mathcal{C}_1^\perp+\bm{\mu}+\bm{\gamma}}\epsilon_{(\bm{0},\bm{z})}f(\bm{z}),
	    \end{equation}
	    where $\epsilon_{(\bm{0},\bm{z})} = (-1)^{\bm{z}\bm{y}^T}$.
	\end{definition}
	Note that given a CSS code with not all positive signs, the character vector $\bm{y}$ is unique up to an element of $\mathcal{C}_1$. A different choice of the coset representatives of $\mathcal{C}_1$ in $\F_2^n$ only changes the signs of $A_{\bm{\mu},\bm{\gamma}}$, and leads to a global phase in the logical quantum channel induced by $U_Z$, which is given in Section \ref{sec:avg_log_chl}. 
	
	By partitioning $\F_2^n$ into cosets of $\mathcal{C}_1^\perp$, we gain insight into the interaction of syndromes and logicals. The code projector is $\Pi_{\mathcal{S}} = \Pi_{\mathcal{S}_X}\Pi_{\mathcal{S}_Z}$, and we have 
	\begin{widetext}
	
	\begin{align*}
	\Pi_{\mathcal{S}_Z}U_Z 
	&= \frac{1}{2^{n-k_1}} \sum_{\bm{b}\in \mathcal{C}_1^\perp} \epsilon_{(\bm{0},\bm{b})} E(\bm{0},\bm{b}) \sum_{\bm{v}\in \F_2^n}f(\bm{v})E(\bm{0},\bm{v})
	= \frac{1}{2^{n-k_1}} \sum_{\bm{v}\in \F_2^n}f(\bm{v})\sum_{\bm{b}\in \mathcal{C}_1^\perp} \epsilon_{(\bm{0},\bm{b})} E(\bm{0},\bm{b}\oplus\bm{v})\\
	&= \frac{1}{2^{n-k_1}} \sum_{\bm{v}\in \F_2^n}\epsilon_{(\bm{0},\bm{v})}f(\bm{v})\sum_{\bm{u}\in \mathcal{C}_1^\perp+\bm{v}} \epsilon_{(\bm{0},\bm{u})} E(\bm{0},\bm{u})
	= \frac{1}{2^{n-k_1}} 
	\sum_{\bm{\mu}}
	\sum_{\bm{\gamma}}A_{\bm{\mu},\bm{\gamma}}
	\sum_{\bm{u}\in \mathcal{C}_1^\perp+\bm{\mu}+\bm{\gamma}} \epsilon_{(\bm{0},\bm{u})} E(\bm{0},\bm{u}).\numberthis	\label{eqn:Pi_SZ_U_Z}
	\end{align*}
	\end{widetext}
    In the above summations, $\bm{\mu}\in \F_2^n/\mathcal{C}_2^\perp$ and $\bm{\gamma}\in \mathcal{C}_2^\perp/\mathcal{C}_1^\perp$, and $A_{\bm{\mu},\bm{\gamma}}$ is given by \eqref{eqn:def_GC_UZ}.
	We now study the generator coefficients associated with two different types of quantum gate $U_Z$.
	
	\subsection{Transversal $Z$-Rotations $R_Z(\theta)$}
	\label{subsec:GCs_R_Z}
	There are two reasons to study how $R_Z(\theta)\coloneqq \left(\exp\left(-\imath\frac{\theta}{2}Z \right)\right)^{\otimes n}$ 
	$=\left(  \cos\frac{\theta}{2} I-\imath \sin\frac{\theta}{2} Z \right)^{\otimes n}$ acts on the states within a quantum error-correcting code. The first is that when $\theta$ is not a multiple of $\frac{\pi}{2}$, $R_Z(\theta)$ may realize a non-Clifford logical gate, and the second is that coherent noise can be modeled as $\{R_Z(\theta)\}_{\theta\in (0,2\pi) }$. 
	The Pauli expansion of $R_Z(\theta)$ is
	\begin{equation}
	\sum_{\bm{v}\in \mathbb{F}_2^n} \left( \cos\frac{\theta}{2}\right)^{n-w_H(\bm{v})}\left(-\imath \sin\frac{\theta}{2} \right)^{w_H(\bm{v})}E(\bm{0},\bm{v}).
	\end{equation}
	As $f(\bm{v})=\left( \cos\frac{\theta}{2}\right)^{n-w_H(\bm{v})}\left(-\imath \sin\frac{\theta}{2} \right)^{w_H(\bm{v})}$, we substitute it in \eqref{eqn:def_GC_UZ}, and obtain the generator coefficients of $R_Z(\theta)$, 
	\begin{align}\label{eqn:def_gc}
		&A_{\bm{\mu},\bm{\gamma}}(\theta) \coloneqq \nonumber \\
		&\sum_{\bm{z}\in \mathcal{C}_1^\perp + \bm{\mu} + \bm{\gamma}} \epsilon_{(\bm{0},\bm{z})} \left(\cos\frac{\theta}{2}\right)^{n-w_H(\bm{z})}\left(-\imath \sin\frac{\theta}{2}\right)^{w_H(\bm{z})}.
	\end{align}
	We now compute the generator coefficients for the $\llbr 7,1,3 \rrbr $ Steane code.
	\addtocounter{example}{-2}
	\begin{example}[Generator Coefficients for $R_Z(\theta)$ applied to the $\llbr 7,1,3 \rrbr $ Steane code]
	\label{examp3}
	\normalfont
	
	    The Steane code is a perfect CSS($X,\mathcal{C}_2;Z,\mathcal{C}_1^\perp$)  code with all positive signs and generator matrix
	    	\begin{align}
		\setlength\aboverulesep{0pt}\setlength\belowrulesep{0pt}
		\setlength\cmidrulewidth{0.5pt}
		G_{\mathcal{S}} = 
		\left[
		\begin{array}{c|c}
		H &  \\
		\hline
		 & H\\
		\end{array}
		\right],
		\end{align}
		where $H$ is the parity-check matrix of the Hamming code:
		\begin{align}
		\setlength\aboverulesep{0pt}\setlength\belowrulesep{0pt}
		\setlength\cmidrulewidth{0.5pt}
		H = 
		\left[
		\begin{array}{ccccccc}
		1 & 1 & 1 & 1 & 0 & 0 & 0    \\
		1 & 1 & 0 & 0 & 1 & 1 & 0   \\
		1 & 0 & 1 & 0 & 1 & 0 & 1   \\
		\end{array}
		\right].
		\end{align}
		Then, we have $C_1/C_2 = C_2^\perp / C_1^\perp = \{\bm{0}, \bm{1} \}$, where $\bm{0}, \bm{1}$ are the vectors of all ones and all zeros respectively. If we compute the generator coefficients directly from \eqref{eqn:def_gc}, then we need the weight enumerators of all cosets of $\mathcal{C}_1^\perp$. We may simplify these calculations using the MacWilliams Identities. Consider for example the case $\bm{\mu} = \bm{0}$ and $\bm{\gamma} = \bm{1}$, where we may write
		\begin{align} \label{eqn:7_1_3_ex}
		&A_{\bm{0},\bm{1}}(\theta)\nonumber \\
		&= \sum_{\bm{z}\in \mathcal{C}_1^\perp +\bm{1}}  \left(\cos\frac{\theta}{2}\right)^{7-w_H(\bm{z})}\left(-\imath \sin\frac{\theta}{2}\right)^{w_H(\bm{z})} \nonumber \\
		&= P_{\theta}[\langle \mathcal{C}_1^\perp,\bm{1}\rangle] - P_{\theta}[\mathcal{C}_1^\perp],
		\end{align}
		where $P_{\theta}[\mathcal{C}]$ is defined in \eqref{eqn:key_sub_Mac}.
		We apply the MacWilliams Identities to $P_{\theta}[ \mathcal{C}_1^\perp]$ to obtain
		\begin{align*}
		    P_{\theta}[ \mathcal{C}_1^\perp] 
		    & = \frac{1}{|\mathcal{C}_1|}P_{\mathcal{C}_1}\left(\cos\frac{\theta}{2} - \imath \sin \frac{\theta}{2}, \cos\frac{\theta}{2} + \imath \sin \frac{\theta}{2}\right)\\
		    &= \frac{1}{|\mathcal{C}_1|}\sum_{\bm{z}\in \mathcal{C}_1 }\left( e^{-\imath\frac{\theta}{2}}\right)^{n-2w_H(\bm{z})}\numberthis \label{eqn:7_1_3_C_1}.
		\end{align*}
		We simplify the term $P[\langle \mathcal{C}_1^\perp,\bm{1}\rangle]$ in the same way,
		\begin{align}
		  P_{\theta}[\langle \mathcal{C}_1^\perp,\bm{1}\rangle] &= \frac{1}{|\langle \mathcal{C}_1^\perp,\bm{1}\rangle|}\sum_{\bm{z}\in \langle \mathcal{C}_1^\perp,\bm{1}\rangle^\perp } \left(e^{-\imath\frac{\theta}{2}}\right)^{n-2w_H(\bm{z})} \nonumber\\
		  &=\frac{2}{|\mathcal{C}_1|}\sum_{\bm{z}\in  \mathcal{C}_1 \cap \bm{1}^\perp } \left(e^{-\imath\frac{\theta}{2}}\right)^{n-2w_H(\bm{z})}.\label{eqn:7_1_3_C_1_plus_1}
		\end{align}
		It follows from \eqref{eqn:7_1_3_ex}, \eqref{eqn:7_1_3_C_1}, and \eqref{eqn:7_1_3_C_1_plus_1} that
		\begin{align}
		A_{\bm{0},\bm{1}}(\theta) 
		& = \frac{1}{|\mathcal{C}_1|}\sum_{\bm{z}\in \mathcal{C}_1} (-1)^{\bm{1} \cdot \bm{z}^T} (e^{-\imath\frac{\theta}{2}})^{7-2w_H(\bm{z})} \label{eqn:7_1_3_fs} \\
		& = \frac{1}{8}\left(-\imath \sin \frac{7\theta}{2} +7\imath \sin \frac{\theta}{2}\right),\label{eqn:ex1_simp}
		\end{align}
		where \eqref{eqn:ex1_simp} is obtained from \eqref{eqn:7_1_3_fs} by substituting in the weight enumerator of $\mathcal{C}_1$
		\begin{align*}
		P_{C_1}(x,y) = x^7 + 7x^4y^3 + 7x^3y^4 + y^{7}.
		\end{align*}
		We compute all the generator coefficients for the Steane code in Table \ref{tab:7_1_3_gcs}. We return to this data in Section \ref{subsec:Kraus} to provide more insight into the logical channel determined by $R_Z(\theta)$, and in Section \ref{subsec:prob} to calculate the probabilities of observing different syndromes. 
		\begin{table}[h!]
		    \centering
		   \caption{Generator coefficients $A_{\bm{\mu},\bm{\gamma}}(\theta)$ for $R_Z(\theta)$ applied to the Steane code. Each column corresponds to a $Z$-logical. The first row corresponds to the trivial $X$-syndromes, and second row  represents the seven non-trivial syndromes (they have equivalent behaviour due to symmetry).}
		    \renewcommand{\arraystretch}{1.2} 
		    \begin{tabular}{|c|c|c|}
		       \hline
		        $\bm{\mu}$ & $\bm{\gamma}=\bm{0}$  & $\bm{\gamma}=\bm{1}$ \\
		        \hline
		          $=\bm{0}$ & $\frac{1}{8}\left(\cos \frac{7\theta}{2} + 7 \cos \frac{\theta}{2}\right)$ & 
		          $\frac{\imath}{8}\left(7 \sin \frac{\theta}{2}- \sin \frac{7\theta}{2}\right)$\\
		          \hline
		        $\neq \bm{0}$ & $-\frac{\imath}{8}\left(\sin \frac{7\theta}{2}+ \sin \frac{\theta}{2}\right)$   &
		        $ \frac{1}{8}\left(\cos \frac{7\theta}{2} - \cos \frac{\theta}{2} \right) $\\
		        \hline
		    \end{tabular}
		    \label{tab:7_1_3_gcs}
		\end{table}
	
    \end{example}
	 Before introducing the Kraus decomposition of $R_Z(\theta)$ acting on a CSS code, we provide an alternative definition of generator coefficients which simplifies calculations. We first write $A_{\bm{\mu,\bm{\gamma}}(\theta)}$ as a linear combination of weight enumerators, then apply the MacWilliams Identities.
	 
	 
	 \begin{lemma}[Simplified Definition of Generator Coefficients]
	 \label{lemma:equ_def_gcs}
	     Consider a CSS($X,\mathcal{C}_2;Z,\mathcal{C}_1^\perp$) code, where $\bm{y}$ is the character vector for the $Z$-stabilizers $\left(\epsilon_{(\bm{0},\bm{z})} = (-1)^{\bm{z}\bm{y}^T}\right)$. Then, the generator coefficients $A_{\bm{\mu},\bm{\gamma}}(\theta)$ defined in \eqref{eqn:def_gc} can be written as
	     \begin{align}\label{eqn:A_mu,gamma}
	     &A_{\bm{\mu},\bm{\gamma}}(\theta) \nonumber \\
	     &=\frac{1}{|\mathcal{C}_1|} \sum_{\bm{z}\in \mathcal{C}_1 + \bm{y}}(-1)^{(\bm{\mu} \oplus\bm{\gamma})(\bm{z}\oplus \bm{y})^T} \left(e^{-\imath\frac{\theta}{2}}\right)^{n-2w_H(\bm{z})}. 
	     \end{align}
	 \end{lemma}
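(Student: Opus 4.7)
The plan is to derive the claimed formula directly by applying a character-theoretic identity (essentially a MacWilliams-style transform) to the coset sum appearing in the definition \eqref{eqn:def_gc}. The main task is rewriting the sum over the coset $\mathcal{C}_1^\perp + \bm{\mu} + \bm{\gamma}$ as a sum over all of $\F_2^n$ weighted by a character of $\mathcal{C}_1$, so that the sum factors into a product over coordinates.

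First, I would use the indicator identity
$$\mathbb{1}[\bm{z}\in\mathcal{C}_1^\perp+\bm{v}]=\frac{1}{|\mathcal{C}_1|}\sum_{\bm{w}\in\mathcal{C}_1}(-1)^{\bm{w}(\bm{z}\oplus\bm{v})^T},$$
which follows from Fourier inversion on $\F_2^n$ together with $\mathcal{C}_1=(\mathcal{C}_1^\perp)^\perp$. Setting $\bm{v}=\bm{\mu}\oplus\bm{\gamma}$ and substituting $f(\bm{z})=(\cos\frac{\theta}{2})^{n-w_H(\bm{z})}(-\imath\sin\frac{\theta}{2})^{w_H(\bm{z})}$ together with $\epsilon_{(\bm{0},\bm{z})}=(-1)^{\bm{z}\bm{y}^T}$ into \eqref{eqn:def_gc} converts $A_{\bm{\mu},\bm{\gamma}}(\theta)$ into a double sum over $\bm{w}\in\mathcal{C}_1$ and $\bm{z}\in\F_2^n$.

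Next, I would interchange the order of summation to expose, for each fixed $\bm{w}\in\mathcal{C}_1$, the full sum $\sum_{\bm{z}\in\F_2^n}(-1)^{(\bm{w}\oplus\bm{y})\bm{z}^T}(\cos\frac{\theta}{2})^{n-w_H(\bm{z})}(-\imath\sin\frac{\theta}{2})^{w_H(\bm{z})}$. This sum factors coordinate-wise: a coordinate where $(\bm{w}\oplus\bm{y})_i=0$ contributes $\cos\frac{\theta}{2}-\imath\sin\frac{\theta}{2}=e^{-\imath\theta/2}$, while a coordinate where $(\bm{w}\oplus\bm{y})_i=1$ contributes $\cos\frac{\theta}{2}+\imath\sin\frac{\theta}{2}=e^{\imath\theta/2}$. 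Hence the inner sum collapses to $(e^{-\imath\theta/2})^{n-2w_H(\bm{w}\oplus\bm{y})}$.

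Finally, I would reindex with $\bm{z}=\bm{w}\oplus\bm{y}$, so that $\bm{w}$ ranges over $\mathcal{C}_1$ iff $\bm{z}$ ranges over $\mathcal{C}_1+\bm{y}$, and use that binary inner products are symmetric to rewrite $(\bm{z}\oplus\bm{y})(\bm{\mu}\oplus\bm{\gamma})^T=(\bm{\mu}\oplus\bm{\gamma})(\bm{z}\oplus\bm{y})^T$. This yields exactly \eqref{eqn:A_mu,gamma}. I do not expect a serious obstacle here; the only slightly delicate point is keeping track of how the sign character $(-1)^{\bm{z}\bm{y}^T}$ combines with the character of $\mathcal{C}_1$ to shift the coset from $\mathcal{C}_1$ to $\mathcal{C}_1+\bm{y}$. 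This bookkeeping step is what forces the shift by $\bm{y}$ on both the summation domain and the phase.
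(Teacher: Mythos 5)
Your proposal is correct. Every step checks out: the coset indicator $\mathbb{1}[\bm{z}\in\mathcal{C}_1^\perp+\bm{v}]=\frac{1}{|\mathcal{C}_1|}\sum_{\bm{w}\in\mathcal{C}_1}(-1)^{\bm{w}(\bm{z}\oplus\bm{v})^T}$ is valid since $(\mathcal{C}_1^\perp)^\perp=\mathcal{C}_1$, the sign $(-1)^{\bm{z}\bm{y}^T}$ merges with the character to give $(-1)^{(\bm{w}\oplus\bm{y})\bm{z}^T}$, the inner sum over $\F_2^n$ factors coordinate-wise into $(e^{-\imath\theta/2})^{n-2w_H(\bm{w}\oplus\bm{y})}$, and the substitution $\bm{z}=\bm{w}\oplus\bm{y}$ produces exactly \eqref{eqn:A_mu,gamma}. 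Your route differs in organization from the paper's. The paper first splits the signed coset sum into $2S_p-S_n$ using the subgroup $\mathcal{B}=\{\bm{z}\in\mathcal{C}_1^\perp:\epsilon_{(\bm{0},\bm{z})}=1\}$, rewrites each coset sum as a difference of weight enumerators of the codes $\mathcal{B}$, $\langle\mathcal{B},\bm{\mu}\oplus\bm{\gamma}\rangle$, $\mathcal{C}_1^\perp$, $\langle\mathcal{C}_1^\perp,\bm{\mu}\oplus\bm{\gamma}\rangle$, applies the MacWilliams Identities to each as a black box, and then recombines using $\mathcal{B}^\perp\setminus\mathcal{C}_1=\mathcal{C}_1+\bm{y}$. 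You instead prove the needed transform inline via character orthogonality and Poisson-style summation, which is essentially a direct proof of the relevant MacWilliams-type identity for signed coset sums. Your version is shorter and avoids the bookkeeping with $\mathcal{B}$ and the enlarged codes; the paper's version has the advantage of staying within the weight-enumerator formalism it reuses elsewhere (e.g., Example \ref{examp3} and Lemma \ref{lemma:Mac1}), but either argument is a complete and correct proof of the lemma.
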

	 \begin{remark}\normalfont
	 \label{rem:eqv_gcs_trans_theta}
	     The original definition \eqref{eqn:def_gc} requires a sum over the weights of every coset $\mathcal{C}_1^\perp$. The alternative definition \eqref{eqn:A_mu,gamma} requires a sum over a single coset $\mathcal{C}_1+\bm{y}$, where the syndrome $\bm{\mu}$ and logical $\bm{\gamma}$ determine the hyperplane that specifies the signs in the sum.
	 \end{remark}
	 \begin{proof}
	     See Appendix \ref{subsec:proof_eqv_GCs_Rz}.
	 \end{proof}
	 
	 
	\subsection{Quadratic Form Diagonal Gates}
	\label{subsec:GCs_QFD}
	Rengaswamy et al. \cite{rengaswamy2019unifying} considered diagonal unitaries of the form 
	\begin{equation}\label{eqn:qfd_def}
	\tau_R^{(l)} = \sum_{\bm{v}\in \F_2^n} \xi_l^{\bm{v}R\bm{v}^T \bmod {2^l}} \ket{\bm{v}}\bra{\bm{v}},
	\end{equation}
	where $l\ge 1$ is an integer, $\xi_l=e^{\imath \frac{\pi}{2^{l-1}}}$, and $R$ is an $n\times n$ symmetric matrix with entries in $\Z_{2^l}$, the ring of integer modulo $2^l$. Note that the exponent $vRv^T \in \Z_{2^l}$. When $l=2$ and $R$ is binary, we obtain the diagonal Clifford unitaries. QFD gates defined by \eqref{eqn:qfd_def} include all $1$-local and $2$-local diagonal unitaries in the Clifford hierarchy, and they contain $R_Z(\theta)$ for $\theta = \frac{2\pi}{2^l}$, where $l\ge 1$ is an integer. 
	
	Recall that $N\times N$ Pauli matrices form an orthonormal basis for unitaries of size $N$ with respect to the normalized Hilbert-Schmidt inner product $\langle A,B\rangle \coloneqq \mathrm{Tr}(A^\dagger B)/N$. Hence,
	\begin{align*}
	\ket{\bm{v}}\bra{\bm{v}} 
	&= \sum_{\bm{a},\bm{b}\in \F_2^n} \frac{\mathrm{Tr}(\ket{\bm{v}}\bra{\bm{v}}E(\bm{a},\bm{b}))}{N}E(\bm{a},\bm{b})\\
	&=\frac{1}{2^n}\sum_{\bm{b}\in \F_2^n} (-1)^{\bm{b}\bm{v}^T} E(\bm{0},\bm{b}),\numberthis
	\end{align*}
	and the Pauli expansion of a QFD gate becomes
	\begin{align}
	\tau_R^{(l)} 
	&= \frac{1}{2^n} \sum_{\bm{u}\in \F_2^n} f(\bm{u})E(\bm{0},\bm{u}),
	\end{align}
	where 
	\begin{equation} \label{eqn:qfd_fu}
	    f(\bm{u}) = \sum_{\bm{v}\in\F_2^n} \xi_l^{\bm{v}R\bm{v}^T \bmod {2^l}} (-1)^{\bm{uv}^T} .
	\end{equation}
	\addtocounter{example}{+2}
	\begin{example}\label{examp4}
	\normalfont
		If $n=1,~l=3,~\xi_3=e^{\imath \frac{\pi}{4}},~ R=[1]$, then we have $f(0) =  1+e^{\imath\frac{\pi}{4}}$, $f(1)=  1-e^{\imath\frac{\pi}{4}}$, and
		$
		\tau_{R}^{(2)}  
		= \frac{1}{2}\left( 1+e^{\imath\frac{\pi}{4}}\right)E(0,0) + \frac{1}{2}\left( 1-e^{\imath\frac{\pi}{4}}\right)E(0,1)=T.
		$
	\end{example}
	\begin{example}\label{examp5}
	\normalfont
	   Consider $n=2$, and $R=\begin{bmatrix}
		0 & 1 \\
		1 & 0
		\end{bmatrix}$. If $l=2$, then $~\xi_2=e^{\imath \frac{\pi}{2}}=\imath$ and 
		$
		\tau_{R}^{(2)}=\mathrm{C}Z
		\coloneqq\frac{1}{2}\left( E(\bm{0},\bm{0}) + E(\bm{0},01) + E(\bm{0},10) - E(\bm{0},\bm{1})\right).$ If $l=3$, then $~\xi_3=e^{\imath \frac{\pi}{4}}$ and 
		\begin{align*}
		\tau_{R}^{(3)} = \mathrm{C}P 
		&\coloneqq \frac{1}{4}( (3-\imath)E(\bm{0},\bm{0}) + (1+\imath)E(\bm{0},01) \\ 
		&~~~+ (1+\imath)E(\bm{0},10) - (1+\imath)E(\bm{0},\bm{1})). \numberthis
		\end{align*}
	\end{example}
	We substitute 
    \eqref{eqn:qfd_fu} in \eqref{eqn:def_GC_UZ}, and obtain the generator coefficients for QFD gates 
	\begin{align}\label{eqn:def_gcs_qfd}
	&A_{\bm{\mu},\bm{\gamma}}(R,l) \coloneqq \nonumber \\
	&\frac{1}{2^n} \sum_{\bm{z}\in \mathcal{C}_1^\perp+\bm{\mu}+\bm{\gamma}}\epsilon_{(\bm{0},\bm{z})}\sum_{\bm{v}\in\F_2^n}\xi_l^{\bm{v}R\bm{v}^T\bmod{2^l}}(-1)^{\bm{zv}^T}.
	\end{align}
	Let $\bm{y}\in\F_2^n/\mathcal{C}_1$ be the character vector $\left( \epsilon_{(\bm{0},\bm{z})}=(-1)^{\bm{zy}^T} \right)$. Changing the order of summation, we have 
	\begin{equation}\label{eqn:gcs_qfd_2}
	A_{\bm{\mu},\bm{\gamma}}(R,l)
	=\frac{1}{2^n} \sum_{\bm{v}\in\F_2^n} p_{\bm{y}}(\bm{v},\bm{\mu},\bm{\gamma})\xi_l^{\bm{v}R\bm{v}^T\bmod{2^l}},
	\end{equation}
	where
	\begin{align*} 
	&p_{\bm{y}}(\bm{v},\bm{\mu},\bm{\gamma})\\
	&=\sum_{\bm{z}\in \mathcal{C}_1^\perp+\bm{\mu}+\bm{\gamma}}
	(-1)^{\bm{zy}^T}(-1)^{\bm{zv}^T}\\
	&=(-1)^{(\bm{\mu} \oplus \bm{\gamma})(\bm{y} \oplus \bm{v})^T}\sum_{\bm{u}\in \mathcal{C}_1^\perp}(-1)^{\bm{u}(\bm{y} \oplus \bm{v})^T} \\
	&=\left\{
	\begin{array}{lc}
	|\mathcal{C}_1^\perp|  (-1)^{(\bm{\mu} \oplus \bm{\gamma})(\bm{y} \oplus \bm{v})^T},   \text{ if } \bm{y} \oplus \bm{v}\in \mathcal{C}_1,\\
	0, ~~~~~~~~~~~~~~~~~~~~~~~~~ \text{ otherwise.}
	\end{array}
	\right. \numberthis\label{eqn:qfd_gcs_simp_pros}
	\end{align*}
	Substituting \eqref{eqn:qfd_gcs_simp_pros} in \eqref{eqn:gcs_qfd_2}, we obtain 
	\begin{align*}
	A_{\bm{\mu},\bm{\gamma}}(R,l)
	&=\frac{1}{|\mathcal{C}_1|} \sum_{\bm{v}\in \mathcal{C}_1+\bm{y}}(-1)^{(\bm{\mu} \oplus \bm{\gamma})(\bm{y} \oplus \bm{v})^T} \xi_l^{\bm{v}R\bm{v}^T} \numberthis \label{eqn:qfd_gcs_simp}.
	\end{align*}
	When $R=I_n$, we obtain the transversal $Z$-rotation $R_Z(\frac{\pi}{2^{l-1}})$ up to a global phase. We now use \eqref{eqn:qfd_gcs_simp} to calculate generator coefficients of the $\llbr 4,2,2 \rrbr $ code.
		    \begin{table*}[t]	    
	    \centering
		    \caption{Generator coefficients $A_{\bm{\mu},\bm{\gamma}}(R,l=2)$ for C$Z\otimes$C$Z$ applied to the $\llbr 4,2,2 \rrbr $ code with all positive signs.}
		    \renewcommand{\arraystretch}{1.3} 
		    \begin{tabular}{|c|c|c|c|c|}
		       \hline
		        \diagbox{$X$-syndromes}{$Z$-logicals} 
                &
		        $\bm{\gamma}=\bm{0}$ & $\bm{\gamma}= [0,0,1,1]$ & $\bm{\gamma}= [0,1,1,0]$ & $\bm{\gamma}= [0,1,0,1]$
		        \\
		        \hline
		          $\bm{\mu}=\bm{0}$ & 
		          $\frac{1}{2}$ & 
		          $-\frac{1}{2}$ &
		          $\frac{1}{2}$ &
		          $\frac{1}{2}$ \\
		          \hline
		        $\bm{\mu}=[1,0,0,0]$ & \multicolumn{4}{c|}{$0$} \\
		        \hline
		    \end{tabular}
		    \label{tab:4_2_2_gcs_CZ}
		\end{table*}
	    
	   \begin{table*}[t]	    
	    \centering
		    \caption{Generator coefficients $A_{\bm{\mu},\bm{\gamma}}(R,l=3)$ of C$P\otimes$C$P$ for $\llbr 4,2,2 \rrbr $ code with all positive signs.}
		    \renewcommand{\arraystretch}{1.3} 
		    \begin{tabular}{|c|c|c|c|c|}
		       \hline
		        \diagbox{$X$-syndromes}{$Z$-logicals} 
                &
		        $\bm{\gamma}=\bm{0}$ & $\bm{\gamma}= [0,0,1,1]$ & $\bm{\gamma}= [0,1,1,0]$ & $\bm{\gamma}= [0,1,0,1]$
		        \\
		        \hline
		          $\bm{\mu}=\bm{0}$ & 
		          $\frac{1}{4}(2+\imath)$ & 
		          $\frac{1}{4}(-2+\imath)$ &
		          $-\frac{\imath}{4}$ &
		          $-\frac{\imath}{4}$ \\
		          \hline
		           $\bm{\mu}=[1,0,0,0]$ & \multicolumn{4}{c|}{$\frac{1}{4}$} \\
		        \hline
		    \end{tabular}
		    \label{tab:4_2_2_gcs_CP}
		\end{table*}
	\addtocounter{example}{-4}
	\begin{example}[Generator Coefficients of C$Z$ and C$P$ for the $\llbr 4,2,2 \rrbr $ code]
	\label{examp6}
	    \normalfont
	    The $\llbr 4,2,2 \rrbr $ code is a CSS code with $\mathcal{C}_1^\perp = \mathcal{C}_2=\{\bm{0},\bm{1}\}$. The $Z$-logical $\bm{\gamma} \in \langle [0,0,1,1], [0,1,1,0] \rangle $ and the $X$-syndrome $\bm{\mu} \in \langle [1,0,0,0] \rangle $. Assume all the stabilizers have positive signs (the character vector $\bm{y} = \bm{0}$). Set 
	    \begin{equation}
	    R=\begin{bmatrix}
		0 & 1 & 0 & 0 \\
		1 & 0 & 0 & 0 \\
		0 & 0 & 0 & 1 \\
		0 & 0 & 1 & 0
		\end{bmatrix}.
	    \end{equation} 
		Setting $l=2$, we list the generator coefficients for C$Z^{\otimes 2}$ in Table \ref{tab:4_2_2_gcs_CZ}. 
	    Note that C$Z$ and C$P$ shared the same symmetric matrix $R$ but the level $l$ is different. Table \ref{tab:4_2_2_gcs_CP} lists the generator coefficients for C$P^{\otimes 2}$.

	\end{example}
	                                      
	    
	
	\section{Average Logical Channel}
	\label{sec:avg_log_chl}
	 We investigate the effect of $U_Z$ acting on a CSS codespace $\mathcal{V}(\mathcal{S})$ by considering the following steps:
	\begin{enumerate}
	    \item Choose any initial density operator $\rho_1$ in the CSS codespace $\mathcal{V}(\mathcal{S})$. Then, we have $\rho_1 = \Pi_{\mathcal{S}} \rho_1\Pi_{\mathcal{S}}$.
	    
		\item Apply $U_Z$ physically. Then the system evolves to 
		\begin{equation}\label{eqn:KR_step2}
		    \rho_2 = U_Z\rho_1 U_Z^\dagger = U_Z\Pi_{\mathcal{S}} \rho_1\Pi_{\mathcal{S}}U_Z^\dagger.
		\end{equation}
		
		\item Measure with $X$-stabilizers to obtain the syndrome $\bm{\mu} \in \F_2^n/\mathcal{C}_2^\perp$. It follows from \eqref{eqn:do_measure_evo} that the system evolves to
		\begin{align*}
	    &\rho_3 
	    = \sum_{\bm{\mu}\in \F_2/\mathcal{C}_2^\perp} \Pi_{\mathcal{S}_{X(\bm{\mu})}}\rho_2 \Pi_{\mathcal{S}_{X(\bm{\mu})}} \\
	    &= \sum_{\bm{\mu}\in \F_2/\mathcal{C}_2^\perp} \left(\Pi_{\mathcal{S}_{X(\bm{\mu})}}U_Z \Pi_{\mathcal{S}}\right)
	    \rho_1 
	    \left(\Pi_{\mathcal{S}}U_Z^\dagger  \Pi_{\mathcal{S}_{X(\bm{\mu})}}\right) \numberthis
	    \end{align*}
	    
		\item Based on the syndrome $\bm{\mu}$, we apply a Pauli correction to map the state back to $\mathcal{V}(\mathcal{S})$. This correction may introduce some logical operator $\epsilon_{(\bm{0},\bm{\gamma_{\mu}})}E(\bm{0},\bm{\gamma_{\mu}})$. The final state $\rho_4$ is in the CSS codespace.
	\end{enumerate} 
    Generator coefficients help describe the average logical channel resulting from $U_Z$ acting on a CSS codespace (steps 1-4). We extend our approach to arbitrary stabilizer codes in Appendix \ref{sec:stab_gcf}.
    
    \subsection{The Kraus Representation} 
    \label{subsec:Kraus}
   
    Kraus operators describe the logical channels obtained by averaging the action of $U_Z$ over density operators in $\mathcal{V}(\mathcal{S})$. Generator coefficient appear as the coefficients in the Pauli expansion of Kraus operators. We use generator coefficients to simplify the term $U_Z \Pi_{\mathcal{S}}$ in \eqref{eqn:KR_step2}. It follows from \eqref{eqn:Pi_SZ_U_Z} and the derivation in Appendix \ref{seubsec:deriv_eqn} that
	\begin{align*}
	U_Z \Pi_{\mathcal{S}} 
	&=\sum_{\bm{\mu}\in \F_2^n/\mathcal{C}_2^\perp}\Pi_{\mathcal{S}_X(\bm{\mu})} 
	\sum_{\bm{\gamma}\in \mathcal{C}_2^\perp/\mathcal{C}_1^\perp} A_{\bm{\mu},\bm{\gamma}} ~q(\bm{\mu},\bm{\gamma}),
	\numberthis \label{eqn:kraus_simplify}
	\end{align*}
	where $\Pi_{\mathcal{S}_X(\bm{\mu})} = \frac{1}{|\mathcal{C}_2|}\sum_{\bm{a}\in \mathcal{C}_2} (-1)^{\bm{a}\bm{\mu}^T} \epsilon_{(\bm{a},\bm{0})} E(\bm{a},\bm{0})$ as described in \eqref{eqn:Pi_SX_mu}, and 
	\begin{equation}
	    q(\bm{\mu},\bm{\gamma}) \coloneqq \frac{1}{2^{n-k_1}}\sum_{\bm{u}\in \mathcal{C}_1^\perp+\bm{\mu}+\bm{\gamma}}\epsilon_{(\bm{0},\bm{u})}E(\bm{0},\bm{u}).
	\end{equation}
	Since the projectors $\{\Pi_{\mathcal{S}_X(\bm{\mu})}\}_{\bm{\mu}\in\F_2^n/\mathcal{C}_2^\perp}$ are pairwise orthogonal, it follows from that for any fixed $\bm{\mu_0}\in\F_2^n/\mathcal{C}_2^\perp $, we have
	\begin{align*}
	\Pi_{\mathcal{S}_{X(\bm{\mu_0})}}U_Z\Pi_{\mathcal{S}} =\Pi_{\mathcal{S}_X(\bm{\mu_0})} \sum_{\bm{\gamma}\in \mathcal{C}_2^\perp/\mathcal{C}_1^\perp} A_{\bm{\mu_0},
	\bm{\gamma}} ~q(\bm{\mu_0},\bm{\gamma}).\numberthis
	\end{align*}
	Since $\rho_1$ describes an ensemble of states in the codespace $\mathcal{V}(\mathcal{S})$, it follows from that for fixed $\bm{\gamma_0}\in \mathcal{C}_2^\perp/\mathcal{C}_1^\perp$, we have 
	\begin{align}
	    q(\bm{\mu_0},\bm{\gamma_0}) \rho_1 q(\bm{\mu_0},\bm{\gamma_0}) = K
	    \rho_1 
	    K,
	\end{align}
	where $K\coloneqq \epsilon_{(\bm{0},\bm{\mu_0} \oplus \bm{\gamma_0})}E(\bm{0},\bm{\mu_0} \oplus \bm{\gamma_0}) $.
	Thus, we may write $\rho_3$ as
	\begin{equation}
	    \rho_3 = \sum_{\bm{\mu}\in \F_2^n/\mathcal{C}_2^\perp} \Pi_{\mathcal{S}_{X(\bm{\mu})}} 
	    K_1
	    \rho_1
	    K_1
	\end{equation}
	where $K_1 \coloneqq \sum_{\bm{\gamma}\in \mathcal{C}_2^\perp/\mathcal{C}_1^\perp} A_{\bm{\mu} ,
	    \bm{\gamma}} ~\epsilon_{(\bm{0},\bm{\mu} \oplus \bm{\gamma})}E(\bm{0},\bm{\mu} \oplus \bm{\gamma})$.
	Although the sign $\epsilon$ does not matter here, we carry it along for consistency with the logical Pauli $Z$ operators derived in \eqref{eqn:css_log_paulis}. 
	Based on the syndrome $\bm{\mu}$, the decoder applies a correction and maps the quantum state back to the codespace $\mathcal{V}(\mathcal{S})$. This correction might induce some undetectable $Z$-logical $\epsilon_{(\bm{0},\bm{\gamma_{\mu}})}E(\bm{0},\bm{\gamma_{\mu}})$ with $\bm{\gamma}_{\bm{0}}=\bm{0}$. 
	Hence, the final state after step 4 becomes
	\begin{equation}\label{eqn:kraus_rep}
	\rho_4= \sum_{\bm{\mu}\in \F_2^n/\mathcal{C}_2^\perp} B_{\bm{\mu}} \rho_1 B_{\bm{\mu}}^\dagger,
	\end{equation}
	where 
	\begin{align} 
	 B_{\bm{\mu}} 
	 &\coloneqq \epsilon_{(\bm{0},\bm{\gamma_{\mu}})}E(\bm{0},\bm{\gamma_{\mu}}) \sum_{\bm{\gamma}\in \mathcal{C}_2^\perp / \mathcal{C}_1^\perp} A_{\bm{\mu},\bm{\gamma}}~ \epsilon_{(\bm{0},\bm{\gamma})}E(\bm{0},\bm{\gamma}) \nonumber \\
	 &=\sum_{\bm{\gamma}\in \mathcal{C}_2^\perp / \mathcal{C}_1^\perp} A_{\bm{\mu},\bm{\gamma}}~\epsilon_{(\bm{0},\bm{\gamma} \oplus \bm{\gamma_{\mu}})}E(\bm{0},\bm{\gamma} \oplus \bm{\gamma_{\mu}}),\label{eqn:kraus_ops}
	\end{align}
	is the effective physical operator corresponding to syndrome $\bm{\mu}$.
    It follows from \eqref{eqn:css_log_paulis} that for $\bm{\gamma} \in \mathcal{C}_2^\perp / \mathcal{C}_1^\perp$,  $\epsilon_{(\bm{0},\bm{\gamma} \oplus \bm{\gamma_{\mu}})}E(\bm{0},\bm{\gamma} \oplus \bm{\gamma_{\mu}})$ is a logical Pauli $Z$, and \eqref{eqn:kraus_rep}, \eqref{eqn:kraus_ops} can be considered just in the logical space. 
    
    Note that the evolution described in \eqref{eqn:kraus_rep} works for any initial code state $\rho_1$ in step 1. The interaction between the diagonal gate $U_Z$ and the structure of CSS code in step 2 is captured in the generator coefficients $A_{\bm{\mu},\bm{\gamma}}$. The syndrome of the measurement in step 3 is reflected by the sum in \eqref{eqn:kraus_rep}, and the decoder chosen in step 4 is expressed by some logical Pauli $Z$ determined by $\bm{\gamma}_{\bm{\mu}}$ for each syndrome.
    
    To show $\{B_{\bm{\mu}}\}_{\bm{\mu}\in \F_2/\mathcal{C}_2^\perp}$ is the set of Kraus operators, we need to verify that
	\begin{equation}\label{eqn:kraus_prop}
	\sum_{\bm{\mu}\in\F_2^n/\mathcal{C}_2^\perp } B_{\bm{\mu}}^\dagger B_{\bm{\mu}} = I.
	\end{equation}
	We may simplify the summation as 
	\begin{align*}
	&\sum_{\bm{\mu}} 
	B_{\bm{\mu}}^\dagger B_{\bm{\mu}} \\
	&= \sum_{\bm{\mu}}\sum_{\bm{\gamma}}|A_{\bm{\mu},\bm{\gamma}}|^2 I \\
	&~+ \sum_{\bm{\mu}}\sum_{\bm{\gamma} \neq \bm{\gamma'}} \overline{A_{\bm{\mu},\bm{\gamma}}}A_{\bm{\mu},\bm{\gamma'}} ~\epsilon_{(\bm{0},\bm{\gamma} \oplus \bm{\gamma'})} E(\bm{0},\bm{\gamma} \oplus \bm{\gamma'})\\
	&= \sum_{\bm{\eta}} \epsilon_{(\bm{0},\bm{\eta})}
	 \left(\sum_{\bm{\mu}} \sum_{\bm{\gamma}} \overline{A_{\bm{\mu},\bm{\gamma}}}A_{\bm{\mu},\bm{\eta} \oplus \bm{\gamma}}\right) E(\bm{0},\bm{\eta}), 
	\numberthis \label{eqn:kr_sum_to_1}
	\end{align*}
	where the new variable $ \bm{\eta} = \bm{\gamma} \oplus \bm{\gamma'}\in \mathcal{C}_2^\perp /\mathcal{C}_1^\perp$. 
	In Theorem \ref{thm:GCs_prop}, we verify \eqref{eqn:kraus_prop} by showing that the coefficient of $E(\bm{0},\bm{0})=I$ is 1 
	and that the coefficients of $E(\bm{0},\bm{\eta}),~ \bm{\eta}\neq \bm{0}$ are all zero. Theorem \ref{thm:GCs_prop} describes the general property of generator coefficients, which mainly because quantum gates are unitaries.
    \begin{theorem}\label{thm:GCs_prop}
    Suppose that a $Z$-unitary gate $U_Z=\sum_{v\in \F_2^n}f(\bm{v})E(\bm{0},\bm{v})$ induces generator coefficients $A_{{\bm{\mu}},{\bm{\gamma}}}$ on a CSS($X,\mathcal{C}_2;Z,\mathcal{C}_1^\perp$) code. If $\bm{\eta} \in \mathcal{C}_2^\perp /\mathcal{C}_1^\perp$, then 
	\begin{align}\label{eqn:sum_squre_one}
	\sum_{{\bm{\mu}} \in \F_2^n/\mathcal{C}_2^\perp} \sum_{\bm{\gamma} \in \mathcal{C}_2^\perp/\mathcal{C}_1^\perp} \overline{A_{\bm{\mu},\bm{\gamma}}} A_{\bm{\mu},\bm{\eta} \oplus {\bm\gamma}} = 
	\left\{ \begin{array}{lc}
	1, &  \text{ if } \bm{\eta} = \bm{0}, \\
	0, &  \text{ if } \bm{\eta} \neq \bm{0}.
	\end{array}\right.
	\end{align}
\end{theorem}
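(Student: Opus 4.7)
The plan is to substitute the definition \eqref{eqn:def_GC_UZ} of generator coefficients into the left-hand side of \eqref{eqn:sum_squre_one} and reduce everything to the unitarity identity \eqref{eqn:unitarycoefficients}. Write
\begin{align*}
\overline{A_{\bm{\mu},\bm{\gamma}}} A_{\bm{\mu},\bm{\eta}\oplus\bm{\gamma}}
 &= \!\!\sum_{\bm{z}\in \mathcal{C}_1^\perp+\bm{\mu}+\bm{\gamma}}\ \sum_{\bm{z'}\in \mathcal{C}_1^\perp+\bm{\mu}+\bm{\eta}\oplus\bm{\gamma}} \\
 &\quad\epsilon_{(\bm{0},\bm{z})}\epsilon_{(\bm{0},\bm{z'})}\overline{f(\bm{z})}f(\bm{z'}),
\end{align*}
and then use the change of variables $\bm{w} = \bm{z}\oplus\bm{z'}$ together with the character relation $\epsilon_{(\bm{0},\bm{z})}\epsilon_{(\bm{0},\bm{z'})} = (-1)^{(\bm{z}\oplus\bm{z'})\bm{y}^T} = \epsilon_{(\bm{0},\bm{w})}$.

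The key combinatorial observation is that the cosets $\mathcal{C}_1^\perp+\bm{\mu}+\bm{\gamma}$, indexed by $(\bm{\mu},\bm{\gamma})\in\F_2^n/\mathcal{C}_2^\perp\times\mathcal{C}_2^\perp/\mathcal{C}_1^\perp$, partition $\F_2^n$ via the telescope $\mathcal{C}_1^\perp\subset\mathcal{C}_2^\perp\subset\F_2^n$. So the triple sum over $(\bm{\mu},\bm{\gamma},\bm{z})$ collapses to a single sum over $\bm{z}\in\F_2^n$, with the coset of $\bm{\mu}\oplus\bm{\gamma}$ determined by $\bm{z}+\mathcal{C}_1^\perp$. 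Once $\bm{z}$ is fixed, the constraint $\bm{z'}\in\mathcal{C}_1^\perp+\bm{\mu}+\bm{\eta}\oplus\bm{\gamma}$ becomes $\bm{w}=\bm{z}\oplus\bm{z'}\in\bm{\eta}+\mathcal{C}_1^\perp$. Swapping the order of summation yields
\begin{equation*}
\sum_{\bm{\mu},\bm{\gamma}}\overline{A_{\bm{\mu},\bm{\gamma}}}A_{\bm{\mu},\bm{\eta}\oplus\bm{\gamma}}
= \sum_{\bm{w}\in\bm{\eta}+\mathcal{C}_1^\perp}\epsilon_{(\bm{0},\bm{w})}\sum_{\bm{z}\in\F_2^n}\overline{f(\bm{z})}f(\bm{z}\oplus\bm{w}).
\end{equation*}

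The inner sum equals $\delta_{\bm{w},\bm{0}}$ by (the complex conjugate of) \eqref{eqn:unitarycoefficients}, so the whole expression reduces to $\epsilon_{(\bm{0},\bm{0})}=1$ when $\bm{0}\in\bm{\eta}+\mathcal{C}_1^\perp$ and to $0$ otherwise. Since $\bm{\eta}$ is a coset representative of $\mathcal{C}_1^\perp$ in $\mathcal{C}_2^\perp$, the condition $\bm{0}\in\bm{\eta}+\mathcal{C}_1^\perp$ is equivalent to $\bm{\eta}=\bm{0}$, giving exactly the two cases of \eqref{eqn:sum_squre_one}.

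The only step that requires care is the bookkeeping in the coset partition: one has to verify cleanly that iterating $\bm{\mu}\in\F_2^n/\mathcal{C}_2^\perp$, then $\bm{\gamma}\in\mathcal{C}_2^\perp/\mathcal{C}_1^\perp$, then $\bm{z}\in\mathcal{C}_1^\perp+\bm{\mu}+\bm{\gamma}$ hits each $\bm{z}\in\F_2^n$ exactly once, and that the companion variable $\bm{z'}$ then really does range over the translate $\bm{z}\oplus(\bm{\eta}+\mathcal{C}_1^\perp)$. Everything else is mechanical.
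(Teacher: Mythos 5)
Your proof is correct and follows essentially the same route as the paper's: substitute the definition of $A_{\bm{\mu},\bm{\gamma}}$, change variables to $\bm{w}=\bm{z}\oplus\bm{z'}\in\mathcal{C}_1^\perp+\bm{\eta}$, observe that the triple sum over $(\bm{\mu},\bm{\gamma},\bm{z})$ sweeps $\bm{z}$ exactly once over $\F_2^n$ because the cosets partition $\F_2^n$, and invoke the unitarity identity \eqref{eqn:unitarycoefficients} to collapse the inner sum to $\delta_{\bm{w},\bm{0}}$. The only cosmetic difference is which of $f(\bm{z})$, $f(\bm{z}\oplus\bm{w})$ carries the complex conjugate, which is immaterial since $U_Z U_Z^\dagger = U_Z^\dagger U_Z = I$ gives both forms of the orthogonality relation.
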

\begin{proof}
	
  See Appendix \ref{subsec:proof_Kraus_ver}.
\end{proof}
    We conclude that the Kraus operators describing the action of $U_Z$ on a CSS code are given by \eqref{eqn:kraus_ops}.
    
    When $U_Z = R_Z(\theta)$, the generator coefficients $A_{\bm{\mu},\bm{\gamma}}$ take the form \eqref{eqn:def_gc}. Consider now a one-logical-qubit system, where one of the pair $(A_{\bm{\mu}=\bm{0},\bm{\gamma}=\bm{0}}(\theta), A_{\bm{\mu}=\bm{0},\bm{\gamma}\neq \bm{0}}(\theta))$ is real and the other is pure imaginary. Then the logical qubit is rotated with angle $\theta_L$ and we can express $\theta_L$ in terms of the physical rotation angle $\theta$ \cite{debroy2021optimizing} as 
    \begin{equation}
        \theta_L(\theta) = 2\tan^{-1}\left(\imath\frac{A_{\bm{\mu} = \bm{0},\bm{\gamma}\neq \bm{0}}(\theta)}{A_{\bm{\mu} =\bm{0},\bm{\gamma}=\bm{0}}(\theta)}\right).
    \end{equation}
    See Appendix \ref{sec:log_rot} for details. We again take the Steane code as an example, substitute the values from Table \ref{tab:7_1_3_gcs} and obtain the logical rotation angle 
    \begin{align*}\label{eqn:steane_trivial_angles}
        \theta_L(\theta) 
        &= 2\tan^{-1}\left(\frac{\sin\frac{7\theta}{2} - 7\sin\frac{\theta}{2}}{\cos\frac{7\theta}{2}+7\cos\frac{\theta}{2}}\right) \\
        &= -\frac{28}{15}\theta^3 + O(\theta^5). \numberthis
    \end{align*}
    Figure \ref{fig:Strane_Rot} plots $\theta_L(\theta)$ displaying third-order convergence about $\theta=0$. Note that $\theta_L(\frac{\pi}{4}) = -\frac{\pi}{4}$. In Appendix \ref{sec:MSD_Steane}, we explain how $R_Z(\frac{\pi}{4})$ supports magic state distillation with the aid of a logical Phase gate. When $\theta < \frac{\pi}{4}$, $\theta_L < \theta$, and the Steane code might be applied to convert 7 noisy copies of the state $(\ket{0}+e^{\imath \theta}\ket{1})/\sqrt{2}$ into 1 copy of the state $(\ket{0}+e^{\imath \theta_L}\ket{1})/\sqrt{2}$ with higher fidelity.
    
    \begin{figure}[h!]
        \centering
        \includegraphics[scale=0.2]{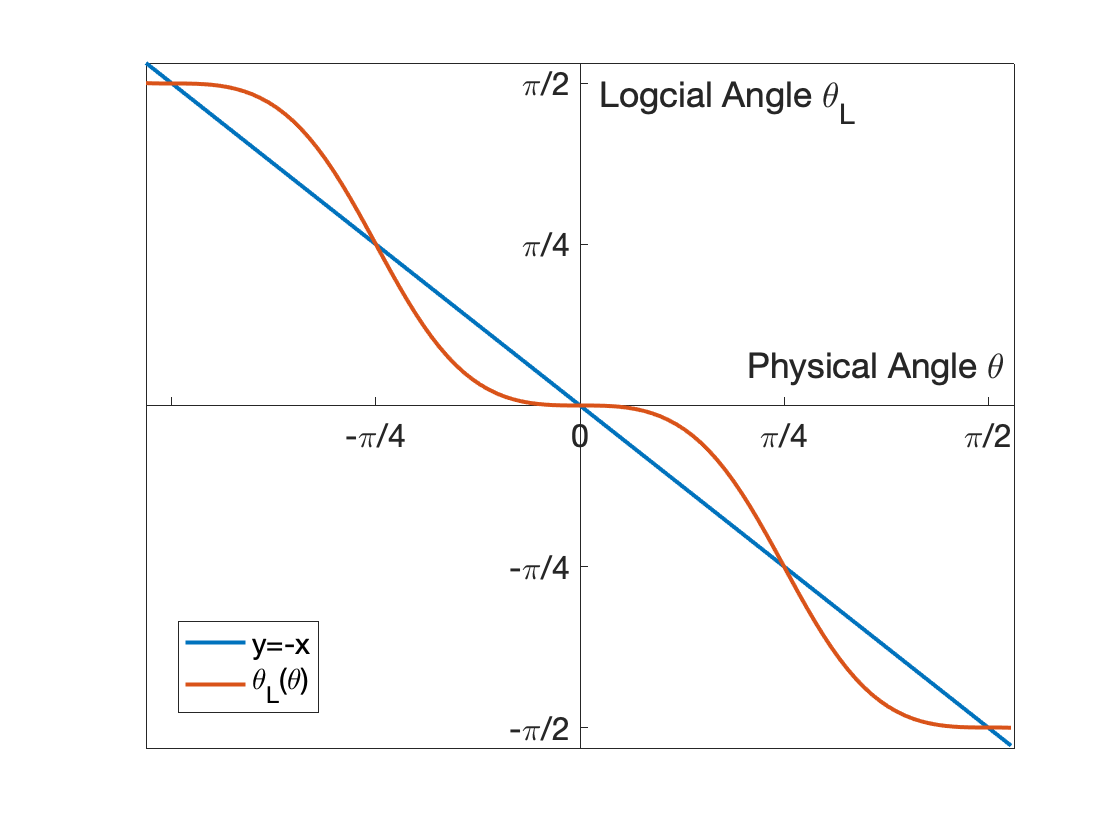}
        \caption{The Steane Code: the logical angle $\theta_L$ in terms of physical angle $\theta$, assuming we observe the trivial syndrome.}
      \label{fig:Strane_Rot}
    \end{figure}
    We now compute all Kraus operators induced by $R_Z(\theta)$ acting on the Steane code.
    \addtocounter{example}{-2}
    \begin{example}[continued]
    \label{examp8}
    \normalfont
    We take the data in Table \ref{tab:7_1_3_gcs} and substitute $\theta=\frac{\pi}{4}$ to obtain
    \begin{align}
        A_{\bm{0}, \bm{0}} \left(\frac{\pi}{4}\right)&= \frac{3}{4}\cos\frac{\pi}{8},~ 
        A_{\bm{0}, \bm{1}} \left(\frac{\pi}{4}\right)= \frac{3}{4} \imath \sin\frac{\pi}{8},\nonumber\\
        &A_{\bm\mu\neq\bm{0}, \bm{0}} \left(\frac{\pi}{4}\right)= -\frac{1}{4} \imath \sin\frac{\pi}{8},~\nonumber\\
        &A_{\bm\mu\neq\bm{0}, \bm{1}} \left(\frac{\pi}{4}\right)= -\frac{1}{4}\cos\frac{\pi}{8}. \label{eqn:Steane_Kraus}
    \end{align}
    We assume $\bm{\gamma}_{\bm{\mu}}=\bm{0}$ for all $\bm{\mu}$, and use these generator coefficients to
    compute the Kraus operators
    \begin{align}
        B_{\bm{\mu}=\bm{0}}\left(\frac{\pi}{4}\right) 
        &= \frac{3}{4}\cos\frac{\pi}{8}\bar{I} + \frac{3}{4}\imath\sin\frac{\pi}{8}\bar{Z} 
        \equiv\frac{3}{4}\bar{T^\dagger},\\
        B_{\bm{\mu}\neq\bm{0}}\left(\frac{\pi}{4}\right) 
        &= -\frac{1}{4} \imath \sin\frac{\pi}{8} \bar{I} - \frac{1}{4}\cos\frac{\pi}{8}\bar{Z} 
        \equiv \frac{1}{4}\bar{Z}\bar{T^\dagger},
    \end{align}
    which describe the average logical channel corresponds to the transversal $T$ gate. Reichardt \cite{reichardt2005quantum} discussed the $\llbr 7,1,3 \rrbr $ Steane code in magic state distillation. The computed average logical channel makes it clear that we can choose proper corrections based on syndromes ($\bm{\gamma}_{\bm{\mu}}=\bar{Z}$ for $\bm{\mu}\neq \bm{0}$) to obtain the logical operator $T^\dagger$ from all the syndromes. 
    \end{example}
    
    Note that the Steane code is not a triorthogonal code \cite{bravyi2012magic}, but it can be used in state distillation \cite{reichardt2005quantum}. The generator coefficients framework may help to characterize codes that are not preserved by transversal $T$ but realize a logical $T$ gate when the trivial syndrome is observed. Recently, Vasmer and Kubica \cite{vasmer2021morphing} introduced a new $\llbr 10,1,2\rrbr$ code by \emph{morphing} the $\llbr 15,1,3\rrbr$ quantum Reed-Muller code \cite{knill1996accuracy,bravyi2005universal} and the $\llbr 8,3,2\rrbr$ color code \cite{campbell2017unified}. It provides the first protocol in state distillation that supports a fault-tolerant logical $T$ gate from a diagonal physical gate that is not transversal $T$. The generator coefficient framework applies to arbitrary diagonal gates, and may facilitate finding more examples of distillation.
    
    When $U_Z$ is a QFD gate, the Kraus operators can be derived in the same way. Table \ref{tab:4_2_2_gcs_CZ} in Example \ref{examp6} implies that the $\llbr 4,2,2 \rrbr $ code is preserved by C$Z^{\otimes 2}$ and that the induced logical operator is $Z_{1}^L \circ$ C$Z^L$.

	\subsection{Probability of Observing Different $X$-Syndromes}	
	\label{subsec:prob}
	The Kraus operators derived in Section \ref{subsec:Kraus} describe logical evolution conditioned on different outcomes from stabilizer measurement, and it is natural to calculate the probability of observing different syndromes $\bm{\mu}$. 
	Generator coefficients provide a means of calculating these probabilities that illuminates dependence on the initial state, and we will provide examples where the initial state and the outcome of syndrome measurement are entangled. 
	
	Consider a CSS($X,\mathcal{C}_2;Z,\mathcal{C}_1^\perp$) code with codespace $\mathcal{V}(\mathcal{S})$. For any fixed $\ket{\phi}\in \mathcal{V}(\mathcal{S})$ 
    , we first apply $U_Z$, and then measure with projectors $\{\Pi_{\mathcal{S}_{X({\bm{\mu}})}}\}_{\bm{\mu} \in \F_2^n/\mathcal{C}_2^\perp}$, where $\Pi_{\mathcal{S}_{X({\bm{\mu}})}} = \frac{1}{|\mathcal{C}_2|}\sum_{\bm{a}\in \mathcal{C}_2} (-1)^{\bm{a}\bm{\mu}^T} \epsilon_{(\bm{a},\bm{0)}} E(\bm{a},\bm{0})$. Then the probability of obtaining a syndrome $\bm{\mu} \in \F_2^n/\mathcal{C}_2^\perp$ is
	\begin{equation}
	p_{\bm{\mu}}\left(\ket{\phi}\right) = \bra{\phi} U_Z^\dagger \Pi_{\mathcal{S}_X(\bm{\mu})} U_Z \ket{\phi}.
	\end{equation}
	It follows from equation \eqref{eqn:Pi_SZ_U_Z} that
	\begin{align*}
	U_Z\ket{\phi} 
	&= U_Z \Pi_{\mathcal{S}_Z}\ket{\phi} \\
	&= \sum_{\bm{\mu}}\sum_{\bm{\gamma}} A_{\bm{\mu},\bm{\gamma}} ~\epsilon_{(\bm{0}, \bm{\mu} \oplus \bm{\gamma})}E(\bm{0},\bm{\mu} \oplus \bm{\gamma} ) \ket{\phi}, \numberthis
	\end{align*} 
	and similarly
	\begin{align*}
	\bra{\phi}U_Z^\dagger 
	&= \bra{\phi} \Pi_{\mathcal{S}_Z}U_Z^\dagger \\
	&= \bra{\phi}\sum_{\bm{\mu}}\sum_{\bm{\gamma}} \overline{A_{\bm{\mu},\bm{\gamma}}}~ \epsilon_{(0,\bm{\mu} \oplus \bm{\gamma})}E(0,\bm{\mu} \oplus \bm{\gamma}). \numberthis
	\end{align*}
	For any fixed $\bm{\mu_0} \in \F_2^n/\mathcal{C}_2^\perp$, since $\Pi_{\mathcal{S}_X(\bm{\mu_0})}\Pi_{\mathcal{S}_X(\bm{\mu_0})} = \Pi_{\mathcal{S}_X(\bm{\mu_0})}$, we have
	\begin{equation}\label{eqn:p_mu_0}
	p_{\bm{\mu_0}} 
    =  \bra{\phi} \Pi_{\mathcal{S}_Z} U_Z^\dagger \Pi_{\mathcal{S}_X(\bm{\mu_0})}  \Pi_{\mathcal{S}_X(\bm{\mu_0})} U_Z\Pi_{\mathcal{S}_Z} \ket{\phi}.
	\end{equation}
    It follows from the simplification in Appendix \ref{subsec:derv_simplify} of the later half in \eqref{eqn:p_mu_0} that  
	\begin{align*}
	&\Pi_{\mathcal{S}_X(\bm{\mu_0})} U_Z\Pi_{\mathcal{S}_Z} \ket{\phi}\\
	& = \frac{1}{|\mathcal{C}_2|}\sum_{\bm{\mu}}\sum_{\bm{\gamma}} A_{\bm{\mu},\bm{\gamma}} ~\epsilon_{(\bm{0},\bm{\bm{\mu} \oplus \bm{\gamma}})}E(\bm{0},\bm{\mu} \oplus \bm{\gamma}) s(\bm{a})  \ket{\phi},\numberthis \label{eqn:later_half}
	\end{align*}
	where $s(\bm{a})\coloneqq \sum_{\bm{a}\in \mathcal{C}_2} (-1)^{\bm{a} (\bm{\mu}\oplus\bm{\mu_0})^T} $.
    Note that since $\bm{a} \in \mathcal{C}_2$ and $\bm{\mu} \oplus \bm{\mu_0} \in \F_2^n /\mathcal{C}_2^\perp$, the inner summation is nonzero only when $\bm{\mu}=\bm{\mu_0}$ so that
	\begin{align*}
	&\Pi_{\mathcal{S}_X(\mu_0)} U_Z \Pi_{\mathcal{S}_Z} \ket{\phi} =\\ &~~~~~~~~~~~\sum_{\bm{\gamma} \in \mathcal{C}_2^\perp/\mathcal{C}_1^\perp} A_{\bm{\mu_0},\bm{\gamma}}~ \epsilon_{(\bm{0},\bm{\mu_0}\oplus\bm{\gamma})}E(\bm{0},\bm{\mu_0} \oplus \bm{\gamma}) \ket{\phi}.\numberthis
	\end{align*}
	Similarly, we have 
	\begin{align*}
	&\bra{\phi} \Pi_{\mathcal{S}_Z} U_Z^\dagger \Pi_{\mathcal{S}_X(\mu_0)}  =\\ &~~~~~~~~~~~\bra{\phi}\sum_{\bm{\gamma} \in \mathcal{C}_2^\perp/\mathcal{C}_1^\perp} \overline{A_{\bm{\mu_0},\bm{\gamma}}}~ \epsilon_{(\bm{0},\bm{\mu_0} \oplus \bm{\gamma})}E(\bm{0},\bm{\mu_0}\oplus\bm{\gamma}) .\numberthis
	\end{align*}
	Thus, the probability of observing the syndrome $\bm{\mu}$ can be written as 
	\begin{align*}
	&p_{\bm{\mu}}\left(\ket{\phi}\right) 
	= \sum_{\bm{\gamma}}|A_{\bm{\mu},\bm{\gamma}}|^2 +\\
	&\sum_{\bm{\gamma} \neq \bm{\gamma'}}\overline{A_{\bm{\mu},\bm{\gamma}}} A_{\bm{\mu},\bm{\gamma'}}\bra{\phi} \epsilon_{(\bm{0},\bm{\gamma}\oplus\bm{\gamma'})} E(\bm{0},\bm{\gamma}\oplus\bm{\gamma'})\ket{\phi}. \numberthis \label{eqn:prob_mu}
	\end{align*}
	Note that only the second term depends on the initial state.    If some $\ket{\phi_i} \in \{\ket{+},\ket{-}\}$ in the initial state $\ket{\phi}=\ket{\overline{\phi_1\otimes\cdots\otimes\phi_k}}$, then the second term (the cross terms) in \eqref{eqn:prob_mu} vanishes since every $\epsilon_{(\bm{0},\bm{\gamma}\oplus\bm{\gamma'})} E(\bm{0},\bm{\gamma}\oplus\bm{\gamma'})$ with $\bm{\gamma} \neq \bm{\gamma'}$ is some nontrivial Pauli $Z$ logical.
	Note that it follows from Theorem \ref{thm:GCs_prop} that $\sum_{\bm{\mu}} \sum_{\bm{\gamma}}|A_{\bm{\mu},\bm{\gamma}}|^2=1$. Since $\sum_{\bm{\mu}}p_{\mu}(\ket{\phi})=1$ for any initial state $\ket{\phi}\in \mathcal{V}(\mathcal{S})$, it follows that the sum of the second term over all the $X$-syndromes is 0, that is, \begin{equation}
	    \sum_{\bm{\mu}}\sum_{\bm{\gamma} \neq \bm{\gamma'}}\overline{A_{\bm{\mu},\bm{\gamma}}} A_{\bm{\mu},\bm{\gamma'}}\bra{\phi} \epsilon_{(\bm{0},\bm{\gamma}\oplus\bm{\gamma'})} E(\bm{0},\bm{\gamma}\oplus\bm{\gamma'})\ket{\phi}=0.
	\end{equation}
	
	Note that Pauli $Z$ logicals only change signs in the $\ket{0}\&\ket{1}$ basis. If the second term is the same for all $\ket{0}\&\ket{1}$ computational basis states in the codespace, then the probability of observing different syndromes is the same for different initial states $\ket{\phi}$. If not, the probabilities depend on the initial state, and encode the mutual information between initial state and syndrome measurement.
	In these circumstances, 
	we cannot find a recovery operator for $U_Z$ that is good for the entire codespace. An important special case is when a decoherence-free subspace is embedded in the codespace (useful for passive control of coherent errors $U_Z=R_Z(\theta)$).
	
	We now introduce two examples to illustrate how \eqref{eqn:prob_mu} provides insight into invariance of the codespace, the probability of success in magic state distillation, and existence of an embedded decoherence-free subspace.
	Continuing Example \ref{examp1_in_intro} below, we compute the probabilities of observing different syndromes for the $\llbr 7,1,3 \rrbr $ Steane code and discuss implications. Continuing Example \ref{examp10} below, we demonstrate that by changing signs of $Z$-stabilizers in the $\llbr 4,2,2 \rrbr $ code, we can switch from the case where the second term is the same for every initial state to the case of an embedded decoherence-free subspace.
	\addtocounter{example}{-1}
	\begin{example}[continued]
		\normalfont
		The Steane $\llbr 7,1,3 \rrbr $ code has only one logical qubit, and we let $\ket{\overline{0}}$, $\ket{\overline{1}}$ denote the the two computational basis states. Given a syndrome $\bm{\mu}$, we observe that one of the generator coefficients $A_{\bm{\mu},\bm{\gamma}=\bm{0}}(\theta)$, $A_{\bm{\mu},\bm{\gamma}\neq \bm{0}}(\theta)$, is real and the other is purely imaginary, so that the crossterms vanish in \eqref{eqn:prob_mu}. 
		Hence, the probabilities of observing different syndromes are constant for different initial states and are given by
		\begin{align*}
		p_{\bm{\mu}=\bm{0}}(\ket{\overline{0}},\theta)  = p_{\bm{\mu}=\bm{0}}(\ket{\overline{1}},\theta) 
		&= \frac{1}{32} \left(7\cos 4\theta + 25\right), \\
		p_{\bm{\mu} \neq \bm{0}}(\ket{\overline{0}},\theta)  = p_{\bm{\mu} \neq \bm{0}}(\ket{\overline{1}},\theta) 
		&= \frac{1}{32} \left(1-\cos 4\theta \right). \numberthis \label{eqn:steane_probs}
		\end{align*}
		It is not hard to verify that $\sum_{\bm{\mu}} p_{\bm{\mu}}(\ket{\phi},\theta) = \frac{1}{32} \left(7\cos 4\theta + 25\right) + \frac{7}{32} \left(1-\cos 4\theta \right) = 1$ for all $\ket{\phi} \in \mathcal{V}(\mathcal{S})$ and for all $\theta$.
		Figure \ref{fig:7_1_3} plots the probability of observing the trivial syndrome as a function of the rotation angle. 
		\begin{figure}[h!]
			\centering
			\includegraphics[scale=0.2]{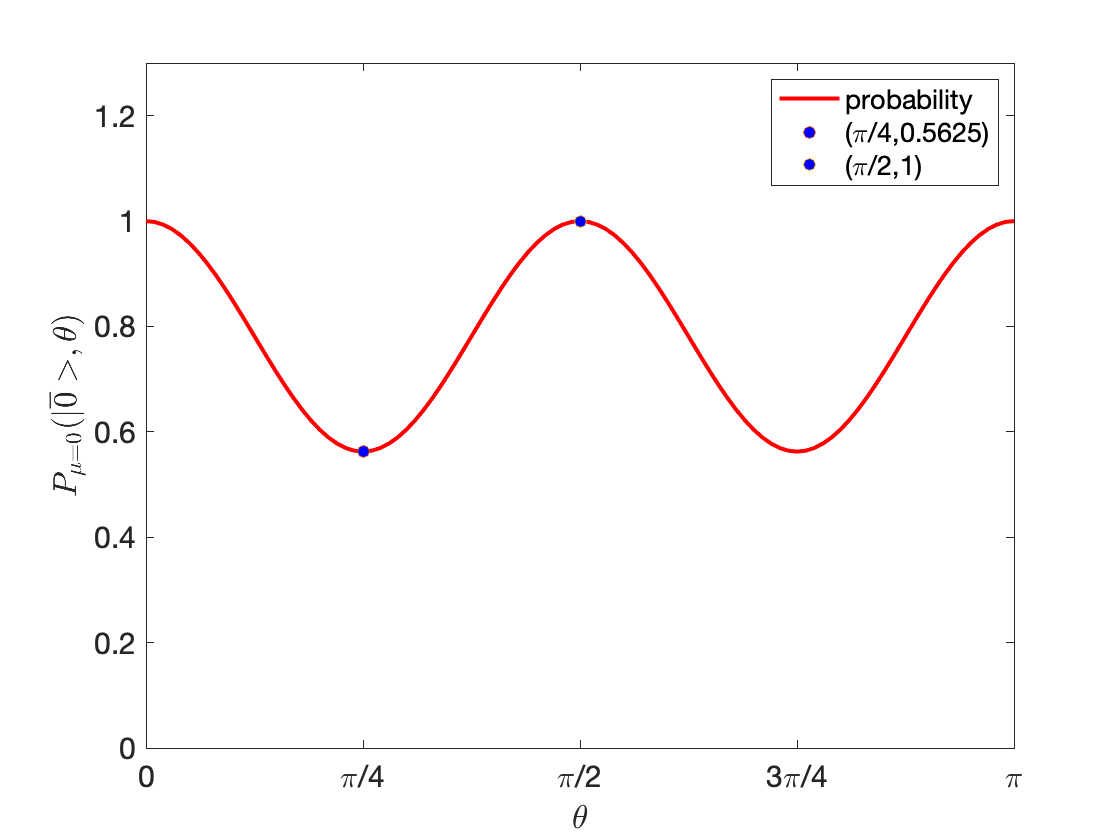}
			\caption{The probability of observing the trivial syndrome for the Steane Code under $R_Z(\theta)$ for varying physical angles $\theta$.}
			\label{fig:7_1_3}
		\end{figure}
		
		We observe from Figure \ref{fig:7_1_3} that when $\theta$ is a multiple of $\frac{\pi}{2}$, $p_{\bm{\mu}=\bm{0}}(\ket{\phi})=1$ for all the states $\ket{\phi}$ in the Steane codespace $\mathcal{V}(\mathcal{S})$, which implies that $R_Z(\frac{k\pi}{2})$ preserves $\mathcal{V}(\mathcal{S})$. The angle $\theta = \frac{\pi}{4} + \frac{k\pi}{2}$ minimizes the probability of obtaining the zero syndrome and this minimum value relates to the probability of success in magic state distillation. Substituting $\theta = \frac{\pi}{4}$, we obtain  
		$
		p_{\bm{\mu}=\bm{0}}\left(\ket{\phi},\frac{\pi}{4}\right)  = \frac{9}{16} \text{, and } 
		p_{\bm{\mu} \neq\bm{0}}\left(\ket{\phi},\frac{\pi}{4}\right)  = \frac{1}{16},
		$
		for all $\ket{\phi} \in \mathcal{V}(\mathcal{S})$.
	
	\end{example}
	\begin{example}[continued]
	\label{examp10}
		\normalfont
		Recall the $\llbr 4,2,2 \rrbr $  CSS($X,\mathcal{C}_2 = \{\bm{0},\bm{1}\};Z,\mathcal{C}_1^\perp = \mathcal{C}_2$) code with two different choices of signs defined by the character vectors $\bm{y} = \bm{0}$ (all positive signs), and $\bm{y'} = [0,0,0,1]$ (negative $Z^{\otimes 4}$ in the stabilizer group). 
		
		\begin{table}[h!]
		    \centering
		    \caption{Generator coefficients $A_{\bm{\mu},\bm{\gamma}}(\theta)$ for $R_Z(\theta)$ of the $\llbr 4,2,2 \rrbr $ code with all positive signs ($\bm{y} = \bm{0}$).}
		    \renewcommand{\arraystretch}{1.3} 
		    \begin{tabular}{|c|c|c|}
		       \hline
                &
		        $\bm{\gamma}=\bm{0}$ & $\bm{\gamma}\neq \bm{0}$
		        \\
		        \hline
		          $\bm{\mu}=\bm{0}$ & 
		          $\frac{1}{4}\left(\cos 2\theta + 3\right)$ & 
		          $\frac{1}{4}\left(\cos 2\theta -1\right)$\\
		          \hline
		        $\bm{\mu}=[1,0,0,0]$ & \multicolumn{2}{c|}{$-\frac{1}{4}\imath \sin 2\theta$} \\
		        \hline 
		    \end{tabular}
		    \label{tab:4_2_2_gcs+}
		\end{table}
		Table \ref{tab:4_2_2_gcs+} lists the generator coefficients for all positive signs ($\bm{y}=\bm{0}$). We now use the data to calculate the probabilities of observing different syndromes as described in \eqref{eqn:prob_mu}. For the encoded $\ket{\overline{00}}$ state, we have 
		\begin{align*}
		p_{\bm{\mu} =\bm{0}}(\ket{\overline{00}},\theta) 
		&= \frac{1}{2}\cos 4\theta + \frac{1}{2},\label{eqn:prob_4_2_2+}  \\ 
		p_{\bm{\mu}=[0,0,0,1]}(\ket{\overline{00}},\theta) 
		&= -\frac{1}{2}\cos 4\theta + \frac{1}{2}.\numberthis
		\end{align*}
		\begin{table*}[ht!]
		    \centering
		    \caption{Generator coefficients $A_{\bm{\mu},\bm{\gamma}}(\theta)$ for $R_Z(\theta)$ of the $\llbr 4,2,2 \rrbr $ code with negative $Z^{\otimes 4}$ stabilizer ($\bm{y} = [0,0,0,1]$).}
		    \renewcommand{\arraystretch}{1.3} 
			\begin{tabular}{|c|c|c|c|c|}
		    \hline
		         \diagbox{$X$-syndromes}{$Z$-logicals} &
		        $\bm{\gamma}=\bm{0}$ 
		        & $\bm{\gamma_1} = [0,0,1,1]$
		        & $\bm{\gamma_2} = [0,1,1,0]$
		        & $\bm{\gamma_3} = \bm{\gamma_1} \oplus \bm{\gamma_2} $\\
		        \hline
		          $\bm{\mu}=\bm{0}$ & 
		          $\cos \theta $ & 0 & 0 & 0 \\
		          \hline
		        $\bm{\mu}=[1,0,0,0]$ & 
		        $-\frac{1}{2}\imath \sin \theta$ &
		        $\frac{1}{2}\imath \sin \theta$ &
		        $-\frac{1}{2}\imath \sin \theta$ &
		        $-\frac{1}{2}\imath \sin \theta$
		        \\
		        \hline
		    \end{tabular}
		   \label{tab:4_2_2_gcs-}
		\end{table*}
		The remaining three states have the same probabilities of observing $X$-syndromes:
		\begin{align*}
		&p_{\bm{\mu}=\bm{0}}(\ket{\phi}\in \{\ket{\overline{01}},\ket{\overline{10}},\ket{\overline{11}} \},\theta) = \\
		& ~~~~~~~~~~~~~~
		\frac{1}{8}(\cos 4\theta + 7) +\frac{1}{8}\left(1-\cos 4\theta\right) = 1,\numberthis \\
		&p_{\bm{\mu} = [1,0,0,0]}(\ket{\phi}\in \{\ket{\overline{01}},\ket{\overline{10}},\ket{\overline{11}} \},\theta) =\\	
		& ~~~~~~~~~~~~~~
		\frac{1}{8}(1-\cos 4\theta)  -\frac{1}{8}\left(1-\cos 4\theta\right) = 0. \numberthis
		\end{align*}
		If the initial state is among $\ket{\overline{01}}, \ket{\overline{10}}, \ket{\overline{11}}$, then it evolves within the codespace for all angles $\theta$, which implies that $\mathcal{F} \coloneqq \mathrm{span}(\ket{\overline{01}},\ket{\overline{10}},\ket{\overline{11}})$ forms a embedded decoherence-free subspace (DFS) inside the codespace \cite{coherent_noise}. 
		
		 \begin{figure}[h!]
			\centering
			\includegraphics[scale=0.2]{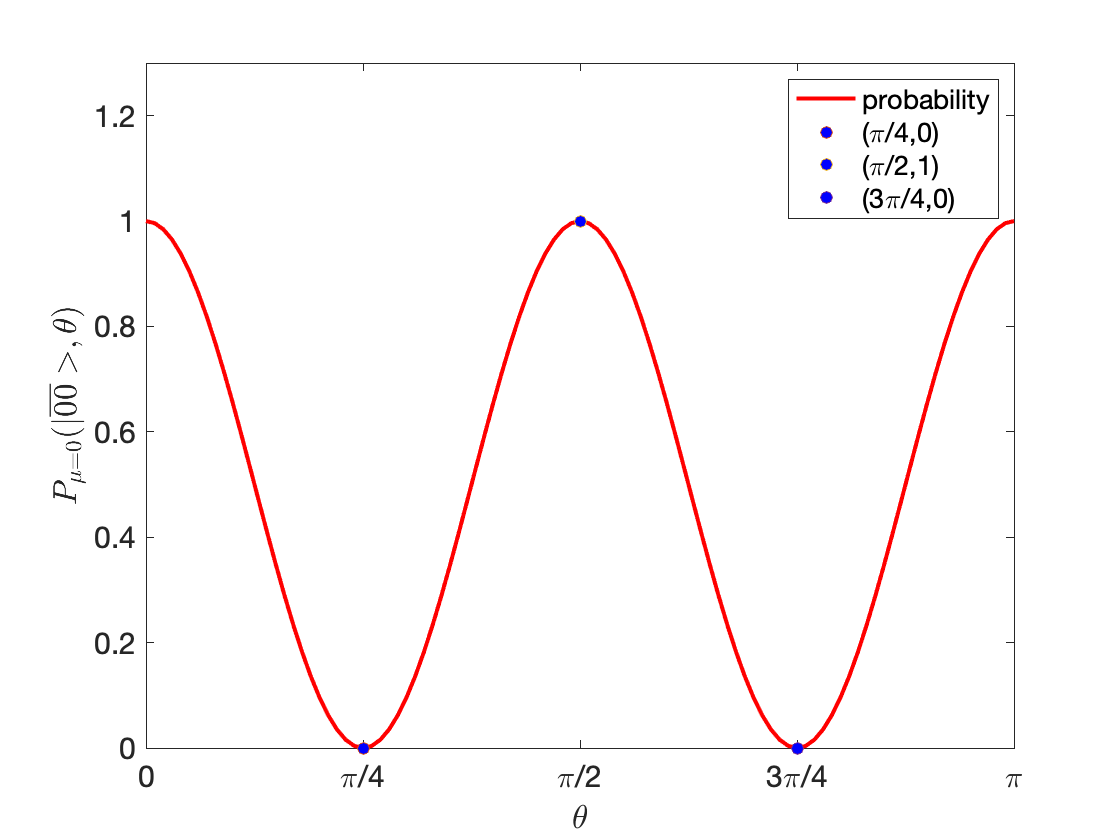}
			\caption{The $\llbr 4,2,2 \rrbr $ code with all positive stabilizers. The probability of observing the trivial syndrome for the initial encoded state $\ket{\overline{00}}$ under $R_Z(\theta)$ for varying physical angles $\theta$.}
			\label{fig:4_2_2+}
		\end{figure}
		Figure \ref{fig:4_2_2+} plots \eqref{eqn:prob_4_2_2+} for different physical angles $\theta$. 
        When $\theta=\frac{\pi}{4}+\frac{k\pi}{2}$ for some integer $k$, syndrome measurement acts as projection from $\mathcal{V}(\mathcal{S})$ to the embedded DFS, and we are able to learn whether the initial state was $\ket{\overline{00}}$; When $\theta=\frac{k\pi}{2}$ for some integer $k$, the measurement outcome is always the zero syndrome, which implies that $R_Z(\frac{\pi}{2})$ perserve the codespace and some logical operator is induced. The Kraus operators derived in \eqref{eqn:kraus_ops} imply that the induced logical operator is
        \begin{align*}
            B_{\bm{\mu} = \bm{0}} \left(\frac{\pi}{2}\right) 
            &= \sum_{\bm{\gamma}}A_{\bm{0},\bm{\gamma}}\left(\frac{\pi}{2}\right)E(\bm{0},\bm{\gamma})\\ 
            &\equiv \left(\bar{Z} \otimes \bar{Z}\right) \circ \overline{CZ}.\numberthis
        \end{align*}
        Next, we compute the generator coefficients for the same $\llbr 4,2,2 \rrbr $ code but with nontrivial signs (character vector $\bm{y}=[0,0,0,1]$).
	
     It follows from \eqref{eqn:prob_mu} and Table \ref{tab:4_2_2_gcs-} that 
        \begin{align*}
    	&p_{\bm{\mu}=\bm{0}}(\ket{\phi} \in \{\ket{\overline{00}}, \ket{\overline{01}}, \ket{\overline{10}}, \ket{\overline{11}}\},\theta) =(\cos\theta)^2,\\
    	&p_{\bm{\mu} = [1,0,0,0]}(\ket{\phi} \in \{\ket{\overline{00}}, \ket{\overline{01}}, \ket{\overline{10}}, \ket{\overline{11}}\},\theta) \\
    	&~~~~~~~~~~~~~~~~~~~~~~~~~~~~~~~~~~~~~~~~~~~~~~~= (\sin\theta)^2.
		\end{align*}
		In this case, the probabilities are independent of the different initial states and there is no embedded decoherence-free subspace in the codespace. This example shows that for the same code, state evolution depends very strongly on signs of $Z$-stabilizers. 
	\end{example}
    In prior work \cite{coherent_noise}, we have derived criteria that ensure a stabilizer code is a DFS, and \eqref{eqn:prob_mu} opens the door to developing criteria for embedded DFS, in which the second term acts as an amendment to the first term and implies the probability is either $0$ or $1$ for a subset of initial $\ket{0}\&\ket{1}$-basis state in the codespace.
	
	
	\section{CSS codes Preserved by $U_Z$}
	\label{sec:nece_suff_cond_invariant}
	When a CSS code is preserved by a unitary $U_Z$, the probability of observing the zero syndrome is 1, and the Kraus operators capture evolution of logical states. Theorem \ref{thm:preserved_by_Uz} provides necessary and sufficient conditions for a unitary $U_Z$ to preserve a CSS code.
	
	We prove Theorem \ref{thm:preserved_by_Uz} by writing $\Pi_{\mathcal{S}}$ as a product $\Pi_{\mathcal{S}}=\Pi_{\mathcal{S}_X}\Pi_{\mathcal{S}_Z}$, where $U_Z$ commutes with the $Z$-projector $\Pi_{\mathcal{S}_Z}$, and we then translate commutativity to conditions on generator coefficients. We generalize these conditions to arbitrary stabilizer codes in Appendix \ref{sec:stab_gcf}. 
	
		\begin{theorem}\label{thm:preserved_by_Uz}
		Let CSS($X,\mathcal{C}_2=\langle  c_i : 1\le i\le k_2 \rangle;Z,\mathcal{C}_1^\perp=\langle  d_j : 1\le j\le n-k_1 \rangle$) be an $\left[\left[n, k_1-k_2, d\right]\right]$ CSS code $\mathcal{V}(\mathcal{S})$ defined by the stabilizer group $\mathcal{S}$ with code projector $\Pi_{\mathcal{S}}$. Then the unitary $U_Z=\sum_{\bm{v}\in \F_2^n}f(\bm{v})E(\bm{0},\bm{v})$ preserves $\mathcal{V}(\mathcal{S})$ (i.e. $U_Z\Pi_\mathcal{S} U_Z^\dagger=\Pi_\mathcal{S}$) if and only if 
		\begin{equation}\label{eqn:preserved_by_Uz}
		\sum_{\bm{\gamma}\in \mathcal{C}_2^\perp/\mathcal{C}_1^\perp} |A_{\bm{0},\bm{\gamma}}|^2=1.
		\end{equation}
	\end{theorem}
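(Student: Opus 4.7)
The plan is to translate the invariance $U_Z\Pi_\mathcal{S} U_Z^\dagger = \Pi_\mathcal{S}$ into a vanishing condition on the generator coefficients via the decomposition \eqref{eqn:kraus_simplify}, and then repackage this vanishing as a single squared-modulus identity using Theorem \ref{thm:GCs_prop}. The whole argument yields the iff simultaneously, with no separate sufficiency/necessity arguments required.

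First I would reduce invariance to the one-sided containment $U_Z\mathcal{V}(\mathcal{S}) \subseteq \mathcal{V}(\mathcal{S})$: because $U_Z$ is unitary and hence an isometry, the image of the codespace has the same dimension as the codespace, so any containment must be an equality; this upgrades to $U_Z\Pi_\mathcal{S} U_Z^\dagger = \Pi_\mathcal{S}$. Using the resolution of identity $\sum_{\bm{\mu}} \Pi_{\mathcal{S}_X(\bm{\mu})} = I$ together with the fact that $U_Z$ commutes with $\Pi_{\mathcal{S}_Z}$, the containment is equivalent to $\Pi_{\mathcal{S}_X(\bm{\mu})} U_Z \Pi_\mathcal{S} = 0$ for every $\bm{\mu} \neq \bm{0}$.

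Next I would apply \eqref{eqn:kraus_simplify}, namely $U_Z\Pi_\mathcal{S} = \sum_{\bm{\mu}} \Pi_{\mathcal{S}_X(\bm{\mu})} \sum_{\bm{\gamma}} A_{\bm{\mu},\bm{\gamma}}\, q(\bm{\mu},\bm{\gamma})$, together with pairwise orthogonality of the $X$-syndrome projectors, to reduce the vanishing to $\sum_{\bm{\gamma}} A_{\bm{\mu},\bm{\gamma}} \Pi_{\mathcal{S}_X(\bm{\mu})}\, q(\bm{\mu},\bm{\gamma}) = 0$ for every $\bm{\mu}\neq\bm{0}$. Unpacking $q(\bm{\mu},\bm{\gamma})$ and commuting $E(\bm{0},\bm{\mu}\oplus\bm{\gamma})$ past $\Pi_{\mathcal{S}_X(\bm{\mu})}$ (which shifts the $X$-syndrome by $\bm{\mu}$ back to $\bm{0}$ because $\bm{\gamma}\in\mathcal{C}_2^\perp$), the terms indexed by different $\bm{\gamma}\in\mathcal{C}_2^\perp/\mathcal{C}_1^\perp$ differ by distinct logical $Z$ Paulis acting on $\mathcal{V}(\mathcal{S})$ and are therefore linearly independent. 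Hence the vanishing condition forces $A_{\bm{\mu},\bm{\gamma}} = 0$ for every $\bm{\mu}\neq\bm{0}$ and every $\bm{\gamma}$. Theorem \ref{thm:GCs_prop} at $\bm{\eta} = \bm{0}$ supplies the global identity $\sum_{\bm{\mu}}\sum_{\bm{\gamma}}|A_{\bm{\mu},\bm{\gamma}}|^2 = 1$, so the vanishing of all $A_{\bm{\mu},\bm{\gamma}}$ for $\bm{\mu}\neq\bm{0}$ is equivalent to $\sum_{\bm{\gamma}}|A_{\bm{0},\bm{\gamma}}|^2 = 1$. This closes the chain of equivalences.

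The main technical obstacle is the linear-independence step: confirming that the operators $\Pi_{\mathcal{S}_X(\bm{\mu})}\, q(\bm{\mu},\bm{\gamma})$, as $\bm{\gamma}$ ranges over $\mathcal{C}_2^\perp/\mathcal{C}_1^\perp$, contribute independently to $U_Z\Pi_\mathcal{S}$ so that vanishing of the sum forces each coefficient $A_{\bm{\mu},\bm{\gamma}}$ to vanish. This rests on the explicit identification $q(\bm{\mu},\bm{\gamma}) = \epsilon_{(\bm{0},\bm{\mu}\oplus\bm{\gamma})} E(\bm{0},\bm{\mu}\oplus\bm{\gamma})\, \Pi_{\mathcal{S}_Z}$, together with the observation that after moving the Pauli past $\Pi_{\mathcal{S}_X(\bm{\mu})}$ one obtains a non-trivial $Z$-Pauli times $\Pi_\mathcal{S}$, and the full set of $Z$-logical Pauli operators $\{\epsilon_{(\bm{0},\bm{\gamma})}E(\bm{0},\bm{\gamma})\}_{\bm{\gamma}}$ acts faithfully on the logical space. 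An alternative route that avoids the independence argument is to start instead from \eqref{eqn:prob_mu}, specialize to the logical all-$\ket{+}$ state to kill every cross term, and read off $\sum_{\bm{\gamma}}|A_{\bm{\mu},\bm{\gamma}}|^2 = 0$ for $\bm{\mu}\neq\bm{0}$ directly from positivity.
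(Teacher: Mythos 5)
Your argument is correct and follows essentially the same route as the paper's proof: reduce invariance to $\Pi_{\mathcal{S}_X(\bm{\mu})} U_Z \Pi_{\mathcal{S}} = 0$ for $\bm{\mu}\neq\bm{0}$, deduce $A_{\bm{\mu},\bm{\gamma}}=0$ for $\bm{\mu}\neq\bm{0}$ by linear independence of Pauli operators, and repackage via Theorem~\ref{thm:GCs_prop} at $\bm{\eta}=\bm{0}$ together with non-negativity of $|A_{\bm{\mu},\bm{\gamma}}|^2$. The only cosmetic differences are that the paper runs the two implications separately rather than as a single chain of equivalences, and it invokes linear independence at the level of physical Pauli matrices in the expansion of $\Pi_{\mathcal{S}}$ rather than at the level of logical $Z$-Paulis.
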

	\begin{proof}
	    See Appendix \ref{subsec:proof_thm_preserved_by_Uz}.
	\end{proof}

	\begin{remark}[Logical Operator induced by $U_Z$]
		\normalfont
		\label{rem:log_op_U_Z}
		We assume that $U_Z\Pi_{\mathcal{S}} U_Z^\dagger = \Pi_{\mathcal{S}}$ for a CSS code defined by $\mathcal{S}$. By Theorem \ref{thm:preserved_by_Uz}, \eqref{eqn:preserved_by_Uz} holds, so that by Theorem \ref{thm:GCs_prop} we only have one Kraus operator left in \eqref{eqn:kraus_ops} that is given by
		\begin{equation}
		    B_{\bm{\mu} =\bm{0}} = \sum_{\bm{\gamma}\in \mathcal{C}_2^\perp / \mathcal{C}_1^\perp} A_{\bm{0},\bm{\gamma}} ~\epsilon_{(\bm{0},\bm{\gamma})}
		    E(\bm{0},\bm{\gamma}).
		\end{equation}
		Note that $\F_2^k \simeq \mathcal{C}_2^\perp / \mathcal{C}_1^\perp$ and we have a bijective map $g: \F_2^k \to \mathcal{C}_2^\perp / \mathcal{C}_1^\perp $ defined by $g(\bm{\alpha}) =\bm{\alpha} G_{\mathcal{C}_2^\perp /\mathcal{C}_1^\perp}$, where $G_{\mathcal{C}_2^\perp /\mathcal{C}_1^\perp}$ is the generator matrix selected. Let $U_Z^L$ be the logical operator induced by $U_Z$, and let $\alpha_j$ be the $j$th entry of the vector $\bm{\alpha}$. Then, using \eqref{eqn:css_log_paulis}, we translate the Kraus operator into the logical space as
		\begin{align*}
		U_Z^L 
		&= \sum_{\bm{\alpha}\in \F_2^k} A_{\bm{0}, g(\bm{\alpha})} \left(\prod_{j=1}^k  \left(Z_j^L\right)^{\alpha_j}\right)\\
		&= \sum_{\bm{\alpha}\in \F_2^k} A_{\bm{0}, g(\bm{\alpha})} E(\bm{0},\bm{\alpha}),\numberthis \label{eqn:logical_final}
		\end{align*} 
		Thus, if a CSS code is preserved by $U_Z=\sum_{\bm{v}\in \F_2^n}f(\bm{v})E(\bm{0},\bm{v})$, then the generator coefficients corresponding to the zero syndrome 
		are simply the coefficients in the Pauli expansion of the induced logical operator. We also observe that $U_Z^L$ given in \eqref{eqn:logical_final} is unitary if and only if \eqref{eqn:preserved_by_Uz} holds.
	\end{remark}
	
	In the following subsections, we simplify  \eqref{eqn:preserved_by_Uz} in special cases when $U_Z$ is a QFD gate, and when $U_Z = R_Z(\frac{\pi}{p})$ for some integer $p$. We then provide necessary and sufficient conditions for quantum Reed-Muller codes to be preserved by $R_Z(\frac{2\pi}{2^l})$, and connect to the conditions in \cite[Theorem 17]{rengaswamy2020optimality}. 
	
	    \subsection{QFD Gates}
    Theorem \ref{thm:div_cond_qfd} below specializes Theorem \ref{thm:preserved_by_Uz} to the broad class of diagonal level-$l$ QFD gates $\tau_R^{(l)}$ determined by symmetric matrices $R\in \Z_{2^{l}}^{n\times n}$. Note that Theorem \ref{thm:div_cond_qfd} applies to CSS codes with arbitrary signs and $R_Z\left(\frac{2\pi}{2^l}\right)$ form a subset of QFD gates. Theorem \ref{thm:div_cond_qfd} includes the divisibility conditions derived in \cite{zeng2011transversality,landahl2013complex,vuillot2019quantum} as a special case. 
    \begin{theorem}\label{thm:div_cond_qfd}
     	Consider a CSS($X,\mathcal{C}_2;Z,\mathcal{C}_1^\perp$)
     	 code, where $\bm{y}$ is the character vector of the $Z$-stabilizers. Then, a QFD gate $\tau_R^{(l)}= \sum_{\bm{v}\in \F_2^n} \xi_l^{\bm{v}R\bm{v}^T \bmod {2^l}} \ket{\bm{v}}\bra{\bm{v}}$ 
         preserves the codespace $\mathcal{V}(\mathcal{S})$
		if and only if
		\begin{equation} \label{eqn:div_cond_qfd}
		2^l \mid (\bm{v_1}R\bm{v_1}^T - \bm{v_2}R\bm{v_2}^T )
		\end{equation}
		for all $\bm{v_1}, \bm{v_2} \in \mathcal{C}_1+\bm{y}$ such that $\bm{v_1} \oplus \bm{v_2} \in \mathcal{C}_2$. 
    \end{theorem}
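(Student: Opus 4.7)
The plan is to combine Theorem~\ref{thm:preserved_by_Uz} with the closed form \eqref{eqn:qfd_gcs_simp} for generator coefficients of QFD gates. By Theorem~\ref{thm:preserved_by_Uz}, $\tau_R^{(l)}$ preserves $\mathcal{V}(\mathcal{S})$ if and only if $\sum_{\bm{\gamma}\in \mathcal{C}_2^\perp/\mathcal{C}_1^\perp}|A_{\bm{0},\bm{\gamma}}(R,l)|^2=1$, so the task reduces to unpacking this sum and showing that it saturates if and only if the divisibility condition \eqref{eqn:div_cond_qfd} holds.

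First I would substitute $\bm{\mu}=\bm{0}$ into \eqref{eqn:qfd_gcs_simp}, expand $|A_{\bm{0},\bm{\gamma}}|^2 = A_{\bm{0},\bm{\gamma}}\overline{A_{\bm{0},\bm{\gamma}}}$ as a double sum over $\bm{v_1},\bm{v_2}\in \mathcal{C}_1+\bm{y}$, and swap summation order so that the character sum over $\bm{\gamma}$ is innermost:
\begin{align*}
\sum_{\bm{\gamma}}|A_{\bm{0},\bm{\gamma}}|^2
&= \frac{1}{|\mathcal{C}_1|^2} \sum_{\bm{v_1},\bm{v_2}\in \mathcal{C}_1+\bm{y}} \xi_l^{\bm{v_1}R\bm{v_1}^T-\bm{v_2}R\bm{v_2}^T} \\
&\quad\times \sum_{\bm{\gamma}\in \mathcal{C}_2^\perp/\mathcal{C}_1^\perp} (-1)^{\bm{\gamma}(\bm{v_1}\oplus \bm{v_2})^T}.
\end{align*}
Since $\bm{v_1}\oplus\bm{v_2}\in \mathcal{C}_1\subset(\mathcal{C}_1^\perp)^\perp$, the summand is well-defined on the quotient $\mathcal{C}_2^\perp/\mathcal{C}_1^\perp$, and the standard character-sum identity reduces the inner sum to $|\mathcal{C}_2^\perp|/|\mathcal{C}_1^\perp|$ times the indicator that $\bm{v_1}\oplus\bm{v_2}\in \mathcal{C}_2$, and zero otherwise. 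Using $|\mathcal{C}_2^\perp|/|\mathcal{C}_1^\perp|=|\mathcal{C}_1|/|\mathcal{C}_2|$, the expression collapses to
\[
\sum_{\bm{\gamma}}|A_{\bm{0},\bm{\gamma}}|^2 = \frac{1}{|\mathcal{C}_1|\,|\mathcal{C}_2|} \sum_{\substack{\bm{v_1},\bm{v_2}\in \mathcal{C}_1+\bm{y}\\ \bm{v_1}\oplus\bm{v_2}\in \mathcal{C}_2}} \xi_l^{\bm{v_1}R\bm{v_1}^T-\bm{v_2}R\bm{v_2}^T}.
\]

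Next I would count: for each $\bm{v_1}\in \mathcal{C}_1+\bm{y}$, the admissible $\bm{v_2}$ fill the coset $\bm{v_1}+\mathcal{C}_2\subset \mathcal{C}_1+\bm{y}$, giving exactly $|\mathcal{C}_1|\,|\mathcal{C}_2|$ ordered pairs. The right-hand side is therefore an average of $|\mathcal{C}_1|\,|\mathcal{C}_2|$ complex numbers on the unit circle, and the left-hand side is a nonnegative real. Equating real parts and applying the triangle inequality term-by-term, the average equals $1$ if and only if every $\xi_l^{\bm{v_1}R\bm{v_1}^T-\bm{v_2}R\bm{v_2}^T}$ equals $1$, which by definition of $\xi_l=e^{\imath\pi/2^{l-1}}$ is equivalent to $2^l\mid(\bm{v_1}R\bm{v_1}^T-\bm{v_2}R\bm{v_2}^T)$. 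This establishes both directions simultaneously.

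The main obstacle I expect is bookkeeping around the quotients and the character vector $\bm{y}$: verifying that the integrand is well-defined on $\mathcal{C}_2^\perp/\mathcal{C}_1^\perp$, that the substitution $\bm{z}=\bm{v}\oplus \bm{y}$ implicit in \eqref{eqn:qfd_gcs_simp} is coherent with the sign conventions $\epsilon_{(\bm{0},\bm{z})}=(-1)^{\bm{z}\bm{y}^T}$, and that the count of admissible pairs is insensitive to the particular coset representative chosen for $\bm{y}$. The exponent $\bm{v}R\bm{v}^T$ is treated $\bmod\,2^l$ throughout, so care is also required to confirm that the divisibility statement in \eqref{eqn:div_cond_qfd} is invariant under translating $\bm{v_1},\bm{v_2}$ by elements of $\mathcal{C}_2$; this invariance is automatic since both endpoints of the saturation argument apply uniformly to every pair in the count.
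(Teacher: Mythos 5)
Your proof is correct and follows essentially the same route as the paper: start from Theorem~\ref{thm:preserved_by_Uz}, substitute the QFD generator-coefficient formula \eqref{eqn:qfd_gcs_simp} at $\bm{\mu}=\bm{0}$, collapse the character sum over $\bm{\gamma}\in\mathcal{C}_2^\perp/\mathcal{C}_1^\perp$ to an indicator of $\bm{v_1}\oplus\bm{v_2}\in\mathcal{C}_2$, and conclude by the equality case of the triangle inequality applied to an average of $|\mathcal{C}_1||\mathcal{C}_2|$ unit-modulus terms. The only cosmetic difference is that you parametrize the double sum directly by $(\bm{v_1},\bm{v_2})$, whereas the paper first sums over $\bm{v}=\bm{v_1}\oplus\bm{v_2}\in\mathcal{C}_1$ (via the auxiliary quantity $s(\bm{v},\bm{y})$) and only introduces $\bm{v_2}$ in the final line.
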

    \begin{proof}
    It follows from \eqref{eqn:qfd_gcs_simp} that
    \begin{align}\label{eqn:qfd_gcs_sz_sq}
   &\sum_{\bm{\gamma} \in \mathcal{C}_2^\perp/\mathcal{C}_1^\perp} \left|A_{\bm{0},\bm{\gamma}}(R,l)\right|^2 = \nonumber\\ &~~~~~~~~~~~~~~~
   \frac{1}{|\mathcal{C}_1|^2}\sum_{\bm{v}\in \mathcal{C}_1} s(\bm{v},\bm{y}) \sum_{{\gamma}\in \mathcal{C}_2^\perp/\mathcal{C}_1^\perp} (-1)^{\bm{\gamma}\bm{v}^T} ,
    \end{align}
    where 
    \begin{equation} 
    s(\bm{v},\bm{y}) \coloneqq \sum_{\bm{v_1}\in \mathcal{C}_1+\bm{y}} \xi_l^{\bm{v_1}R\bm{v_1}^T - (\bm{v}\oplus \bm{v_1})R(\bm{v}\oplus \bm{v_1})^T \bmod{2^l}}. 
    \end{equation}
    We simplify \eqref{eqn:preserved_by_Uz} using \eqref{eqn:qfd_gcs_sz_sq} to obtain
        \begin{align*}
		1&= \sum_{{\gamma}\in \mathcal{C}_2^\perp/\mathcal{C}_1^\perp} \left|A_{\bm{0},\bm{\gamma}}(R,l)\right|^2\\
		 &=  \frac{1}{|\mathcal{C}_1|^2}\sum_{\bm{v}\in \mathcal{C}_1}  s(\bm{v},\bm{y})
		 \sum_{\bm{\gamma}\in \mathcal{C}_2^\perp/\mathcal{C}_1^\perp} (-1)^{\bm{\gamma}\bm{v}^T}\\
		 &= \frac{\sum_{\bm{v}\in \mathcal{C}_2} 
		 \sum_{\bm{v_1}\in \mathcal{C}_1+\bm{y}} \xi_l^{\bm{v_1}R\bm{v_1}^T - (\bm{v}\oplus \bm{v_1})R(\bm{v}\oplus \bm{v_1})^T }}{|\mathcal{C}_1||\mathcal{C}_2|},\numberthis
		\end{align*}
		which requires each term to contribute $1$ to the summation. We complete the proof by setting $\bm{v_2}=\bm{v}\oplus \bm{v_1}$.
    \end{proof}
    \begin{remark}
        \normalfont
        \label{rem:trans_theta_as_sp_cs_in_qfd}
        When $R=I$, then $\bm{v}R\bm{v}^T = w_H(\bm{v})$ and the divisibility condition simplifies to the condition previously obtained for $R_Z\left(\frac{2\pi}{2^l}\right)$. If a CSS code is preserved by $R_Z\left(\frac{2\pi}{2^l}\right)$ for all $l\ge 1$, then it follows \eqref{eqn:div_cond_qfd} that for any fixed $\bm{w}\in \mathcal{C}_1\ \mathcal{C}_2$, all elements in the coset $\mathcal{C}_2+\bm{w}+\bm{y}$ have the same Hamming weight. It then follows from the generalized encoding map given in \eqref{eqn:gen_encode_map} that any CSS code invariant under $R_Z\left(\frac{2\pi}{2^l}\right)$ for all $l\ge 1$ is a constant-excitation code \cite{zanardi1997noiseless}. 
       \end{remark}
        We now explore the influence of signs by analyzing and separating the effect of the character vector $\bm{y}$.
        \begin{lemma}\label{lemma:two_div_conds_qfd}
        		Consider a CSS($X,\mathcal{C}_2;Z,\mathcal{C}_1^\perp$) code, where $\bm{y}$ is the character vector of the $Z$-stabilizers. Then, \eqref{eqn:div_cond_qfd} holds for all $\bm{v_1}, \bm{v_2} \in \mathcal{C}_1+\bm{y}$ such that $\bm{v_1} \oplus \bm{v_2} \in \mathcal{C}_2$ if and only if 
        			 \begin{equation}\label{eqn:first_div_cond_qfd}
        			2^l \mid (\bm{v_1}R\bm{v_1}^T - \bm{v_2}R\bm{v_2}^T), \text{ for all } \bm{v_1},\bm{v_2}\in \mathcal{C}_2 +\bm{y};
        			 \end{equation}
        			\begin{equation}\label{eqn:sec_div_cond_qfd}
        			2^{l-1}  \mid   (\bm{u_1}-\bm{u_2}) R \bm{w}^T, 
        			\end{equation}
        			for all $\bm{u_1},\bm{u_2}\in\mathcal{C}_2$ and $\bm{w}\in \mathcal{C}_1 / \mathcal{C}_2.$
        \end{lemma}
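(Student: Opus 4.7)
Every $\bm{v}\in\mathcal{C}_1+\bm{y}$ admits a unique decomposition $\bm{v}=\bm{u}\oplus\bm{w}\oplus\bm{y}$ with $\bm{u}\in\mathcal{C}_2$ and $\bm{w}$ drawn from a fixed system of coset representatives for $\mathcal{C}_1/\mathcal{C}_2$; moreover, two such elements $\bm{v_1},\bm{v_2}$ satisfy $\bm{v_1}\oplus\bm{v_2}\in\mathcal{C}_2$ precisely when they share the same $\bm{w}$. Writing $f(\bm{v})\coloneqq \bm{v}R\bm{v}^T$, condition \eqref{eqn:div_cond_qfd} is therefore equivalent to demanding that, for every $\bm{w}\in\mathcal{C}_1/\mathcal{C}_2$, the value $f(\bm{u}\oplus\bm{w}\oplus\bm{y})\bmod 2^l$ is independent of $\bm{u}\in\mathcal{C}_2$. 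The case $\bm{w}=\bm{0}$ is exactly (i), so in both directions it remains to connect the general $\bm{w}$ case to condition (ii) given (i).

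For binary vectors $\bm{a},\bm{b}$, the identity $a_i\oplus b_i = a_i+b_i-2a_ib_i$ combined with the symmetry of $R$ yields the integer identity
\begin{equation*}
f(\bm{a}\oplus\bm{b}) = f(\bm{a}) + f(\bm{b}) + 2\bm{a}R\bm{b}^T - 4(\bm{a}\vee\bm{b})R(\bm{a}\wedge\bm{b})^T,
\end{equation*}
where $\vee$ and $\wedge$ denote componentwise OR and AND. I would apply this with $\bm{a}=\bm{u_i}\oplus\bm{y}$ and $\bm{b}=\bm{w}$ for $i=1,2$ and subtract the two expansions. Using the componentwise identity $(\bm{u}\oplus\bm{y})-\bm{y} = \bm{u}\odot(1-2\bm{y})$ (so that $((\bm{u}\oplus\bm{y})-\bm{y})R\bm{w}^T \equiv \bm{u}R\bm{w}^T \pmod 2$), the subtraction can be rearranged as
\begin{equation*}
[f(\bm{u_1}\oplus\bm{w}\oplus\bm{y})-f(\bm{u_2}\oplus\bm{w}\oplus\bm{y})] - [f(\bm{u_1}\oplus\bm{y})-f(\bm{u_2}\oplus\bm{y})] = 2(\bm{u_1}-\bm{u_2})R\bm{w}^T + 4\Delta,
\end{equation*}
where $\Delta$ is an explicit integer combination of the OR/AND correction terms. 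The forward direction follows by applying the reformulated hypothesis to both bracketed quantities to force the left-hand side to be a multiple of $2^l$, and then identifying the main term $2(\bm{u_1}-\bm{u_2})R\bm{w}^T$ to obtain (ii). The converse feeds (i) into the second bracket and (ii) into the main linear term, then runs the same identity in reverse to recover divisibility by $2^l$ of the first bracket.

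The principal obstacle is controlling the correction $4\Delta$ modulo $2^l$: it is manifestly a multiple of $4$, but for $l\ge 3$ this is not enough by itself. I would expand $\bm{a}\vee\bm{b} = \bm{a}+\bm{b}-\bm{a}\wedge\bm{b}$ in $\Z$, exploit the linearity of $\bm{u}\mapsto \bm{u}R\bm{x}^T$ in $\bm{u}$, and reduce each residual piece of $\Delta$ using condition (i) applied to carefully chosen pairs inside $\mathcal{C}_2+\bm{y}$. Doing this bookkeeping cleanly — so that every contribution of $4\Delta$ is absorbed into either the (i)-controlled bracket or the (ii)-controlled main term — is the technical heart of the argument and closes both directions of the equivalence.
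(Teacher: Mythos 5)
You correctly set up the same decomposition the paper uses ($\bm{v}=\bm{u}\oplus\bm{w}\oplus\bm{y}$ with $\bm{u}\in\mathcal{C}_2$ and $\bm{w}$ a coset representative), and the XOR-expansion identity you write down,
\begin{equation*}
f(\bm{a}\oplus\bm{b}) = f(\bm{a}) + f(\bm{b}) + 2\bm{a}R\bm{b}^T - 4(\bm{a}\vee\bm{b})R(\bm{a}\wedge\bm{b})^T,
\end{equation*}
is correct and is exactly the right tool. You also correctly observe that after subtracting the two expansions what survives is the linear term $2(\bm{u_1}-\bm{u_2})R\bm{w}^T$ \emph{plus} a residual $4\Delta$ coming from the OR/AND corrections (and from replacing $\bm{s_1}-\bm{s_2}=(\bm{u_1}-\bm{u_2})\odot(\bm{1}-2\bm{y})$ by $\bm{u_1}-\bm{u_2}$). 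However, the argument stops precisely at the point you yourself identify as ``the technical heart'': you never actually show that $4\Delta\equiv 0\pmod{2^l}$ (or that its contribution can be absorbed), so neither direction of the equivalence is established for $l\ge3$. The plan of ``reducing each residual piece of $\Delta$ using condition (i) applied to carefully chosen pairs inside $\mathcal{C}_2+\bm{y}$'' is not obviously workable either, since $\Delta$ involves vectors such as $\bm{s}*\bm{w}$ and $\bm{s}\vee\bm{w}$ that generally lie outside $\mathcal{C}_2+\bm{y}$, so condition (i) does not immediately apply to them. As it stands this is a proof outline, not a proof, and the part left undone is the part that carries all the difficulty.

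Two further remarks. First, the parenthetical claim that $((\bm{u}\oplus\bm{y})-\bm{y})R\bm{w}^T\equiv\bm{u}R\bm{w}^T\pmod 2$ is true but too weak to be useful: what is eventually needed is congruence modulo $2^{l-1}$, and you in fact compensate for the gap by absorbing the discrepancy into $\Delta$ rather than using this congruence, so the remark is a distraction. Second, you should know that the paper's own proof of this lemma silently performs the expansion $(\bm{u}+\bm{w}+\bm{y})R(\bm{u}+\bm{w}+\bm{y})^T=(\bm{u}+\bm{y})R(\bm{u}+\bm{y})^T+2(\bm{u}+\bm{y})R\bm{w}^T+\bm{w}R\bm{w}^T$ as though $+$ were integer addition, while the vectors $\bm{v_i}$ in \eqref{eqn:div_cond_qfd} are genuinely binary XORs; the $-4(\bm{a}\vee\bm{b})R(\bm{a}\wedge\bm{b})^T$ correction you isolate is exactly what the paper's derivation drops without comment. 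So your care in flagging the issue is warranted, but flagging it is not the same as closing it.
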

        \begin{proof}
            See Appendix \ref{subsec:proof_lemma_qfd_div}.
        \end{proof}
        Note that only \eqref{eqn:first_div_cond_qfd} depends on the character vector $\bm{y}$, and its contribution is moving the divisible requirement for a set to that for a coset. 
        
        Note that by varying the level $l$, the same symmetric matrix $R$ can determine different gates (for example, the gates C$Z$ and C$P$ in Example \ref{examp5}). The divisibility conditions corresponding to successive levels differ by a factor of $2$. This suggests using concatenation to lift a code preserved by a level $l$ QFD gate determined by $R$ to a code preserved by a level $l+1$ QFD gate determine by $I_2\otimes R$. We defer investigation to future work.
  
	\subsection{Transversal $\theta$ $Z$-Rotation $R_Z(\theta)$}
	\label{subsec:nece_suff_cond_invariant_R_Z}
	\subsubsection{$R_Z(\pi/p)$ and RM Constructions}
	   If the physical rotation angle $\theta$ is a fraction of $\pi$, then the constraint on generator coefficients in \eqref{eqn:preserved_by_Uz} is equivalent to conditions on the Hamming weights that appear in the classical codes $\mathcal{C}_1$ and $\mathcal{C}_2$ that determine the quantum CSS code. 
	\begin{theorem}\label{thm:div_cond_RZ}
	    Let $p\in\Z$. Then $R_Z\left(\frac{\pi}{p}\right)$ preserves the  CSS($X,\mathcal{C}_2;Z,\mathcal{C}_1^\perp$) codespace 
		 if and only if 
		 \begin{equation}\label{eqn:div_cond_invariant}
		  2p ~|~ \left(w_H(\bm{w})-2w_H(\bm{w}*\bm{z})\right),
		 \end{equation}
		for all $\bm{w}\in \mathcal{C}_2$ and all $\bm{z}\in \mathcal{C}_1+\bm{y}$, where $\bm{y}$ is the character vector that determines signs of $Z$-stabilizers and $\bm{w}*\bm{z}$ is the coordinate-wise product of $\bm{w}$ and $\bm{z}$. 
	\end{theorem}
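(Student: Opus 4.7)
The plan is to invoke Theorem~\ref{thm:preserved_by_Uz}, which tells us that $R_Z(\pi/p)$ preserves $\mathcal{V}(\mathcal{S})$ iff $\sum_{\bm{\gamma}\in \mathcal{C}_2^\perp/\mathcal{C}_1^\perp}|A_{\bm{0},\bm{\gamma}}(\pi/p)|^2=1$, and then convert this single scalar equality into the claimed Hamming-weight divisibility condition. Since $\sum_{\bm{\mu},\bm{\gamma}}|A_{\bm{\mu},\bm{\gamma}}|^2=1$ always holds (Theorem~\ref{thm:GCs_prop}), the trivial-syndrome partial sum is at most $1$, and the theorem reduces to identifying precisely when equality occurs.

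First I would substitute the compact formula from Lemma~\ref{lemma:equ_def_gcs} for $A_{\bm{0},\bm{\gamma}}(\theta)$ and expand $|A_{\bm{0},\bm{\gamma}}|^2$ as a double sum over pairs $(\bm{z}_1,\bm{z}_2)\in(\mathcal{C}_1+\bm{y})^2$. The $(e^{\pm \imath\theta/2})^n$ factors cancel in modulus, leaving each summand in the form $(-1)^{\bm{\gamma}(\bm{z}_1\oplus\bm{z}_2)^T}\,e^{\imath\theta(w_H(\bm{z}_1)-w_H(\bm{z}_2))}$. Next I would swap summations and evaluate the inner character sum $\sum_{\bm{\gamma}\in\mathcal{C}_2^\perp/\mathcal{C}_1^\perp}(-1)^{\bm{\gamma}(\bm{z}_1\oplus\bm{z}_2)^T}$; since $\bm{z}_1\oplus\bm{z}_2\in\mathcal{C}_1$, this is a character of the finite abelian quotient which is trivial exactly when $\bm{z}_1\oplus\bm{z}_2\in\mathcal{C}_2$, contributing $|\mathcal{C}_1|/|\mathcal{C}_2|$, and vanishing otherwise.

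After this reduction, $\sum_{\bm{\gamma}}|A_{\bm{0},\bm{\gamma}}|^2$ becomes $\frac{1}{|\mathcal{C}_1||\mathcal{C}_2|}$ times a sum of $|\mathcal{C}_1||\mathcal{C}_2|$ unit-modulus phases, one for each admissible pair parametrized by $\bm{z}_2\in\mathcal{C}_1+\bm{y}$ and $\bm{w}:=\bm{z}_1\oplus\bm{z}_2\in\mathcal{C}_2$. This sum is real (the involution $(\bm{z}_1,\bm{z}_2)\leftrightarrow(\bm{z}_2,\bm{z}_1)$ pairs up conjugate terms), so by the triangle inequality it equals its maximum $|\mathcal{C}_1||\mathcal{C}_2|$ iff every phase equals $1$, i.e.\ $2p\mid(w_H(\bm{z}_1)-w_H(\bm{z}_2))$ for every admissible pair. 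Finally, setting $\bm{z}=\bm{z}_2$ and $\bm{z}_1=\bm{z}\oplus\bm{w}$ and invoking the elementary identity $w_H(\bm{z}\oplus\bm{w})-w_H(\bm{z})=w_H(\bm{w})-2w_H(\bm{w}*\bm{z})$ recasts the condition as~\eqref{eqn:div_cond_invariant}.

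The main delicate step is the character-sum evaluation together with the cardinality bookkeeping $|\mathcal{C}_2^\perp/\mathcal{C}_1^\perp|=|\mathcal{C}_1|/|\mathcal{C}_2|$; once this is in hand, both directions follow from the triangle-inequality argument. A more direct alternative, which avoids generator coefficients altogether, would apply $R_Z(\pi/p)$ (diagonal in the computational basis) to a generalized encoded basis state $\ket{\overline{\bm{\alpha}}}$ from Section~\ref{subsec:CSS_general_Enc}: the image stays supported on the coset $\bm{\alpha}G_{\mathcal{C}_1/\mathcal{C}_2}+\mathcal{C}_2+\bm{y}$, so preservation forces the scalar phase $e^{\imath\pi w_H(\cdot)/p}$ to be constant across that coset, and the same Hamming-weight identity then produces~\eqref{eqn:div_cond_invariant}.
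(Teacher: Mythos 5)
Your proposal is correct and follows essentially the same route as the paper: both reduce via Theorem~\ref{thm:preserved_by_Uz} to the scalar condition $\sum_{\bm{\gamma}}|A_{\bm{0},\bm{\gamma}}|^2=1$, expand $|A_{\bm{0},\bm{\gamma}}|^2$ from Lemma~\ref{lemma:equ_def_gcs} as a double sum over $\mathcal{C}_1+\bm{y}$, evaluate the character sum over $\bm{\gamma}\in\mathcal{C}_2^\perp/\mathcal{C}_1^\perp$ to restrict $\bm{z}_1\oplus\bm{z}_2$ to $\mathcal{C}_2$, and then note that the normalized sum of unit-modulus phases equals $1$ iff every phase is $1$. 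Your added remark that the conjugation symmetry $(\bm{z}_1,\bm{z}_2)\leftrightarrow(\bm{z}_2,\bm{z}_1)$ makes the sum real, followed by the triangle-inequality argument, just makes explicit the final step that the paper leaves to the reader; the reindexing and the weight identity $w_H(\bm{w}\oplus\bm{z})-w_H(\bm{z})=w_H(\bm{w})-2w_H(\bm{w}*\bm{z})$ match exactly.
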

	\begin{proof}
	    See Appendix \ref{subsec:proof_div_Rz}.
	\end{proof}

	\begin{remark}[Transversal $\pi/2^l$ $Z$-rotation]
	\normalfont
	\label{rem:RM_construction}
		 Assume positive signs (character vector $\bm{y}=\bm{0}$) and set $p=2^{l-1}$ for some integer $l\ge 1$. 
		 Since $\bm{0}\in \mathcal{C}_1$ and $\bm{0} \in \mathcal{C}_2$, it follows from Theorem \ref{thm:div_cond_RZ} that $R_Z\left(\frac{\pi}{2^{l-1}}\right)$ preserves a CSS codespace $\mathcal{V}(\mathcal{S})$ if and only if 
		\begin{equation} \label{eqn:div_pi_2^l_1}
		2^l ~|~ w_H(\bm{w}) \text{ for all } \bm{w}\in \mathcal{C}_2 \text{, and}
		\end{equation}
		 \begin{equation}\label{eqn:div_pi_2^l_2}
		 2^{l-1} ~|~ w_H(\bm{w}*\bm{z}) \text{ for all } \bm{w}\in \mathcal{C}_2 \text{ and for all } \bm{z}\in \mathcal{C}_1.
		 \end{equation}
	    This result coincides with the sufficient conditions in \cite[Proposition 4]{vuillot2019quantum}, which is a special case of the quasitransversality introduced earlier by Campbell and Howard \cite{campbell2017unified}. For example, if a CSS code with all positive stabilizers is invariant under $R_Z\left(\frac{\pi}{4}\right)$, then the weight of every $X$-stabilizers needs to be divisible by 8. We note that the $\llbr 8,3,2 \rrbr $ color code is the smallest error-detecting CSS code with all positive signs that is preserved by $R_Z\left(\frac{\pi}{4}\right)$. We defer the study of non-trivial character vectors $\bm{y}$ to future work.
	\end{remark}
    
    The divisibility conditions \eqref{eqn:div_pi_2^l_1}, \eqref{eqn:div_pi_2^l_2} suggest constructing CSS codes from classical Reed-Muller codes.
	\begin{theorem}[Reed-Muller Constructions]
	\label{thm:RM_cosntruction}
	Consider Reed-Muller codes $\mathcal{C}_1 = \mathrm{RM}(r_1,m) \supset \mathcal{C}_2 = \mathrm{RM}(r_2,m)$ with $r_1 > r_2$. The $\llbr n=2^m,k=\sum_{j=r_2+1}^{r_1} \binom{m}{j},d=2^{\min\{r_2+1,m-r_1\}} \rrbr $ CSS($X,\mathcal{C}_2;Z,\mathcal{C}_1^\perp$) code with all positive stabilizers is preserved by $R_Z(\frac{\pi}{2^{l-1}})$ if and only if
	\begin{widetext}
	\begin{equation}\label{eqn:RM_const_relation_l}
	    l\le \left\{\begin{array}{lc}
	    \left\lfloor \frac{m-1}{r_1} \right\rfloor + 1, ~~~~~~~~~~~~~~~~~~~~~~~~~~~~\text{ if } r_2 = 0, \\
    	\min \left\{ \left\lfloor \frac{m-r_2-1}{r_1} \right\rfloor +1 , 
	    \left\lfloor \frac{m-r_1}{r_2} \right\rfloor+1 \right\}, \text{ if } r_2 \neq 0.
	    \end{array} \right.
	\end{equation}
	\end{widetext}
	\end{theorem}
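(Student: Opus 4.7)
The plan is to apply Remark~\ref{rem:RM_construction}, which specializes Theorem~\ref{thm:div_cond_RZ} to the positive-sign setting and reduces the invariance of $\mathcal{V}(\mathcal{S})$ under $R_Z(\pi/2^{l-1})$ to the two divisibility requirements $2^l \mid w_H(\bm{w})$ for every $\bm{w}\in\mathcal{C}_2$ and $2^{l-1} \mid w_H(\bm{w}\ast\bm{z})$ for every $\bm{w}\in\mathcal{C}_2$, $\bm{z}\in\mathcal{C}_1$. Each divisibility condition is then translated into a numerical inequality on $l$ using the Ax--McEliece theorem recalled in Section~\ref{subsec:prem_RM_codes}: the exact power of $2$ dividing every codeword weight of $\mathrm{RM}(r,m)$ is $2^{\lfloor(m-1)/r\rfloor}$, and this bound is attained by some codeword.

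In the case $r_2 = 0$, we have $\mathcal{C}_2 = \{\bm 0, \bm 1\}$, so condition (i) gives only the vacuous $l \le m$, while condition (ii) collapses via $\bm 1 \ast \bm z = \bm z$ to $2^{l-1} \mid w_H(\bm z)$ for every $\bm z \in \mathrm{RM}(r_1,m)$; Ax--McEliece then yields $l \le \lfloor(m-1)/r_1\rfloor + 1$, matching the stated bound. In the case $r_2 \neq 0$, the first term in the $\min$ is obtained by choosing $\bm w$ to be a minimum-weight codeword of $\mathrm{RM}(r_2,m)$: its support is an affine subspace $A \subseteq \F_2^m$ of dimension $m - r_2$, and the restriction map $\mathrm{RM}(r_1,m) \to \F_2^A$ is surjective onto $\mathrm{RM}(r_1, m - r_2)$, so the attainable weights $\{w_H(\bm w \ast \bm z) : \bm z \in \mathcal{C}_1\}$ coincide with the weight spectrum of $\mathrm{RM}(r_1, m - r_2)$, and Ax--McEliece on this smaller code delivers $l \le \lfloor (m - r_2 - 1)/r_1 \rfloor + 1$. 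The second term follows symmetrically, by fixing $\bm z$ to be a minimum-weight codeword of $\mathrm{RM}(r_1,m)$ and combining the resulting restriction bound with condition (i).

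The main obstacle is closing the gap between necessity and sufficiency, since the restriction arguments above only probe minimum-weight codewords on one side at a time. I would resolve this by returning to the compact form $2p \mid \bigl(w_H(\bm w) - 2 w_H(\bm w \ast \bm z)\bigr)$ of the divisibility condition from Theorem~\ref{thm:div_cond_RZ} and noting that for arbitrary $\bm w \in \mathcal{C}_2$ the vector $\bm w \ast \bm z$ lies in the restriction of $\mathcal{C}_1$ to $\mathrm{supp}(\bm w)$, a code contained (after coordinate reparametrization) in a Reed--Muller code whose parameters are bounded by those used above. The Ax--McEliece bound on these restricted codes, combined with the constraint on $w_H(\bm w)$ forced by condition (i), promotes the necessary bounds from minimum-weight representatives to all pairs $(\bm w,\bm z)$ and yields sufficiency. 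Intersecting the two inequalities produces the $\min$ in~\eqref{eqn:RM_const_relation_l}, while the separate treatment of $r_2=0$ produces the single-term bound.
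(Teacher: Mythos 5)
Your setup is right: the two divisibility conditions from Remark~\ref{rem:RM_construction} together with Ax--McEliece is exactly the paper's starting point, and your treatment of the $r_2=0$ case is correct. Your necessity arguments for $r_2\neq 0$ are also sound as far as they go: restricting to the support of a minimum-weight codeword on either side does produce a Reed--Muller code of the smaller length, Ax--McEliece applies, and the bound you obtain is a valid necessary condition. (In fact your second-term bound $l\le\lfloor(m-r_1-1)/r_2\rfloor+1$ is \emph{stronger} than the stated $\lfloor(m-r_1)/r_2\rfloor+1$ when $r_2\mid(m-r_1)$; this is not a contradiction because one can check that the first term $\lfloor(m-r_2-1)/r_1\rfloor+1$ is always $\le a$ when $m-r_1=ar_2$, so the min is unaffected.)

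The genuine gap is sufficiency, and the repair you sketch does not work. You claim that for \emph{arbitrary} $\bm{w}\in\mathcal{C}_2=\mathrm{RM}(r_2,m)$ the restriction of $\mathcal{C}_1$ to $\mathrm{supp}(\bm{w})$ is contained (after reparametrization) in a Reed--Muller code with controlled parameters. This is false once $r_2\ge 2$: the support of a generic codeword of $\mathrm{RM}(r_2,m)$ is the solution set of a degree-$r_2$ Boolean equation and is not an affine subspace, so the restriction of $\mathrm{RM}(r_1,m)$ to it is not a Reed--Muller code and Ax--McEliece does not apply directly. (For $r_2=1$ your claim happens to be true, since level sets of affine forms are affine subspaces.) Consequently, the ``promotion from minimum-weight representatives to all pairs'' you describe is not justified, and it is exactly the step where the theorem's content lies.

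The paper closes this gap by decomposing $\bm{w}$ into a sum of monomial evaluation vectors $\bm{w}=\bm{w_1}\oplus\cdots\oplus\bm{w_t}$ and applying inclusion--exclusion to $w_H(\bm{w}*\bm{z})$ (equation~\eqref{eqn:RM_const_wt_expan}). Each partial product $\bm{w_{j_1}}*\cdots*\bm{w_{j_i}}$ is again a monomial, whose support \emph{is} a coordinate subspace; the restriction of $\mathcal{C}_1$ to it is a Reed--Muller code, and Ax--McEliece bounds each term. An induction on $t$ then shows that the constraint on $l$ from a $t$-term codeword is $\lfloor(m-tr_2-1)/\min\{r_1,m-tr_2\}\rfloor+t$, and minimizing over $t$ produces the $\lfloor(m-r_1)/r_2\rfloor+1$ term. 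This combinatorial bookkeeping is the missing ingredient; your restriction argument replaces it with a false structural claim about arbitrary supports. Without the inclusion--exclusion/induction step you have necessity from minimum-weight codewords but no path to sufficiency for all $(\bm{w},\bm{z})$.
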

	\begin{proof}
	    Note that all $Z$-stabilizers have positive signs corresponding to the case $\bm{y}=\bm{0}$ in Theorem \ref{thm:div_cond_RZ}.  Then, $R_Z\left(\frac{\pi}{2^{l-1}}\right)$ preserves a CSS codespace if and only if 
		\eqref{eqn:div_pi_2^l_1} and \eqref{eqn:div_pi_2^l_2} hold. 
		
		Let $\bm{w}\in \mathcal{C}_2$ and $\bm{z}\in \mathcal{C}_1$. If $r_2=0$, then $\mathcal{C}_2 = \{\bm{0},\bm{1}\}$ and $w_H(\bm{w})\in\{0,2^m\}$. It follows from McEliece \cite{mceliece1971periodic} (see also Ax \cite{ax1964zeroes}) that 
		\begin{equation}\label{eqn:RM_const_cond}
		    2^{\left\lfloor \frac{m-1}{r_1} \right\rfloor} \mid  w_H(\bm{w}*\bm{z}),
		\end{equation}
		and this bound is tight. The two conditions become $ l\le\min\{m,\left\lfloor\frac{m-1}{r_1}\right\rfloor+1\} = \left\lfloor\frac{m-1}{r_1}\right\rfloor+1$. 
	    
	    If $r_2 \neq 0$, then it follows from McEliece \cite{McEliece,borissov2013mceliece} that $\left\lfloor \frac{m-1}{r_2} \right\rfloor$ is the highest power of $2$ that divides $w_H(\bm{w})$ for all $\bm{w} \in \mathcal{C}_2=\mathrm{RM}(r_2,m) $. 
	    We first show \eqref{eqn:RM_const_relation_l} is necessary. It follows from \eqref{eqn:div_pi_2^l_1} that 
	    \begin{equation}\label{eqn:RM_const_cond1}
	        l\le \left\lfloor \frac{m-1}{r_2} \right\rfloor .
	    \end{equation}  
	    We need to understand divisibility of weights $w_H(\bm{w}*\bm{z})$ where $\bm{w} \in \mathcal{C}_2$ and $\bm{z} \in  \mathcal{C}_1$. The codeword $\bm{w}$ is the evaluation vector of a sum of monomials, and we start by considering the case of a single monomial. 
	    Consider a codeword $\bm{w_1}\in \mathcal{C}_2$ corresponding to the evaluation of a monomial of degree $s$. For all $\bm{z}\in\mathcal{C}_1$, we observe that $\bm{w_1}*\bm{z}$ is a codeword in RM$(\min\{r_1,m-s\},m-s)$ supported on  $\bm{w_1}$. Then, $\left\lfloor \frac{m-s-1}{\max\{r_1,m-s\}}  \right\rfloor$ is the highest power of $2$ that divides $w_H(\bm{w_1}*\bm{z})$ for all $\bm{z}\in\mathcal{C}_1$. Note that since $s$ takes values from $0$ to $r_2$, we have 
	    \begin{align}\label{eqn:RM_const_cond2}
	     l&\le \left\lfloor \frac{m-r_2-1}{\max\{r_1,m-r_2\}} \right\rfloor +1 \nonumber \\
	     &=\left\{\begin{array}{lc}
	        \left\lfloor \frac{m-r_2-1}{r_1} \right\rfloor +1,  & \text{ if } r_1+r_2\le m,\\
	         1=\left\lfloor \frac{m-r_1}{r_2}\right\rfloor+1, & \text{ if } m<r_1+r_2.
	     \end{array} \right. 
	    \end{align}
	    We now consider $\bm{w}\in\mathcal{C}_2$ such that $\bm{w}=\bm{w_1}\oplus\bm{w_2}$, where $\bm{w_1}$, $\bm{w_2}$ are evaluation vectors correspond to monomials in $\mathcal{C}_2$. Then, for $\bm{z} \in \mathcal{C}_1$, we have
	    \begin{align*}\label{eqn:RM_const_simp1}
	       w_H(\bm{w}*\bm{z}) &= w_H(\bm{w_1}*\bm{z}) + w_H(\bm{w_2}*\bm{z}) \\
	       &~~~- 2w_H(\bm{w_1}*\bm{w_2}*\bm{z}).\numberthis
	    \end{align*}
	    Since $\bm{w},\bm{w_1},\bm{w_2}\in \mathcal{C}_2$, it follows from \eqref{eqn:div_pi_2^l_2} that $2^l$ divides $2w_H(\bm{w}*\bm{z}),~ 2w_H(\bm{w_1}*\bm{z})$, and so  $2w_H(\bm{w_2}*\bm{z})$. By \eqref{eqn:RM_const_simp1}, we have 
	    \begin{equation}
	        2^l | 4w_H(\bm{w_1}*\bm{w_2}*\bm{z}).
	    \end{equation}
	    Since $\bm{w_1}*\bm{w_2}$ is the evaluation vector of a monomial with degree $s'\le \min\{m,2r_2\}$, $\bm{w_1}*\bm{w_2}*\bm{z}$ is a codeword in RM$(\min\{r_1,m-s'\},m-s')$ supported on $\bm{w_1}*\bm{w_2}$. Then, $\left\lfloor \frac{m-2r_2-1}{\max\{r_1,m-2r_2\}} \right\rfloor$ is the highest power of $2$ that divides $w_H(\bm{w_1}*\bm{w_2}*\bm{z})$ for all $\bm{w_1}*\bm{w_2}\in \mathcal{C}_2$. The extremum is achieved when the monimials corresponding to $\bm{w_1}$ and $\bm{w_2}$ have degree $r_2$ and do not share a variable. Hence, 
	    
	\begin{widetext}
	    \begin{equation}\label{eqn:RM_const_cond3}
	     l \le \left\lfloor \frac{m-2r_2-1}{\max\{r_1,m-2r_2\}} \right\rfloor +2 
	     =\left\{\begin{array}{lc}
	        \left\lfloor \frac{m-2r_2-1}{r_1} \right\rfloor +2,  & \text{ if } r_1+2r_2\le m,\\
	         2=\left\lfloor \frac{m-r_1}{r_2}\right\rfloor+1, & \text{ if } r_1+r_2\le m<r_1+2r_2.
	     \end{array} \right. 
	    \end{equation}
	\end{widetext}
	    It remains to consider the case   $\bm{w}=\bm{w_1}\oplus\bm{w_2}\oplus\cdots\oplus\bm{w_t} \in \mathcal{C}_2$, where each $\bm{w_i}$ is the evaluation vector of a monomial. We use inclusion-exclusion to rewrite \eqref{eqn:div_pi_2^l_2} as
	    \begin{widetext}
	    \begin{align}\label{eqn:RM_const_wt_expan}
	       2^{l-1}\Big|\sum_{i=1}^{t}(-2)^{i-1}\sum_{1\le j_1\le \dots\le j_i \le t } w_H(\bm{w_{j_1}}*\cdots * \bm{w_{j_i}}*\bm{z}).
	    \end{align}
	    \end{widetext}
	    We now use induction. Assume for $1\le i\le t-1$, we have
	    \begin{widetext}
	    \begin{align}\label{eqn:RM_const_inducton_hyp1}
	     l \le \left\lfloor \frac{m-ir_2-1}{\max\{r_1,m-ir_2\}} \right\rfloor +i =\left\{\begin{array}{lc}
	        \left\lfloor \frac{m-ir_2-1}{r_1} \right\rfloor +i,  & \text{ if } r_1+ir_2\le m,\\
	         i=\left\lfloor \frac{m-r_1}{r_2}\right\rfloor+1, & \text{ if } 
	          (i-1)r_2\le m-r_1<ir_2.
	     \end{array} \right. 
	    \end{align}
	    \end{widetext}
	    Note that for $ 1\le i\le t-1 $, $\bm{w_{j_1}}*\cdots * \bm{w_{j_i}}$ corresponds to a monomial with degree $s''\le \min\{m,ir\}$, hence $\bm{w_{j_1}}*\cdots * \bm{w_{j_i}}*\bm{z}$ is a codeword in RM($\min\{r_1,m-s''\},m-s''$) supported on  $\bm{w_{j_1}}*\cdots * \bm{w_{j_i}}$. Then, we have
	    \begin{equation}\label{RM_const_div_indu}
	         2^{\left\lfloor \frac{m-ir_2-1}{\max\{r_1,m-ir_2\}}  \right\rfloor + i} \mid 
	         2^{i}w_H(\bm{w_{j_1}}*\cdots * \bm{w_{j_i}}*\bm{z}),
	    \end{equation}
	    in which the bound on the exponent is tight since we can choose $\bm{w_1},\cdots,\bm{w_i}$ to be evulations vectors corresponding to $i$ disjoint monomials of degree $r_2$. Hence, $2^{l-1}$ divides all terms in \eqref{eqn:RM_const_wt_expan} for $i=1,2,\dots, t-1$. Hence, for the last term, we must have 
	    \begin{equation}
	        2^{l-1}|2^{t-1} w_H(\bm{w_{1}}*\cdots * \bm{w_{t}}*\bm{z}),
	    \end{equation}
	    which implies that 
	    \begin{widetext}
	    \begin{align}\label{eqn:RM_const_inducton_hyp2}
	     l \le \left\lfloor \frac{m-tr_2-1}{\max\{r_1,m-tr_2\}} \right\rfloor +t
	     =\left\{\begin{array}{lc}
	        \left\lfloor \frac{m-tr_2-1}{r_1} \right\rfloor +t,  & \text{ if } r_1+t r_2\le m,\\
	         t=\left\lfloor \frac{m-r_1}{r_2}\right\rfloor+1, & \text{ if } (t-1)r_2\le m-r_1<tr_2,
	     \end{array} \right.
	    \end{align}
	    \end{widetext}
	    and the induction is complete.
	    Note that since $r_1>r_2$, we have
	    \begin{equation}
	        \left\lfloor \frac{m-tr_2-1}{r_1} \right\rfloor +t\ge \left\lfloor \frac{m-r_2-1}{r_1} \right\rfloor +1 \text{ for } t\ge 1,
	    \end{equation}
	    and the necessary condition reduces to 
	    \begin{equation}
	        l\le \min \left\{ \left\lfloor \frac{m-r_2-1}{r_1} \right\rfloor +1 ,
	    \left\lfloor \frac{m-r_1}{r_2} \right\rfloor+1 \right\}.
	    \end{equation}
	    To prove the sufficiency of the case $r_2\neq 0$, we simply reverse the steps.
	\end{proof}
    \begin{remark}[Puncturing RM codes by removing the first coordinate]
        \label{rem:elemtary_op}
        \normalfont
        Consider the classical RM$(r,m)$ code, and two elementary operations on its generator matrix: 1. removing the first column which is $[1,0,\dots,0]^T$; 2. removing the first row of all 1s. After either of the two operations, we observe that $2^{\lfloor \frac{(m-1)}{2}\rfloor}$ is still the highest power of $2$ that divides all of its weights. Hence, the RM constructions described in Theorem \ref{thm:RM_cosntruction} can be extended to punctured RM codes. If operation 1 is applied on $\mathcal{C}_1 = \mathrm{RM}(r_1,m)$, and operations 1 and 2 are applied on $\mathcal{C}_2 = \mathrm{RM}(r_2,m)$, then we can relax the relation between $r_1$ and $r_2$ as $r_1 \ge r_2$. It follows from the same arguments that the resulting $\llbr 2^m-1,\sum_{j-r_2+1}^{r_1}\binom{m}{j}+1,2^{\min\{r_2+1,m-r_1\}}-1 \rrbr $ CSS code is preserved by $R_Z(\frac{\pi}{2^{l-1}})$ with the same constraint on $l$ as described in \eqref{eqn:RM_const_relation_l}. This family contains the $\llbr 2^m-1,1,3 \rrbr $ triorthogonal codes described in \cite{bravyi2012magic}. 
    \end{remark}
    \begin{remark}[QRM$(r,m)$ Codes]
    \label{rem:QRM_codes}
    \normalfont
        When $r_1=r$ and $r_2=r-1$, this family of CSS codes coincides with the QRM($r,m$) $\llbr 2^m,\binom{m}{r},2^{\min\{r,m-r\}} \rrbr $ codes constructed in \cite{haah2018codes} and \cite[Theorem 19]{rengaswamy2020optimality}. The code QRM($r,m$) is preserved by $R_Z(\frac{2\pi}{2^{m/r}})$ if $1\le r\le m/2$ and $r \mid m$. When $r_2=0$, we obtain the $\llbr 2^m,m,2 \rrbr $ family that is preserved by $R_Z(\frac{2\pi}{2^m})$. If $r_2\neq 0$, since $r \mid m$, we have 
        \begin{align*}
            l&=\frac{m}{r} 
            = \min\left\{ \left\lfloor \frac{m-r}{r}\right\rfloor+1, \left\lfloor \frac{m-1}{r-1}\right\rfloor\right\} \\
            &= \min\left\{ \left\lfloor \frac{m-(r-1)-1}{r}\right\rfloor+1, \left\lfloor \frac{m-r}{r-1}\right\rfloor + 1 \right\}, \numberthis
        \end{align*}
        which satisfies the necessary and sufficient conditions in \eqref{eqn:RM_const_relation_l}.
    \end{remark}
	We now illustrate Theorem \ref{thm:preserved_by_Uz} and Theorem \ref{thm:div_cond_RZ} through two CSS codes preserved by $R_Z\left(\frac{\pi}{4}\right)$, one with a single logical qubit, the other with multiple logical quibts. 
	\addtocounter{example}{+3}
	\begin{example}[The $\llbr 15,1,3 \rrbr $ punctured quantum Reed-Muller code~\cite{knill1996accuracy,bravyi2005universal}]
		\normalfont
		\label{examp11}
		Consider the CSS($X, \mathcal{C}_2;Z,\mathcal{C}_1^\perp$) code defined by $\mathcal{C}_2 = \langle x_1,x_2,x_3,x_4\rangle$ and $\mathcal{C}_1^\perp = \langle x_1,x_2,x_3,x_4,x_1x_2,x_1x_3,x_1x_4,x_2x_3,x_2x_4,x_3x_4 \rangle$, with the first coordinate removed in both $\mathcal{C}_2$ and $\mathcal{C}_1^{\perp}$. 
		It is well-known \cite{bravyi2005universal,rengaswamy2020optimality} that $R_Z(\frac{\pi}{4})$ preserves the CSS codespace when the signs of $Z$-stabilizers are trivial. Since $8\mid w_H(\bm{v})$, for $\bm{v}\in \mathrm{RM}(1,4)$ and $4 \mid w_H(\bm{u})$ for $\bm{u}\in \mathrm{RM}(2,4)$), the code satisfies the divisibility conditions in Theorem \ref{thm:div_cond_RZ}. We compute the induced logical operator by computing the generator coefficients for the zero syndrome. Note that $\mathcal{C}_2^\perp /\mathcal{C}_1^\perp = \{\bm{0},\bm{1}\}$. The weight enumerators of $\mathcal{C}_1$ and $\mathcal{C}_1 + \bm{1}$ are given by
		\begin{align*}
		&P_{\mathcal{C}_1}(x,y) = P_{\mathcal{C}_1+ \bm{1}}(x,y) \\
		&~~~~~~~~~~~~~~= x^{15}+15x^8y^7+15x^7y^8+y^{15}.
		\end{align*}
		We have
		\begin{align}
		A_{\bm{0},\bm{0}}\left(\frac{\pi}{4}\right) 
		&= \frac{1}{32} \left(2\cos\frac{15\pi}{8} + 30\cos\frac{\pi}{8} \right) = \cos\frac{\pi}{8},\nonumber \\
		A_{\bm{0},\bm{1}}\left(\frac{\pi}{4}\right) 
        &= \imath\sin\frac{\pi}{8}.
		\end{align}
		The constraint on generator coefficients in \eqref{eqn:preserved_by_Uz} is satisfied: 
		\begin{align*}
		\sum_{\bm{\gamma}\in\{\bm{0},\bm{1}\}}\left|A_{\bm{0},\bm{\gamma}}\left(\frac{\pi}{4}\right) \right|^2 
        = \left(\cos\frac{\pi}{8}\right)^2 + \left(\sin\frac{\pi}{8}\right)^2 = 1.
		\end{align*}
		It follows from \eqref{eqn:logical_final} that the logical operator induced by $R_Z\left(\frac{\pi}{4}\right)$ is
		\begin{align*}
		R_Z^L\left(\frac{\pi}{4}\right) &= A_{\bm{0},\bm{0}}\left(\frac{\pi}{4}\right) I^L + A_{\bm{0},\bm{1}}\left(\frac{\pi}{4}\right) Z^L \\
		&= \cos\frac{\pi}{8} I^L + \imath\sin\frac{\pi}{8}Z^L = (T^\dagger)^L.
		\end{align*}
	\end{example}

	\begin{example}[The $\llbr 8,3,2 \rrbr $ code]
	    \label{examp12}
		\normalfont
		The $\llbr 8,3,2 \rrbr $ color code \cite{campbell2017unified} is defined on $8$ qubits which we identify with vertices of the cube. All vertices participate in the X-stabilizer and generators of the Z-stabilizers can be identified with 4 independent faces of the cube. The signs of all the stabilizers are positive. The $\llbr 8,3,2 \rrbr $ color code can also be thought as a Reed-Muller CSS($X,~ \mathcal{C}_2 =\{ \bm{0,\bm{1}}\};~Z,~ \mathcal{C}_1^\perp = \text{RM}(1,3)$) code with generator matrix
		\begin{align}
		\setlength\aboverulesep{0pt}\setlength\belowrulesep{0pt}
		\setlength\cmidrulewidth{0.5pt}
		G_S = 
		\left[
		\begin{array}{c|cccccccc}
		\bm{1} &  &  &  &  &  &  &  &    \\
		\hline
		  & 1 & 1 & 1 & 1 & 1 & 1 & 1 & 1   \\
		  & 0 & 0 & 0 & 0 & 1 & 1 & 1 & 1   \\
		  & 0 & 0 & 1 & 1 & 0 & 0 & 1 & 1   \\
		  & 0 & 1 & 0 & 1 & 0 & 1 & 0 & 1   \\
		\end{array}
		\right].
		\end{align}
		The $\llbr 8,3,2 \rrbr $ code can be used in magic state distillation for the controlled-controlled-$Z$ (CCZ) gate in the third-level of Clifford hierarchy. To verify that the code is preserved by $R_Z\left(\frac{\pi}{4}\right)$ and the induced logical operator is CC$Z$ (up to some logical Pauli $Z^L$), we first compute the generator coefficients corresponding to the trivial syndrome. The weight enumerators of $\mathcal{C}_1^\perp$ and $\mathcal{C}_1^\perp + \bm{\gamma}$ for $\bm{\gamma} \in \mathcal{C}_2^\perp /\mathcal{C}_1^\perp \setminus \{\bm{0}\}$ are given by
		\begin{align*}
		P_{\mathcal{C}_1^\perp}(x,y) &= x^8+ 14x^4y^4 + y^8,\\ P_{\mathcal{C}_1^\perp + \bm{\gamma}}(x,y) &= 4x^6y^2+ 8x^4y^4 + 4x^2y^6,
		\end{align*}
		so that
		\begin{align}
		    A_{\bm{0},\bm{0}}\left(\frac{\pi}{4}\right) = \frac{3}{4}, \text{ and }  A_{\bm{0},\bm{\gamma}\neq \bm{0}}\left(\frac{\pi}{4}\right) = -\frac{1}{4}
		\end{align}
		for all the seven non-zero $\bm{\gamma} \in \mathcal{C}_2^\perp /\mathcal{C}_1^\perp$.
		Then, 
		\begin{align*}
		\sum_{\gamma\in \mathcal{C}_2^\perp/\mathcal{C}_1^\perp} \left|A_{\bm{0},\bm{\gamma}}\left(\frac{\pi}{4}\right)\right|^2 = \left(\frac{3}{4}\right)^2 + 7\cdot\left(-\frac{1}{4}\right)^2 = 1,
		\end{align*}
		so \eqref{eqn:preserved_by_Uz} holds, and the induced logical operator is
		\begin{align*}
		R_Z^L\left(\frac{\pi}{4}\right) 
		&= \sum_{\bm{\alpha}\in \F_2^3} A_{\bm{0}, g(\bm{\alpha)}} \left(\frac{\pi}{4}\right) E(\bm{0},\bm{\alpha}) \\
		& \equiv (Z^L\otimes I^L \otimes Z^L)\circ\mathrm{CC}Z^L.\numberthis
		\end{align*}
	\end{example}

    \subsubsection{Generator Coefficients and Trigonometric Identities}
	When $\theta = \frac{\pi}{2^l}$ for some integer $l$, Rengaswamy et al. \cite{rengaswamy2020optimality} derived necessary and sufficient conditions for a stabilizer code to be invariant under $R_Z(\theta)$. This derivation depends on prior work characterizing conjugates of arbitrary Pauli matrices by $R_Z(\frac{\pi}{2^l})$ \cite{rengaswamy2019unifying}. The necessary and sufficient conditions provided in \cite[Theorem 17]{rengaswamy2020optimality} are expressed as two types of trigonometric identity. We now show that our constraint on generator coefficients is equivalent to the first trigonometric identity, and that the second trigonometric identity follows from the first. Our main tool is the MacWilliams Identities \cite{Mac}, and our analysis extends from CSS codes to general stabilizer codes. 
	
	We demonstrate equivalence through a sequence of three lemmas.
	
	
	\begin{lemma}\label{lemma:simplify}
		Given a CSS($X,\mathcal{C}_2;Z,\mathcal{C}_1^\perp$) code, let $\mathcal{B} = \{\bm{z}\in \mathcal{C}_1^\perp : \epsilon_{\bm{z}} = 1\}$ and $\mathcal{B}^\perp = \langle \mathcal{C}_1,\bm{y}\rangle $. For all nontrivial $\bm{w}\in \mathcal{C}_2$, define 
		$\mathcal{D}_{\bm{w}} \coloneqq \{\bm{w}*\bm{v} : \bm{v}\in \mathcal{C}_1 \}$. 
		Let $\theta \in (0,2\pi)$. Then, \eqref{eqn:preserved_by_Uz} holds 
		if and only if for all non-zero $\bm{w}\in \mathcal{C}_2$
		\begin{equation}\label{eqn:D_w}
		\frac{1}{|\mathcal{D}_{\bm{w}}|}\sum_{\bm{x}\in \mathcal{D}_{\bm{w}}+\bm{w}*\bm{y}} \left(e^{\imath\theta}\right)^{w_H(\bm{w})-2w_H(\bm{x})} = 1.
		\end{equation}
	\end{lemma}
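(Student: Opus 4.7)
The plan is to reformulate $\sum_{\bm{\gamma}\in \mathcal{C}_2^\perp/\mathcal{C}_1^\perp}|A_{\bm{0},\bm{\gamma}}|^2$ as an average over $\bm{w}\in \mathcal{C}_2$ of the quantity on the left-hand side of \eqref{eqn:D_w}, and then to argue that this average can equal one only when each nontrivial summand already equals one.

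First I would expand $|A_{\bm{0},\bm{\gamma}}(\theta)|^2$ using the simplified formula from Lemma \ref{lemma:equ_def_gcs}, obtaining a double sum over $\bm{z_1},\bm{z_2}\in \mathcal{C}_1+\bm{y}$ with weight $(-1)^{\bm{\gamma}(\bm{z_1}\oplus\bm{z_2})^T}(e^{\imath\theta/2})^{2w_H(\bm{z_1})-2w_H(\bm{z_2})}$. Summing over $\bm{\gamma}\in \mathcal{C}_2^\perp/\mathcal{C}_1^\perp$ invokes the standard character identity: for $\bm{c}\in \mathcal{C}_1$ the character sum $\sum_{\bm{\gamma}}(-1)^{\bm{\gamma}\bm{c}^T}$ equals $|\mathcal{C}_2^\perp/\mathcal{C}_1^\perp|$ when $\bm{c}\in \mathcal{C}_2$ and vanishes otherwise, restricting surviving pairs to those with $\bm{z_1}\oplus \bm{z_2}\in \mathcal{C}_2$. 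Re-parametrising by $\bm{w}\coloneqq \bm{z_1}\oplus \bm{z_2}\in \mathcal{C}_2\subseteq \mathcal{C}_1$ (so that $\bm{z_2}+\bm{w}$ remains in $\mathcal{C}_1+\bm{y}$) and using $2w_H(\bm{z_2}\oplus \bm{w})-2w_H(\bm{z_2})=2w_H(\bm{w})-4w_H(\bm{w}*\bm{z_2})$ converts the phase into $(e^{\imath\theta})^{w_H(\bm{w})-2w_H(\bm{w}*\bm{z_2})}$.

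Next I would push the remaining sum through the coordinate-wise product map $\bm{z_2}\mapsto \bm{w}*\bm{z_2}$, which sends $\mathcal{C}_1+\bm{y}$ onto $\mathcal{D}_{\bm{w}}+\bm{w}*\bm{y}$ with constant fibre size $|\mathcal{C}_1|/|\mathcal{D}_{\bm{w}}|$. After absorbing this multiplicity together with the prefactor $1/(|\mathcal{C}_1||\mathcal{C}_2|)$, one obtains
\[
\sum_{\bm{\gamma}\in \mathcal{C}_2^\perp/\mathcal{C}_1^\perp}|A_{\bm{0},\bm{\gamma}}|^2 = \frac{1}{|\mathcal{C}_2|}\sum_{\bm{w}\in \mathcal{C}_2} T(\bm{w}),
\]
where $T(\bm{w})$ denotes the left-hand side of \eqref{eqn:D_w}; the trivial case yields $T(\bm{0})=1$. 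It then remains to prove that for every nonzero $\bm{w}\in \mathcal{C}_2$, $T(\bm{w})\le 1$ with equality precisely when \eqref{eqn:D_w} holds. The key observation is that $\bm{w}\in \mathcal{D}_{\bm{w}}$ (because $\bm{w}=\bm{w}*\bm{w}$ and $\bm{w}\in \mathcal{C}_1$), so $\bm{x}\mapsto \bm{w}\oplus \bm{x}$ is a fixed-point-free involution on $\mathcal{D}_{\bm{w}}+\bm{w}*\bm{y}$; since every $\bm{x}$ in this set is supported on $\mathrm{supp}(\bm{w})$, the relation $w_H(\bm{w}\oplus \bm{x})=w_H(\bm{w})-w_H(\bm{x})$ pairs terms into complex conjugates. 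Hence $T(\bm{w})$ is a real average of cosines, each bounded by $1$, with equality throughout exactly when every exponential in the sum equals $1$. Since the full sum equals $1$ iff $\sum_{\bm{w}\neq \bm{0}} T(\bm{w})=|\mathcal{C}_2|-1$, and since each term is at most $1$, saturation forces $T(\bm{w})=1$ for every nonzero $\bm{w}\in \mathcal{C}_2$.

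The main technical obstacle is the careful bookkeeping in the double-sum collapse, in particular tracking the character vector $\bm{y}$ through both the change of variables $\bm{w}=\bm{z_1}\oplus\bm{z_2}$ and the many-to-one coordinate-wise product map onto $\mathcal{D}_{\bm{w}}+\bm{w}*\bm{y}$; once the expression has been reduced to the average of $T(\bm{w})$, the concluding involution argument is short.
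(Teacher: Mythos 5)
Your proposal is correct and follows essentially the same route as the paper's proof: expand $|A_{\bm{0},\bm{\gamma}}|^2$ via the simplified formula of Lemma \ref{lemma:equ_def_gcs}, sum over $\bm{\gamma}\in\mathcal{C}_2^\perp/\mathcal{C}_1^\perp$ to restrict $\bm{w}=\bm{z_1}\oplus\bm{z_2}$ to $\mathcal{C}_2$, push through the coordinate-wise product $\bm{z}\mapsto\bm{w}*\bm{z}$ onto $\mathcal{D}_{\bm{w}}+\bm{w}*\bm{y}$, and conclude that the average over $\mathcal{C}_2$ equals $1$ iff every summand does. Your fixed-point-free involution $\bm{x}\mapsto\bm{w}\oplus\bm{x}$ is a nice way to make the final ``only if'' step fully explicit (it shows each $T(\bm{w})$ is a real average of cosines, hence $\le 1$), a bound the paper invokes but leaves implicit.
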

	\begin{proof}
	    See Appendix \ref{subsec:proof_connect_lem_1}.
	\end{proof}

	The \emph{support} of a binary vector $\bm{x}$ is the set of coordinates for which the corresponding entry is non-zero. Given two binary vectors $\bm{x}$, $\bm{y}$, we write $\bm{x} \preceq \bm{y}$ to mean that the support of $\bm{x}$ is contained in the support of $\bm{y}$. Let $\mathrm{supp}(\bm{x})$ be the support of $\bm{x}$. We define $\bm{y}|_{\mathrm{supp}(\bm{x})} \in \F_2^{w_H(\bm{x})}$ to be the truncated binary vector that drops all the coordinates outside $\mathrm{supp}(\bm{x})$. Given a space $\mathcal{C}$, we denote $\mathrm{proj}_{\bm{x}}(\mathcal{C})\coloneqq \{\bm{v}\in\mathcal{C} : \bm{v} \preceq \bm{x}\}$. The next lemma finds equivalent representations of the cosets $\mathcal{D}_{\bm{w}} + \bm{w}*\bm{y}$ for non-zero $\bm{w}\in \mathcal{C}_2$. 
	
	\begin{lemma}\label{lemma:space_equal}
	Given a CSS($X,\mathcal{C}_2;Z,\mathcal{C}_1^\perp$) code, define 
	$\mathcal{D}_{\bm{w}}$ and $\bm{y}$ as above.
	For any non-zero $\bm{w}\in \mathcal{C}_2$, define 
	$\mathcal{Z}_{\bm{w}} \coloneqq \{\bm{z}\big|_{\mathrm{supp}(\bm{w})} \in \F_2^{w_H(\bm{w})}: 
    \bm{z} \in \mathcal{C}_1^\perp 
	\text{ and } \bm{z} \preceq \bm{w}\}$ 
	and $\mathcal{B}_{\bm{w}} = \{\bm{v}\in \mathcal{Z}_{\bm{w}}: \epsilon_{\bm{v}} = 1\}$. 
	Define $\tilde{\mathcal{Z}}_{\bm{w}} \subset \F_2^n $ (resp. $\tilde{\mathcal{B}}_{\bm{w}} \subset \F_2^n$) by adding all the zero coordinates outside $\mathrm{supp}(\bm{w})$ back into $\mathcal{Z}_{\bm{w}}$ (resp. $\mathcal{B}_{\bm{w}}$).  Note that $\dim(\mathrm{proj}_{\bm{w}}(\tilde{\mathcal{B}}_{\bm{w}}^\perp)) = \dim(\mathrm{proj}_{\bm{w}}(\tilde{\mathcal{Z}}_{\bm{w}}^\perp)) +1$. Define $\bm{y}' \in \F_2^{n}$ such that $\mathrm{proj}_{\bm{w}}(\tilde{\mathcal{B}}_{\bm{w}}^\perp) = \langle \mathrm{proj}_{\bm{w}}(\tilde{\mathcal{Z}}_{\bm{w}}^\perp), \bm{y}' \rangle$. Then for all nontirvial $\bm{w}\in \mathcal{C}_2$,
		\begin{equation}
		\mathcal{D}_{\bm{w}} + \bm{w}*\bm{y} = \mathrm{proj}_{\bm{w}}(\tilde{\mathcal{Z}}_{\bm{w}}^\perp) + \bm{y}'.
		\end{equation}
	\end{lemma}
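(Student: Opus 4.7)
The plan is to establish the coset identity by separately confirming that (i) the underlying linear subspaces coincide and (ii) the two coset representatives $\bm{w}*\bm{y}$ and $\bm{y}'$ differ by an element of that subspace.

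For (i), I would invoke the standard duality between the projection and the restriction of a linear code: for any $C\subseteq\F_2^n$ and $T\subseteq\{1,\dots,n\}$, the dual (inside $\F_2^T$) of the projection $C|_T$ equals the truncation onto $T$ of the subcode of $C^\perp$ supported on $T$. Taking $C=\mathcal{C}_1$ and $T=\mathrm{supp}(\bm{w})$ identifies $\mathcal{C}_1|_T$ with $\mathcal{Z}_{\bm{w}}^{\perp}$. Each $\bm{w}*\bm{v}$ is supported on $T$ and restricts to $\bm{v}|_T$, so $\mathcal{D}_{\bm{w}}$ is exactly the zero-padding of $\mathcal{C}_1|_T$ back into $\F_2^n$; by the same token, $\mathrm{proj}_{\bm{w}}(\tilde{\mathcal{Z}}_{\bm{w}}^{\perp})$ is the zero-padding of $\mathcal{Z}_{\bm{w}}^{\perp}$. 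These match, proving $\mathcal{D}_{\bm{w}} = \mathrm{proj}_{\bm{w}}(\tilde{\mathcal{Z}}_{\bm{w}}^{\perp})$.

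For (ii), I would first show $\bm{w}*\bm{y} \in \mathrm{proj}_{\bm{w}}(\tilde{\mathcal{B}}_{\bm{w}}^{\perp})$. Since $\bm{w}*\bm{y}\preceq\bm{w}$, only orthogonality must be checked: for any $\bm{v}\in\tilde{\mathcal{B}}_{\bm{w}}$ we have $\bm{v}\preceq\bm{w}$ (so $\bm{w}*\bm{v}=\bm{v}$) and $\epsilon_{\bm{v}}=1$ (so $\bm{v}\cdot\bm{y}^T=0$), giving $(\bm{w}*\bm{y})\cdot\bm{v}^T = \bm{y}\cdot(\bm{w}*\bm{v})^T = \bm{y}\cdot\bm{v}^T = 0$. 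Using the decomposition $\mathrm{proj}_{\bm{w}}(\tilde{\mathcal{B}}_{\bm{w}}^{\perp}) = \mathrm{proj}_{\bm{w}}(\tilde{\mathcal{Z}}_{\bm{w}}^{\perp}) \sqcup \bigl(\bm{y}' + \mathrm{proj}_{\bm{w}}(\tilde{\mathcal{Z}}_{\bm{w}}^{\perp})\bigr)$ provided by the stated dimension identity, the vector $\bm{w}*\bm{y}$ lies in one of the two cosets. If it were in $\mathrm{proj}_{\bm{w}}(\tilde{\mathcal{Z}}_{\bm{w}}^{\perp})$, then for every $\bm{z}\in\tilde{\mathcal{Z}}_{\bm{w}}$ (which also satisfies $\bm{z}\preceq\bm{w}$) we would obtain $\bm{y}\cdot\bm{z}^T = (\bm{w}*\bm{y})\cdot\bm{z}^T = 0$, forcing $\tilde{\mathcal{B}}_{\bm{w}}=\tilde{\mathcal{Z}}_{\bm{w}}$ and collapsing the dimension. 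Hence $\bm{w}*\bm{y}\in\bm{y}'+\mathrm{proj}_{\bm{w}}(\tilde{\mathcal{Z}}_{\bm{w}}^{\perp})$, and combining with (i) gives the claimed identity of cosets.

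The main obstacle is keeping the three ambient spaces ($\F_2^n$, $\F_2^{w_H(\bm{w})}$, and the zero-padded embedding of the latter into the former) straight, particularly when invoking the projection--restriction duality and when matching $\bm{y}$ with its truncated counterpart acting on $\mathcal{Z}_{\bm{w}}$. Once that duality is deployed carefully, the rest of the argument reduces to a short bookkeeping step combined with the dimension hypothesis.
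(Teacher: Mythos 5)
Your proof is correct, and it organizes the argument differently from the paper's. You first establish the equality of the underlying linear subspaces, $\mathcal{D}_{\bm{w}} = \mathrm{proj}_{\bm{w}}(\tilde{\mathcal{Z}}_{\bm{w}}^\perp)$, by invoking shortening--puncturing duality explicitly, and then separately place the representative $\bm{w}*\bm{y}$ in the nontrivial coset of that subspace inside $\mathrm{proj}_{\bm{w}}(\tilde{\mathcal{B}}_{\bm{w}}^\perp)$ by ruling out the trivial coset via the stated dimension identity. The paper instead proves the coset containment $\mathcal{D}_{\bm{w}}+\bm{w}*\bm{y}\subseteq \mathrm{proj}_{\bm{w}}(\tilde{\mathcal{Z}}_{\bm{w}}^\perp)+\bm{y}'$ element-wise: it takes a generic $\bm{w}*(\bm{z}\oplus\bm{y})\oplus\bm{y}'$, pairs it against an arbitrary $\bm{v}$ from $\mathcal{Z}_{\bm{w}}$, and checks the parities $w_H(\bm{z}*\bm{v})$, $w_H(\bm{y}*\bm{v})$, $w_H(\bm{y}'*\bm{v})$ by splitting on whether $\bm{v}\in\mathcal{B}_{\bm{w}}$, after which it upgrades containment to equality by a cardinality count that uses the same projection--restriction duality you invoke. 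So both proofs rest on the same two ingredients --- that duality, and the equivalence of $\epsilon_{\bm{v}}=1$ with $\bm{v}\cdot\bm{y}^T=0$ --- but your separation of ``which subspace'' from ``which translate'' isolates the role of the dimension hypothesis more transparently, whereas the paper uses it only implicitly when asserting $w_H(\bm{y}'*\bm{v})\equiv 1$ for $\bm{v}\in\mathcal{Z}_{\bm{w}}\setminus\mathcal{B}_{\bm{w}}$. One small caveat worth noting in your write-up: when $\bm{y}$ happens to be orthogonal to all of $\tilde{\mathcal{Z}}_{\bm{w}}$ the hypothesis $\dim(\mathrm{proj}_{\bm{w}}(\tilde{\mathcal{B}}_{\bm{w}}^\perp)) = \dim(\mathrm{proj}_{\bm{w}}(\tilde{\mathcal{Z}}_{\bm{w}}^\perp))+1$ degenerates, in which case $\bm{y}'$ may be taken in $\mathrm{proj}_{\bm{w}}(\tilde{\mathcal{Z}}_{\bm{w}}^\perp)$ and the coset identity holds trivially; your ``contradiction'' step silently assumes the nondegenerate case, which is consistent with how the lemma is stated.
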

	\begin{proof}
	    See Appendix \ref{subsec:proof_connect_lem_2}.
	\end{proof}

	\begin{lemma}\label{lemma:Mac1}
		Given a CSS($X,\mathcal{C}_2;Z,\mathcal{C}_1^\perp$) code, let $\mathcal{B} = \{\bm{z}\in \mathcal{C}_1^\perp : \epsilon_{\bm{z}} = 1\}$, and define $\mathcal{Z}_{\bm{w}}$, $\tilde{\mathcal{Z}}_{\bm{w}}$, $\mathcal{B}_{\bm{w}}$, $\tilde{\mathcal{B}}_{\bm{w}}$, $\bm{y'}$ as above.
		Recall that $\mathrm{proj}_{\bm{w}}(\tilde{\mathcal{B}}_{\bm{w}}^\perp) = \langle \mathrm{proj}_{\bm{w}}(\tilde{\mathcal{Z}}_{\bm{w}}^\perp), \bm{y}' \rangle$. For any $\theta$ and any nontrivial $\bm{w}\in \mathcal{C}_2$,
		\begin{align}\label{eqn:Z_w}
		&1= \nonumber \\
		&\frac{1}{\left|\mathrm{proj}_{\bm{w}}(\tilde{\mathcal{Z}}_{\bm{w}}^\perp)\right|}
		\sum_{\bm{v}\in \mathrm{proj}_{\bm{w}}(\tilde{\mathcal{Z}}_{\bm{w}}^\perp) + \bm{y'}} \left(e^{i\theta}\right)^{w_H(\bm{w})-2w_H(\bm{v})},
		\end{align}
		if and only if 
		\begin{equation}\label{eqn:sec}
		\sum_{\bm{v}\in \mathcal{Z}_{\bm{w}}} \epsilon_{\bm{v}}\left(\imath \tan\theta \right)^{w_H(\bm{v})} = \left(\sec\theta\right)^{w_H(\bm{w})}.
		\end{equation}
	\end{lemma}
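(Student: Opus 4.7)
The plan is to convert the left-hand side of \eqref{eqn:Z_w} into a sum over $\mathcal{Z}_{\bm{w}}$ by one application of the MacWilliams Identities, and then read off \eqref{eqn:sec} directly.

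First, I would restrict every object to $\mathrm{supp}(\bm{w})$. Writing $n_w = w_H(\bm{w})$ and truncating $\bm{v} \mapsto \bm{v}|_{\mathrm{supp}(\bm{w})}$ sends $\mathrm{proj}_{\bm{w}}(\tilde{\mathcal{Z}}_{\bm{w}}^\perp)$ to the dual code $\mathcal{Z}_{\bm{w}}^\perp \subset \F_2^{n_w}$, and sends $\bm{y}'$ to a vector $\bm{y}'' \in \mathcal{B}_{\bm{w}}^\perp \setminus \mathcal{Z}_{\bm{w}}^\perp$ (using the hypothesis $\mathrm{proj}_{\bm{w}}(\tilde{\mathcal{B}}_{\bm{w}}^\perp) = \langle \mathrm{proj}_{\bm{w}}(\tilde{\mathcal{Z}}_{\bm{w}}^\perp), \bm{y}'\rangle$). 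Consequently the exponent $w_H(\bm{w}) - 2 w_H(\bm{v})$ becomes $n_w - 2 w_H(\bm{u})$ for $\bm{u} = \bm{v}|_{\mathrm{supp}(\bm{w})}$, and the left-hand side of \eqref{eqn:Z_w} rewrites as $|\mathcal{Z}_{\bm{w}}^\perp|^{-1}\, P_{\mathcal{Z}_{\bm{w}}^\perp+\bm{y}''}(e^{\imath\theta}, e^{-\imath\theta})$.

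Next, because $[\mathcal{B}_{\bm{w}}^\perp : \mathcal{Z}_{\bm{w}}^\perp] = 2$, the coset enumerator splits as $P_{\mathcal{Z}_{\bm{w}}^\perp+\bm{y}''} = P_{\mathcal{B}_{\bm{w}}^\perp} - P_{\mathcal{Z}_{\bm{w}}^\perp}$. I apply the MacWilliams Identities to each term. Because $\mathcal{B}_{\bm{w}}$ is the kernel of the sign character $\bm{v} \mapsto \epsilon_{\bm{v}}$ on $\mathcal{Z}_{\bm{w}}$, one has $P_{\mathcal{B}_{\bm{w}}}(x,y) = \tfrac{1}{2}\sum_{\bm{v}\in\mathcal{Z}_{\bm{w}}}(1+\epsilon_{\bm{v}})\, x^{n_w-w_H(\bm{v})} y^{w_H(\bm{v})}$, so that $2 P_{\mathcal{B}_{\bm{w}}}(x,y) - P_{\mathcal{Z}_{\bm{w}}}(x,y) = \sum_{\bm{v}\in\mathcal{Z}_{\bm{w}}} \epsilon_{\bm{v}}\, x^{n_w-w_H(\bm{v})} y^{w_H(\bm{v})}$. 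Substituting $x = e^{\imath\theta}$ and $y = e^{-\imath\theta}$, which sends $x+y \mapsto 2\cos\theta$ and $x-y \mapsto 2\imath\sin\theta$, and using $|\mathcal{Z}_{\bm{w}}|\cdot|\mathcal{Z}_{\bm{w}}^\perp| = 2^{n_w}$ to absorb the factors of $2$, reduces the left-hand side of \eqref{eqn:Z_w} to $(\cos\theta)^{n_w}\sum_{\bm{v}\in\mathcal{Z}_{\bm{w}}} \epsilon_{\bm{v}}(\imath\tan\theta)^{w_H(\bm{v})}$. Setting this equal to $1$ and dividing by $(\cos\theta)^{w_H(\bm{w})}$ gives \eqref{eqn:sec}; every step is reversible.

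The one delicate point is the dictionary set up in the first paragraph: I need to verify that after truncation the dual of $\mathcal{Z}_{\bm{w}}$ inside $\F_2^{n_w}$ coincides with the restriction of $\mathrm{proj}_{\bm{w}}(\tilde{\mathcal{Z}}_{\bm{w}}^\perp)$, and that the given $\bm{y}'$ truncates to a legitimate representative of the unique nontrivial coset of $\mathcal{Z}_{\bm{w}}^\perp$ in $\mathcal{B}_{\bm{w}}^\perp$. Once these two identifications are in place, the rest of the argument is a single invocation of MacWilliams together with the character-theoretic description of $\mathcal{B}_{\bm{w}}$ as $\ker \epsilon$.
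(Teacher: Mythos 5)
Your proof is correct and follows essentially the same approach as the paper: both exploit the index-$2$ structure $\mathcal{B}_{\bm{w}}^\perp = \mathcal{Z}_{\bm{w}}^\perp \sqcup (\mathcal{Z}_{\bm{w}}^\perp + \bm{y}')$, the character description $\epsilon_{\bm{v}} = 2\cdot\mathbbm{1}_{\mathcal{B}_{\bm{w}}}(\bm{v}) - 1$, and a single application of the MacWilliams Identities under the substitution $(x,y) \mapsto (e^{\imath\theta}, e^{-\imath\theta})$ (equivalently $(x+y,x-y) = (2\cos\theta, 2\imath\sin\theta)$). The only cosmetic difference is directional: you start from \eqref{eqn:Z_w} and derive \eqref{eqn:sec}, while the paper starts from \eqref{eqn:sec}; since each step is a reversible identity both orderings establish the biconditional.
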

	\begin{proof}
	See Appendix \ref{subsec:proof_connect_lem_3}.
	\end{proof}
    \begin{theorem} \label{thm:connect_to_tri}
    The unitary $R_Z(\theta)$ realizes a logical operation on the codespace $V(S)$ of an $\llbr n,k,d \rrbr $ CSS($X,C_2;Z,C_1^\perp$) code if and only if for all non-zero $\bm{w}\in \mathcal{C}_2$,
		\begin{equation}\label{eqn:ori_cond}
	    \sum_{\bm{v}\in \mathcal{Z}_{\bm{w}}} \epsilon_{\bm{v}}\left(\imath \tan\theta \right)^{w_H(\bm{v})} = \left(\sec\theta\right)^{w_H(\bm{w})}.
		\end{equation}
    \end{theorem}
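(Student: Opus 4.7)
The plan is to prove Theorem \ref{thm:connect_to_tri} by concatenating the four equivalences already established in the excerpt: Theorem \ref{thm:preserved_by_Uz}, Lemma \ref{lemma:simplify}, Lemma \ref{lemma:space_equal}, and Lemma \ref{lemma:Mac1}. Together they translate ``$R_Z(\theta)$ preserves $\mathcal{V}(\mathcal{S})$'' into the trigonometric identity \eqref{eqn:ori_cond} one layer at a time, so no new machinery is required -- only a clean assembly.

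First, specialize Theorem \ref{thm:preserved_by_Uz} to $U_Z = R_Z(\theta)$. This replaces the statement ``$R_Z(\theta)$ realizes a logical operation'' (equivalently, $R_Z(\theta)\Pi_\mathcal{S}R_Z(\theta)^\dagger = \Pi_\mathcal{S}$) by the single scalar equality $\sum_{\bm{\gamma}\in\mathcal{C}_2^\perp/\mathcal{C}_1^\perp}|A_{\bm{0},\bm{\gamma}}|^2 = 1$. Next, invoke Lemma \ref{lemma:simplify} to re-express this one scalar equation as the family of coset-indexed identities \eqref{eqn:D_w}, one for each nontrivial $\bm{w}\in\mathcal{C}_2$. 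Then apply Lemma \ref{lemma:space_equal}, which identifies the summation domain $\mathcal{D}_{\bm{w}}+\bm{w}*\bm{y}$ on the left of \eqref{eqn:D_w} with $\mathrm{proj}_{\bm{w}}(\tilde{\mathcal{Z}}_{\bm{w}}^\perp)+\bm{y}'$, thereby converting \eqref{eqn:D_w} into the form \eqref{eqn:Z_w}. Finally, Lemma \ref{lemma:Mac1}, whose proof uses the MacWilliams Identities applied to the truncated codes on $\mathrm{supp}(\bm{w})$, shows that \eqref{eqn:Z_w} is equivalent to \eqref{eqn:sec}, which is precisely \eqref{eqn:ori_cond}.

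Because every link in the chain is an ``if and only if,'' the composition is also biconditional and the theorem follows. The main obstacle is administrative rather than mathematical: one must check that the $\bm{w}=\bm{0}$ contribution to $\sum_{\bm{\gamma}}|A_{\bm{0},\bm{\gamma}}|^2$ is automatically unity, so that restricting quantification to \emph{nontrivial} $\bm{w}\in\mathcal{C}_2$ in the three lemmas is sufficient, and that the normalizing factors $|\mathcal{D}_{\bm{w}}|$ and $|\mathrm{proj}_{\bm{w}}(\tilde{\mathcal{Z}}_{\bm{w}}^\perp)|$ in \eqref{eqn:D_w} and \eqref{eqn:Z_w} are genuinely equal. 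Both points follow directly from the definitions once one observes that the restriction map $\bm{v}\mapsto \bm{v}|_{\mathrm{supp}(\bm{w})}$ is a bijection between $\{\bm{w}*\bm{v}:\bm{v}\in\mathcal{C}_1\}$ and $\tilde{\mathcal{Z}}_{\bm{w}}^\perp$-indexed cosets of the right cardinality, so no further case analysis on the signs encoded by $\bm{y}$ or on degeneracies of $\bm{w}$ is needed. With these bookkeeping points dispatched, the four equivalences compose cleanly and yield \eqref{eqn:ori_cond}.
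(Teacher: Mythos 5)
Your proposal matches the paper's proof exactly: the paper proves Theorem \ref{thm:connect_to_tri} by chaining Lemma \ref{lemma:space_equal} (to identify \eqref{eqn:D_w} with \eqref{eqn:Z_w}), then Lemma \ref{lemma:simplify} and Lemma \ref{lemma:Mac1} (to equate \eqref{eqn:preserved_by_Uz} with \eqref{eqn:ori_cond}), and finally Theorem \ref{thm:preserved_by_Uz}. The bookkeeping points you flag (the $\bm{w}=\bm{0}$ term contributing unity and the equality of the normalizing cardinalities) are indeed already dispatched inside the proofs of Lemma \ref{lemma:simplify} and Lemma \ref{lemma:space_equal}, so your assembly is sound.
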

    	\begin{proof}
    	By Lemma \ref{lemma:space_equal}, we know \eqref{eqn:D_w} equals \eqref{eqn:Z_w}. It now follows from Lemma \ref{lemma:simplify} and Lemma \ref{lemma:Mac1} that \eqref{eqn:preserved_by_Uz} equals \eqref{eqn:ori_cond}.
		It then follows directly from Theorem \ref{thm:preserved_by_Uz}.
	\end{proof}
	
	\begin{remark} \normalfont
	\label{rem:local_global_conds}
	Rengaswamy~\cite[Theorem 17]{rengaswamy2020optimality} derived a pair of necessary and sufficient conditions for a CSS code to be invariant under $R_Z(\frac{\pi}{2^l})$. Theorem \ref{thm:connect_to_tri} shows that the first of these conditions implies the second and also generalizes the first condition to arbitrary angle $\theta$. Note that the trigonometric conditions are local whereas the square sum constraint on generator coefficients is global. 
	\end{remark}

	\section{Conclusion} 
	\label{sec:concln}
	We have introduced a framework that describes the process of preparing a code state, applying a diagonal physical gate, measuring a code syndrome, and applying a Pauli correction. We have described the interaction of code states and physical gates in terms of generator coefficients determined by the induced logical operator, and have shown that this interaction depends strongly on the signs of $Z$-stabilizers in a CSS code. We have derived necessary and sufficient conditions for a diagonal gate to preserve the code space of a CSS code, and have provided an explicit expression of its induced logical operator. When the diagonal gate is a transversal $Z$-rotation through an angle $\theta$, we derived a simple global condition that can be expressed in terms of divisibility of weights in the two classical codes that determine the CSS code. When all signs in the CSS code are positive, we have proved the necessary and sufficient conditions for Reed-Muller component codes to construct families of CSS codes invariant under transversal Z-rotation through $\pi/2^l$. It remains open to investigate the constraints for a CSS code determined by two classical decreasing monomial codes to be invariant under transversal $\pi/2^l$ $Z$-rotation.   
	
	The generator coefficient framework provides a tool to analyze the evolution under any given diagonal gate of stabilizer codes with arbitrary signs, and we are working to characterize more valid CSS codes can be used in magic state distillation.

	\section*{Acknowledgement}
	We would like to thank Ken Brown, Dripto Debroy, and Felice Manganiello for helpful discussions. Ken Brown suggested we look at the method of simulating coherent noise for surface codes described in \cite{bravyi2018correcting}, and this led to our generator coefficient framework, and its use in analyzing the average logical channel. Dripto Debroy encouraged us to interpret the decoherence-free subspace appearing in Example \ref{examp10} in terms of entanglement of initial states and syndrome measurements. Felice Manganiello shared his construction of CSS codes with RM components that are preserved by a transversal $T$ gate, and this led to our construction of CSS codes with RM components that are preserved by transversal $\pi/2^l$ $Z$-rotations.
	
	The work of the authors was supported in part by NSF under grants CCF-1908730 and CCF-2106213. 

\bibliographystyle{plainurl}
\bibliography{bibliography}



  

\onecolumn\newpage
\appendix

    \section{Magic State Distillation Using the Steane Code}
    \label{sec:MSD_Steane}
    We use the Steane code as an example to show the trade-off between fidelity and the probability of success in magic state distillation. Classical magic state distillation post-selects on the trivial syndrome without considering the error correction. If we follow this procedure, then the $\llbr 7,1,3\rrbr$ Steane code can be used to distill the state with linear convergence as described in Case 1. In Case 2, we try to increase the probability of success by introducing error-correction instead of post-selecting on the trivial syndrome. In Case 3, we consider only correcting one of non-trivial syndromes. 

    Case 1: Reichardt \cite{reichardt2005quantum} calculated error rate by tracking evolution of code states. The generator coefficient framework makes it possible to calculate the output error rate by tracking operators.
	\begin{enumerate}[(i)]
		\item Encode to get the $\ket{\overline{+}}$ of the Steane codestate.
		\item Given seven copies of $\ket{A} := T|+\rangle = (|0\rangle + e^{\imath \pi/4}|1\rangle)/\sqrt{2}$ and ancillary qubits, we can realize the phsyical transversal $T^{\otimes7} = [\exp(-\imath\frac{\pi}{8}Z)]^{\otimes 7}$ with the help of Clifford gates and Pauli measurements. If the states $\ket{A}$ are exact, the probability of observing the trivial syndrome is $p^e_{\bm{\bm{\mu}=\bm{0}}}=\frac{9}{16}$ and the probability of observing each non-trivial syndrome is $p^e_{\bm{\bm{\mu}\neq\bm{0}}}=\frac{1}{16}$ (Take $\theta=\frac{\pi}{4}$ in \eqref{eqn:steane_probs}). When the trivial syndrome is observed, it follows from Example \ref{examp8} that the induced logical operator is $T^\dagger_L=\exp(\imath\frac{\pi}{8}Z_L)$. We then apply a physical representation of the logical $P$hase gate $\overline{P}$ to obtain $\ket{\overline{A}}=P_L T^\dagger_L \ket{\overline{+}}$. 
		In practice, each of the input magic states $\ket{A}$ is noisy. We assume dephasing noise: $\rho \to (1-p)\rho +p Z\rho Z$ with the same probability $p$ of a Pauli $Z$ error for each of the seven physical qubits. The probability of observing the trivial syndrome involves two terms. The first term captures the event that upon observing the trivial syndrome $\bm{\mu}= \bm{0}$, the dephasing error is undetectable. The second term captures the event that upon observing the non-trivial syndrome $\bm{\mu}\neq \bm{0}$, the dephasing error cancels the observed syndrome. The probability of success is given by
		\begin{align}
		P_{\bm{\mu}=\bm{0}} 
		&=p^e_{\bm{\bm{\mu}=\bm{0}}}P(Z\text{-error in } \mathcal{C}_2^\perp) + \sum_{\bm{\mu}\neq \bm{0}}p^e_{\bm{\bm{\mu}}} P(Z\text{-error in } \mathcal{C}_2^\perp+\bm{\mu})\\
		&=\frac{9}{16}\sum_{\bm{v} \in \mathcal{C}_2^\perp}(1-p)^{7-w_H(\bm{v}) }p^{w_H(\bm{v})} +  
		\sum_{\bm{\mu}\neq \bm{0}}\frac{1}{16}\sum_{\bm{v}\in \mathcal{C}_2^\perp+\bm{\mu}}(1-p)^{7-w_H(\bm{v}) }p^{w_H(\bm{v})}\\
 		&=\frac{9}{16}\frac{1}{|\mathcal{C}_2|}\sum_{\bm{v}\in \mathcal{C}_2}(1-2p)^{w_H(\bm{v}) } + \frac{7}{16}\frac{1}{|\mathcal{C}_2|}\sum_{\bm{v}\in \mathcal{C}_2}(-1)^{\bm{v}\bm{e_1}^T}(1-2p)^{w_H(\bm{v}) }\\
		&= \frac{1}{16}\left(2+7(1-2p)^4\right). \label{eqn:steane_ps}
		\end{align}
		Note that the $7$ cosets corresponding to non-trivial syndromes have identical weight enumerators.
		\item If we observe the non-trivial syndrome $\bm{\mu}\neq 0$, we declare failure and restart. Upon observing the trivial syndrome, we decode and the output mixed state is
		\begin{equation}\label{eqn:steane_output_state}
		\rho_{out} = \frac{1}{P_{\bm{\mu}=\bm{0}}}(p_{out}^0\ket{A}\bra{A} + p_{out}^1Z\ket{A}\bra{A} Z)
		\end{equation}
		where
		\begin{align}
		p_{out}^0
		&=p^e_{\bm{\bm{\mu}=\bm{0}}}P(Z\text{-error in } \mathcal{C}_1^\perp) + \sum_{\bm{\mu}\neq \bm{0}}p^e_{\bm{\bm{\mu}}} P(Z\text{-error in } \mathcal{C}_1^\perp+\bm{\mu}+\bm{\gamma} \text{ for } \bm{\gamma} \neq \bm{0})\\
		&=\frac{9}{16}\sum_{\bm{v}\in {\mathcal{C}_1^\perp}}(1-p)^{n-w_H(\bm{v})}p^{w_H(\bm{v})} + \sum_{\bm{\mu}\neq 0}\frac{1}{16}\sum_{\bm{v}\in {\mathcal{C}_1^\perp+\bm{\mu}+\bm{1}}}(1-p)^{n-w_H(\bm{v})}p^{w_H(\bm{v})}\\
		&=\frac{1}{32}\left(2+7(1-2p)^3+7(1-2p)^4+2(1-2p)^7\right). \label{eqn:steane_c1_0}
		\end{align}
		The first term captures the event that upon observing the the trivial syndrome $\bm{\mu}=\bm{0}$, the dephasing error acts as a $Z$-stabilizer ($B_{\bm{\mu}=\bm{0}} = \frac{3}{4}\bar{T}^\dagger$). The second captures the event that upon observing the the non-trivial syndrome $\bm{\mu}\neq\bm{0}$, the dephasing error lies in $\mathcal{C}_1^\perp +\bm{\mu}+\bm{\gamma}$ ($B_{\bm{\mu}\neq\bm{0}} = \frac{1}{4}\bar{T}^\dagger\bar{Z}$). In this case, the dephasing error appears as the error correction that maps back to the code space and results in a logical $T^\dagger$ gate. 
		We now write the output error rate $q$ as a function of the initial error rate $p$, and calculate its Taylor expansion at $0$
		\begin{equation}
		q(p) = 1-\frac{p_{out}^0}{P_{\bm{\mu}=\bm{0}}} = \frac{7}{9}p + \frac{14}{81}p^2 + O(p^3).
		\end{equation}
		This implies that the threshold for the initial error rate is $0.1464...$ (the same as \cite{reichardt2005quantum}), while that of the $\llbr 15,1,3\rrbr$ code is $0.1415..$ \cite{bravyi2005universal}. 
	\end{enumerate}
	
	Case 2: Note that probability of success in Case 1 is upper bounded by $9/16=56.25\%$. It is natural to ask whether we may introduce error correction to increase the probability of success. It follows from \eqref{eqn:Steane_Kraus} that we can choose proper corrections based on syndromes ($\bm{\gamma_{\mu}}=\bar{Z}$ for $\bm{\mu}\neq \bm{0}$) to obtain the logical operator $T^\dagger$ with probability $1$ if the physical transversal $T$ is exact. The output error-rate now becomes
	 \begin{align}
	    q(p) = 1-p^{0}_{out} =
	    1- P(Z\text{-error in } \mathcal{C}_1^\perp) = \sum_{\bm{v}\in {\mathcal{C}_1^\perp}}(1-p)^{n-w_H(\bm{v})}p^{w_H(\bm{v})} = \frac{1}{8}\left(1+7(1-2p)^4\right).
	 \end{align}
	The output error rate does not fall below the line $y=x$ in the positive orthant, and we say that the protocol does not converge.  

	Case 3: We balance Case 1 and Case 2 by implementing error correction for only one of the seven non-trivial syndromes, say $\bm{\mu} = \bm{e_1}$. Although the probability of success increases slightly to
	\begin{align}
	    P_S = P_{\bm{\mu}=\bm{0}}+ P_{\bm{\mu}=\bm{e_1}} = \frac{1}{16}\left(2+7(1-2p)^4\right)+\frac{1}{16}\left(2-(1-2p)^4 \right) = \frac{1}{8}\left(2+3(1-2p)^4\right),
	\end{align}

	the prefactor of the linear term of the output error rate is greater than 1. We conclude that the protocol does not converge. 

The same analysis can be performed for a code that is perfectly preserved by the transversal $T$ gate, such as the $\llbr 15,1,3\rrbr$ code. The analysis provides insight into the trade-off between the probability of success and the fidelity of the output magic states. 

        \section{Generator Coefficient Framework for Stabilizer codes}
		\label{sec:stab_gcf}
	We described the generator coefficient framework for CSS code and we now extend it to arbitrary stabilizer codes. We consider a general stabilizer code generated by the matrix
	\begin{align} \label{eqn:stabilizer_matrix}
		G_{\mathcal{S}}=\left[\begin{array}{c c}
		K & 0 \\ \hline
		0 & J\\ \hline 
		 \multicolumn{2}{c}{D}\\
	\end{array} \right],
	\end{align}	
	where $D=(D_x,D_z)$ such that $D_x$ is the $X$-component of $D$ and $D_z$ is the $Z$-component of $D$. We assume that the row space of $D$ contains no non-zero vector $\bm{c}=(\bm{c_X},\bm{c_Z})$ with $\bm{c_X}=\bm{0}$ or $\bm{c_Z}=\bm{0}$.
	Assume the dimensions of $K$, $J$, and $D$ are $n_x, n_z,n_{xz}$ respectively. Then, we have 
	\begin{align}
		\Pi_{\mathcal{S}} = \Pi_{\mathcal{S}_X} \Pi_{\mathcal{S}_Z} \Pi_{\mathcal{S}_{XZ}} ,
	\end{align} 
	where
	\begin{equation}
		\Pi_{\mathcal{S}_X} = \frac{1}{2^{n_x}}  \sum_{\bm{a}\in \mathcal{K}=\langle K \rangle } \epsilon_{(\bm{a},\bm{0})} E(\bm{a},\bm{0}), ~
	\Pi_{\mathcal{S}_Z} =\frac{1}{2^{n_z}}  \sum_{\bm{b}\in \mathcal{J} = \langle J \rangle } \epsilon_{(\bm{0},\bm{b})} E(\bm{0},\bm{b}), \text{ and }
	\end{equation}
	\begin{equation}
	\Pi_{\mathcal{S}_{XZ}} =\frac{1}{2^{n_{xz}}}  \sum_{(\bm{c},\bm{d})\in \mathcal{D} = \langle D \rangle} \epsilon_{(\bm{c},\bm{d})} E(\bm{c},\bm{d}).
	\end{equation}
	Let $\mathcal{T}\coloneqq \langle K,D_x\rangle$. Then,  $\mathcal{J}\subset \mathcal{T}^\perp \subset \F_2^n$ as described below.
	\begin{center}
		\begin{tikzpicture}
		\node (Z) at (0,0) {$\{ \bm{0} \}$};
		\node (C2) at (0,1) {$\mathcal{C}_2$};
		\node (C1) at (0,2) {$\mathcal{C}_1$};
		\node (F2m) at (0,3) {$\mathbb{F}_2^{n}$};
		
		\path[draw] (Z) -- (C2)  -- (C1) -- (F2m);
		
		\node (Zp) at (1.5,0) {$\{ \bm{0} \}$};
		\node (C1p) at (1.5,1) {$\mathcal{C}_1^{\perp}$};
		\node (C2p) at (1.5,2) {$\mathcal{C}_2^{\perp}$};
		\node (F2m) at (1.5,3) {$\mathbb{F}_2^{n}$};
        \path[draw] (Zp) -- (C1p) -- (C2p)  -- (F2m); 
		\node (Z) at (5.7,0) {$\{ \bm{0} \}$};
		\node (C2) at (5.7,1) {$\mathcal{K}$};
		\node (C1) at (5.7,2) {$\langle J,D_z \rangle^\perp$};
		\node (F2m) at (5.7,3) {$\mathbb{F}_2^{n}$};
		
		\path[draw] (Z) -- (C2)  -- (C1)  -- (F2m); 
		
		\node (Zp) at (7.9,0) {$\{ \bm{0} \}$};
		\node (C1p) at (7.9,1) {$\mathcal{J}$};
		\node (C2p) at (7.9,2) {$\mathcal{T}^\perp =\langle K,D_x\rangle ^\perp$};
		\node (F2m) at (7.9,3) {$\mathbb{F}_2^{n}$};
		
		\path[draw] (Zp) -- (C1p) -- (C2p) -- (F2m); 
		
		\draw [dashed,black] (3.5,0) -- (3.5,3)
		node[at end,right] {{Stabilizer}}
		node[at end,left] {{CSS}};
        	\node (mu) at (2.3,2.5) {$\bm{\mu}$};
			\node (frommu) at (1.9,2.5) {};
			\node (tomu) at (1.5,2.5) {};
			\node (gamma) at (2.3,1.5) {$\bm{\gamma}$};
			\node (fromgamma) at (1.9,1.5) {};
			\node (togamma) at (1.5,1.5) {};
			\draw[->,black] (frommu) to [out=10,in=-10] (tomu);
			\draw[->,black] (fromgamma) to [out=10,in=-10] (togamma);
			\node (mu_n) at (8.6,2.5) {$\bm{\mu}$};
			\node (frommu_n) at (8.2,2.5) {};
			\node (tomu_n) at (7.9,2.5) {};
			\node (gamma_n) at (8.6,1.5) {$\bm{\gamma}$};
			\node (fromgamma_n) at (8.2,1.5) {};
			\node (togamma_n) at (7.9,1.5) {};
			\draw[->,black] (frommu_n) to [out=10,in=-10] (tomu_n);
			\draw[->,black] (fromgamma_n) to [out=10,in=-10] (togamma_n);
		\end{tikzpicture}
	\end{center}
	
	Then \eqref{eqn:Pi_SZ_U_Z} becomes
	\begin{align*}
	\Pi_{\mathcal{S}_Z}U_Z&=\left( \frac{1}{2^{n_z}} \sum_{\bm{b}\in \mathcal{J}} \epsilon_{(\bm{0},\bm{b})} E(\bm{0},\bm{b})\right) \left( \sum_{\bm{v}\in \F_2^n}f(\bm{v})E(\bm{0},\bm{v})\right)\\
	&= \frac{1}{2^{n_z}} \sum_{\bm{\mu}\in \F_2^n/\mathcal{T}^\perp}\sum_{\bm{\gamma}\in \mathcal{T}^\perp/\mathcal{J}}\left(\sum_{\bm{z} \in \mathcal{J}+\bm{\mu}+\bm{\gamma}}\epsilon_{(\bm{0},\bm{v})}f(\bm{z})\right)\sum_{\bm{u}\in \mathcal{J}+\bm{\mu}+\bm{\gamma}} \epsilon_{(\bm{0},\bm{u})} E(\bm{0},\bm{u}), \numberthis \label{eqn:stab_Pi_SZ_UZ}
	\end{align*}
	and the generator coefficients of $U_Z$ for the stabilizer code $\mathcal{S}$ are given by
	\begin{equation}
	A_{\bm{\mu},\bm{\gamma}}^{\mathcal{S}}\coloneqq \sum_{\bm{z}\in \mathcal{J}+\bm{\mu}+\bm{\gamma}}\epsilon_{(\bm{0},\bm{z})}f(\bm{z}),
	\end{equation}
	where $\bm{\mu}\in \F_2^n/\mathcal{T}^\perp$ and $\bm{\gamma}\in \mathcal{T}^\perp/\mathcal{J}$.
    These generalized generator coefficients inherit the properties described in Theorem \ref{thm:GCs_prop}, that is,
    	\begin{align}\label{eqn:sum_squre_general}
		\sum_{{\bm{\mu}} \in \F_2^n/\mathcal{T}^\perp} \sum_{\bm{\gamma} \in \mathcal{T}^\perp/\mathcal{J}} \overline{A_{\bm{\mu},\bm{\gamma}}^{\mathcal{S}}} A_{\bm{\mu},\bm{\eta} \oplus \bm{\gamma}}^{\mathcal{S}} = 
		\left\{ \begin{array}{lc}
		1 ~~ \text{ if } {\bm{\eta}} = \mathbf{0}, \\
		0~~ \text{ if } {\bm{\eta}} \neq \mathbf{0},
		\end{array}\right.
		\end{align}
		for $\bm{\eta} \in \mathcal{T}^\perp / \mathcal{J}$. 
    Grouping together the projectors $\Pi_{\mathcal{S}_X}$ and $\Pi_{\mathcal{S}_{XZ}}$, we consider the new family of projectors
    \begin{align*}
        \mathcal{L} 
        &\coloneqq \Pi_{\mathcal{S}_{X}}\Pi_{\mathcal{S}_{XZ}}\\
        &= \left(\frac{1}{2^{n_x}}  \sum_{\bm{a}\in \mathcal{K}=\langle K \rangle } \epsilon_{(\bm{a},\bm{0})} E(\bm{a},\bm{0})\right) \left( \frac{1}{2^{n_{xz}}}  \sum_{(\bm{c},\bm{d})\in \mathcal{D} = \langle D \rangle} \epsilon_{(\bm{c},\bm{d})} E(\bm{c},\bm{d})\right)\\
        &= \frac{1}{2^{n_x + n_{xz}}} \sum_{\substack{\bm{a}\in \mathcal{K},\\ (\bm{c},\bm{d})\in \mathcal{D}}} \epsilon_{(\bm{a}\oplus \bm{c})} \imath^{-\bm{a}\bm{d}^T} (-1)^{\bm{d}(\bm{a}*\bm{c})^T} E(\bm{a}\oplus \bm{c},\bm{d}).\numberthis \label{eqn:L_expand}
    \end{align*} 
    For $\bm{\mu} \in \F_2^n / \mathcal{T}^\perp$, we write
    \begin{align}
        \mathcal{L}_{(\bm{\mu})} \coloneqq \left(\frac{1}{2^{n_x}}  \sum_{\bm{a}\in \mathcal{K}=\langle K \rangle } (-1)^{\bm{\mu a}^T} \epsilon_{(\bm{a},\bm{0})} E(\bm{a},\bm{0})\right) \left( \frac{1}{2^{n_{xz}}}  \sum_{(\bm{c},\bm{d})\in \mathcal{D} = \langle D \rangle} (-1)^{\bm{\mu c}^T} \epsilon_{(\bm{c},\bm{d})} E(\bm{c},\bm{d})\right),
    \end{align}
    and note that $\{\mathcal{L}_{(\bm{\mu})}\}_{\bm{\mu} \in \F_2^n / \mathcal{T}^\perp}$ is a resolution of identity.
    
    Replacing the resolution of identity $\{\Pi_{\mathcal{S}_X(\bm{\mu})}\}_{\bm{\mu}\in \F_2^n / \mathcal{C}_2^\perp}$ by $\{\mathcal{L}_{(\bm{\mu})}\}_{\bm{\mu} \in \F_2^n / \mathcal{T}^\perp}$, we conclude that the generator coefficients $\{A_{\bm{\mu},\bm{\gamma}}^{\mathcal{S}}\}_{\bm{\mu}\in \F_2^n/\mathcal{T}^\perp, \bm{\gamma}\in \mathcal{T}^\perp /\mathcal{J}}$ describe the same average logical channel as in \eqref{eqn:kraus_rep} and \eqref{eqn:kraus_ops} since the logical Pauli $Z$ for stabilizer codes can be chosen as $\bm{\gamma} \in \mathcal{T}^\perp / \mathcal{J}$ up to a sign. Based on the description of the average logical channel, we study the conditions for the invariance of a stabilizer code as below. 
	
	\begin{theorem}\label{thm:preserved_by_Uz_stab}
	     Consider a general stabilizer code defined by \eqref{eqn:stabilizer_matrix}. Consider $\mathcal{T}=\langle K,H_x\rangle $, and we have $\mathcal{J}\subset \mathcal{T}^\perp \subset \F_2^n$.  Then, a $Z$-unitary gate $U_Z=\sum_{\bm{v}\in \F_2^n}f(\bm{v})E(\bm{0},\bm{v})$ preserves $\mathcal{V}(\mathcal{S})$ (i.e. $U_Z\Pi_\mathcal{S} U_Z^\dagger=\Pi_\mathcal{S}$) if and only if 
		\begin{equation}\label{eqn:preserved_by_UZ_stab}
		\sum_{\bm{\gamma}\in \mathcal{T}^\perp/\mathcal{J}} |A_{\bm{0},\bm{\gamma}}^{\mathcal{S}}|^2=1.
		\end{equation}
	\end{theorem}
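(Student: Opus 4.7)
The plan is to follow the proof of Theorem~\ref{thm:preserved_by_Uz} (the CSS case), systematically substituting each CSS-specific ingredient by its stabilizer-code analogue. The substitutions are $\mathcal{C}_1^\perp \leftrightarrow \mathcal{J}$, $\mathcal{C}_2^\perp \leftrightarrow \mathcal{T}^\perp$, the $X$-syndrome projectors $\Pi_{\mathcal{S}_X(\bm{\mu})} \leftrightarrow \mathcal{L}_{(\bm{\mu})}$, and the generator coefficients $A_{\bm{\mu},\bm{\gamma}} \leftrightarrow A^{\mathcal{S}}_{\bm{\mu},\bm{\gamma}}$. The starting observation is the factorization $\Pi_{\mathcal{S}} = \Pi_{\mathcal{S}_Z}\mathcal{L}_{(\bm{0})}$, together with the fact that $U_Z$, being diagonal, commutes with the diagonal projector $\Pi_{\mathcal{S}_Z}$.

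First I would establish the two structural properties of the projector family $\{\mathcal{L}_{(\bm{\mu})}\}_{\bm{\mu}\in\F_2^n/\mathcal{T}^\perp}$ asserted in the preamble: pairwise orthogonality $\mathcal{L}_{(\bm{\mu})}\mathcal{L}_{(\bm{\mu}')} = \delta_{\bm{\mu},\bm{\mu}'}\mathcal{L}_{(\bm{\mu})}$ and the resolution of the identity $\sum_{\bm{\mu}}\mathcal{L}_{(\bm{\mu})} = I$. Both follow by expanding $\mathcal{L}_{(\bm{\mu})}$ as in~\eqref{eqn:L_expand} and tracking the Pauli phases $\imath^{-\bm{a}\bm{d}^T}(-1)^{\bm{d}(\bm{a}*\bm{c})^T}$; the characters $(-1)^{\bm{\mu}\bm{a}^T}(-1)^{\bm{\mu}\bm{c}^T}$ then exhibit $\{\mathcal{L}_{(\bm{\mu})}\}$ as the spectral projectors of the abelian Hermitian family generated by the $X$- and $XZ$-type stabilizer elements.

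I would then combine $U_Z\Pi_{\mathcal{S}_Z} = \Pi_{\mathcal{S}_Z}U_Z$ with the expansion~\eqref{eqn:stab_Pi_SZ_UZ} to write
\begin{equation*}
U_Z\Pi_{\mathcal{S}} = \frac{1}{2^{n_z}}\sum_{\bm{\mu},\bm{\gamma}} A^{\mathcal{S}}_{\bm{\mu},\bm{\gamma}}\Bigl(\sum_{\bm{u}\in\mathcal{J}+\bm{\mu}+\bm{\gamma}} \epsilon_{(\bm{0},\bm{u})}E(\bm{0},\bm{u})\Bigr)\mathcal{L}_{(\bm{0})},
\end{equation*}
and then insert $I = \sum_{\bm{\mu}'}\mathcal{L}_{(\bm{\mu}')}$ on the left. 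Using (anti)commutation of each $E(\bm{0},\bm{u})$ with the $X$- and $XZ$-type generators inside $\mathcal{L}_{(\bm{\mu}')}$, together with pairwise orthogonality, shows that only the coset $\bm{\mu}' = \bm{\mu}$ survives, so that the Hilbert--Schmidt inner product evaluates to
\begin{equation*}
\mathrm{Tr}\bigl(U_Z\Pi_{\mathcal{S}} U_Z^{\dagger}\,\Pi_{\mathcal{S}}\bigr) = \mathrm{Tr}(\Pi_{\mathcal{S}})\sum_{\bm{\gamma}\in\mathcal{T}^\perp/\mathcal{J}}\bigl|A^{\mathcal{S}}_{\bm{0},\bm{\gamma}}\bigr|^2.
\end{equation*}
Since $U_Z\Pi_{\mathcal{S}}U_Z^{\dagger}$ and $\Pi_{\mathcal{S}}$ are projectors of equal rank, this trace equals $\mathrm{Tr}(\Pi_{\mathcal{S}})$ if and only if the two projectors coincide; together with the normalization $\sum_{\bm{\mu},\bm{\gamma}}|A^{\mathcal{S}}_{\bm{\mu},\bm{\gamma}}|^2 = 1$ obtained from~\eqref{eqn:sum_squre_general} with $\bm{\eta}=\bm{0}$, invariance is equivalent to the vanishing of every contribution from $\bm{\mu}\neq \bm{0}$, i.e.\ to~\eqref{eqn:preserved_by_UZ_stab}.

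The hardest part will be the bookkeeping for the mixed $X$--$XZ$ product in $\mathcal{L}_{(\bm{\mu})}$: the Pauli phases $\imath^{-\bm{a}\bm{d}^T}$ and $(-1)^{\bm{d}(\bm{a}*\bm{c})^T}$ from~\eqref{eqn:L_expand} must combine consistently with the stabilizer signs $\epsilon_{(\cdot,\cdot)}$ so that the cross-terms produced by pushing $E(\bm{0},\bm{u})$ through $\mathcal{L}_{(\bm{\mu}')}$ collapse onto a single coset. A secondary obstacle is to verify that the representatives $\bm{\gamma}\in\mathcal{T}^\perp/\mathcal{J}$ genuinely index logical $Z$ operators modulo $\mathcal{S}$, so that~\eqref{eqn:preserved_by_UZ_stab} encodes unitarity of the induced logical operator, mirroring Remark~\ref{rem:log_op_U_Z}.
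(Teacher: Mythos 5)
Your proposal is correct, but the core of your argument is genuinely different from the paper's. The paper proves the two directions separately: for sufficiency it uses Theorem \ref{thm:GCs_prop} (in the form \eqref{eqn:sum_squre_general}) to kill all $A^{\mathcal{S}}_{\bm{\mu},\bm{\gamma}}$ with $\bm{\mu}\neq\bm{0}$ and then checks that the surviving $E(\bm{0},\bm{u})$, $\bm{u}\in\mathcal{J}+\bm{\gamma}\subset\mathcal{T}^\perp$, commute with $\mathcal{L}=\Pi_{\mathcal{S}_X}\Pi_{\mathcal{S}_{XZ}}$; for necessity it hits $U_Z\Pi_{\mathcal{S}}=\Pi_{\mathcal{S}}U_Z$ with $\mathcal{L}_{(\bm{\mu_0})}$ for $\bm{\mu_0}\neq\bm{0}$ and invokes linear independence of Pauli matrices, the only genuinely new work in the stabilizer setting being the verification that the terms $E(\bm{a}\oplus\bm{c},\bm{d})$ in \eqref{eqn:L_expand} are pairwise distinct (which uses the hypothesis $\mathcal{J}\cap\langle D_z\rangle=\{\bm{0}\}$, i.e.\ that the row space of $D$ has no element with trivial $X$-part). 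You instead obtain both directions at once from $\mathrm{Tr}\bigl(U_Z\Pi_{\mathcal{S}}U_Z^\dagger\Pi_{\mathcal{S}}\bigr)=\mathrm{Tr}(\Pi_{\mathcal{S}})\sum_{\bm{\gamma}}|A^{\mathcal{S}}_{\bm{0},\bm{\gamma}}|^2$ together with the Cauchy--Schwarz saturation criterion for equal-rank projectors; this is valid (the cross-terms $\bm{\gamma}\neq\bm{\gamma}'$ vanish in the trace because $E(\bm{0},\bm{\gamma}\oplus\bm{\gamma}')$ with $\bm{\gamma}\oplus\bm{\gamma}'\in\mathcal{T}^\perp\setminus\mathcal{J}$ is not proportional to any stabilizer element, precisely by the same hypothesis on $D$). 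What your route buys is a uniform argument that replaces the Pauli-linear-independence step by tracelessness of non-stabilizer Paulis against $\Pi_{\mathcal{S}}$; what it costs is the extra verification of the trace identity, whose coset bookkeeping (the $\epsilon$'s multiplying as characters along cosets of $\mathcal{J}$) you correctly flag as the delicate point. Your closing worry about whether the $\bm{\gamma}\in\mathcal{T}^\perp/\mathcal{J}$ index logical $Z$ operators is not needed for the theorem itself, only for the interpretation in the analogue of Remark \ref{rem:log_op_U_Z}.
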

	\begin{proof}
		$\Leftarrow$: We assume \eqref{eqn:preserved_by_UZ_stab} holds and derive $U_Z\Pi_{\mathcal{S}}=\Pi_{\mathcal{S}}U_Z$. It follows from \eqref{eqn:sum_squre_general} that $A_{\bm{\mu},\bm{\gamma}}^{\mathcal{S}}=0$ when $\bm{\mu}\neq \bm{0}$. Then, by \eqref{eqn:stab_Pi_SZ_UZ}, we have
		\begin{equation}\label{eqn:leftdirection1}
		U_Z\Pi_{\mathcal{S}_Z}=\Pi_{\mathcal{S}_Z}U_Z=\frac{1}{2^{n-k_1}}\sum_{\bm{\gamma}\in \mathcal{T}^\perp/\mathcal{J} } A_{\bm{0},\bm{\gamma}}^{\mathcal{S}} \left( \sum_{\bm{u}\in \mathcal{C}_1^\perp+\bm{\gamma}}\epsilon_{(\bm{0},\bm{u})}E(\bm{0},\bm{u})\right).
		\end{equation}
		For any $\bm{\gamma}\in \mathcal{T}^\perp/\mathcal{J}$ and $\bm{u}\in \mathcal{C}_1^\perp+\bm{\gamma} \subset \mathcal{T}^\perp$, we have $E(\bm{0},\bm{u})\mathcal{L}=\mathcal{L}E(\bm{0},\bm{u})$, where $\mathcal{L}=\Pi_{\mathcal{S}_{X}}\Pi_{\mathcal{S}_{XZ}}$. Hence, 
		\begin{align}
		U_Z\Pi_{\mathcal{S}}&=U_Z\Pi_{\mathcal{S}_Z}\mathcal{L} 
		= \mathcal{L}U_Z\Pi_{\mathcal{S}_Z}=\mathcal{L}\Pi_{\mathcal{S}_Z}U_Z=\Pi_{\mathcal{S}}U_Z.
		\end{align}
		
		$\Rightarrow$: We assume $U_Z\Pi_{\mathcal{S}}=\Pi_{\mathcal{S}}U_Z$ and show \eqref{eqn:preserved_by_UZ_stab}. The idea is the same as in the proof of Theorem \ref{thm:preserved_by_Uz}, and it remains to show that each term in \eqref{eqn:L_expand} is distinct in order to use the independence of Pauli matrices. 
		Assume $(\bm{a}\oplus\bm{c},\bm{d}) = (\bm{a}' \oplus \bm{c}',\bm{d}')$ for some $\bm{a},\bm{a}' \in \mathcal{K}$ and $(\bm{c},\bm{d}), (\bm{c}',\bm{d}') \in \mathcal{D}$. Then, $\bm{d}=\bm{d}'$ and $\bm{a} \oplus \bm{c} = \bm{a}' \oplus \bm{c}'$. 
		Note that $(\bm{c},\bm{d}) \oplus (\bm{c}',\bm{d}') = (\bm{c}\oplus\bm{c}',\bm{0}) \in \mathcal{D}$. Since $J\cap D_x = \{\bm{0}\}$, we have $\bm{c}\oplus\bm{c}'=\bm{0}$, which means $\bm{c}=\bm{c}'$ and $\bm{a}=\bm{a}'$.
	\end{proof}
	
	\begin{theorem}
	\label{thm:opt_CSS_among_non_degen}
	     Consider an $\llbr n,k,d \rrbr $ stabilize code generated by the matrix 
	     $
		G_{\mathcal{S}}=\left[\begin{array}{c c}
		K & 0 \\ \hline
		0 & J\\ \hline 
		 \multicolumn{2}{c}{D}\\
	\end{array} \right]$ that satisfies Theorem \ref{thm:preserved_by_Uz_stab}. Let $\mathcal{J}$ be the space defined by the generator matrix $J$. Assume the minimum weight in $\mathcal{J}$ is at least $d$ (i.e. $\min_{\bm{z}\in \mathcal{J}}w_H(z)\geq d$). Then the CSS code generated by 
	$
		G_{\mathcal{S'}}=\left[\begin{array}{c c}
		K & 0 \\ \hline
		0 & J\\ \hline 
		D_x & 0 \\
	\end{array} \right]$ satisfies Theorem \ref{thm:preserved_by_Uz}. Moreover, the CSS code has parameters $n'=n$, $k'=k$, and the $Z$-distance $d'_Z=\min_{\bm{z}\in\langle K,D_x\rangle^\perp \setminus \mathcal{J}}w_H(\bm{z})\ge d$. 
	\end{theorem}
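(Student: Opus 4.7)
The plan is to verify the four claims of the theorem in turn—well-definedness of the CSS code, preservation of the parameters $(n,k)$, invariance under $U_Z$, and the lower bound on $d'_Z$—by exploiting that the stabilizer generator coefficients of Appendix \ref{sec:stab_gcf} coincide formally with the CSS generator coefficients of Definition \ref{def:gcs} once one identifies the cosets correctly.

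First I would establish that the new CSS code is well-defined and encodes $k$ logical qubits. Commutation of $(K,0)$ with $(0,J)$ in the original stabilizer forces $K J^T = 0$, and commutation of $(D_x, D_z)$ with $(0,J)$ forces $D_x J^T = 0$, so $\langle K, D_x\rangle \subseteq \mathcal{J}^\perp$ and the two families of CSS generators commute. To see that $K$ and $D_x$ contribute $n_x + n_{xz}$ independent $X$-stabilizers, suppose $\sum a_j K(j) + \sum b_i D_x(i) = \bm{0}$; the matching combination of original generators is the pure-$Z$ vector $(\bm{0}, \sum b_i D_z(i))$, but the hypothesis that the row space of $D$ contains no pure-$Z$ vector, combined with independence of the original generators $K, J, D$, forces all $b_i = 0$, and hence all $a_j = 0$. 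Thus the new CSS code has $n_x + n_{xz} + n_z = n - k$ independent stabilizers on $n$ qubits and encodes $k$ qubits.

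Next I would match generator coefficients. Choosing the same character vector $\bm{y}$ on $\mathcal{J}$ as in the original code, Definition \ref{def:gcs} applied to CSS$(X, \mathcal{T}; Z, \mathcal{J})$ yields
\begin{equation*}
A_{\bm{\mu},\bm{\gamma}} \;=\; \sum_{\bm{z} \in \mathcal{J} + \bm{\mu} + \bm{\gamma}} \epsilon_{(\bm{0},\bm{z})} f(\bm{z}), \quad \bm{\mu} \in \F_2^n/\mathcal{T}^\perp,\ \bm{\gamma} \in \mathcal{T}^\perp/\mathcal{J},
\end{equation*}
which is literally the formula defining the stabilizer generator coefficients $A^{\mathcal{S}}_{\bm{\mu},\bm{\gamma}}$ in Appendix \ref{sec:stab_gcf}. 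The hypothesis of Theorem \ref{thm:preserved_by_Uz_stab} applied to the original code then reads $\sum_{\bm{\gamma}\in \mathcal{T}^\perp/\mathcal{J}} |A_{\bm{0},\bm{\gamma}}|^2 = 1$, and invoking Theorem \ref{thm:preserved_by_Uz} on the new CSS code shows that $U_Z$ preserves $\mathcal{V}(\mathcal{S}')$. Because the unique Kraus operator $B_{\bm{0}}$ in \eqref{eqn:kraus_ops} depends only on these same coefficients, the induced logical operator is preserved as well.

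For the distance claim I would argue that the pure-$Z$ part of the original stabilizer group is exactly $\mathcal{J}$. Any stabilizer element decomposes as $(\bm{a}+\bm{c},\bm{b}+\bm{d})$ with $\bm{a}\in\langle K\rangle$, $\bm{b}\in\mathcal{J}$, and $(\bm{c},\bm{d})\in\mathcal{D}$; being pure-$Z$ requires $\bm{a}+\bm{c}=\bm{0}$, and the independence argument above together with the no-pure hypothesis on $\mathcal{D}$ forces $\bm{a}=\bm{c}=\bm{d}=\bm{0}$. A pure-$Z$ operator $E(\bm{0},\bm{z})$ lies in the normalizer of $\mathcal{S}$ iff $\bm{z}\in \mathcal{T}^\perp$, so the non-trivial pure-$Z$ logicals of the original code are exactly $\mathcal{T}^\perp\setminus\mathcal{J}$ and have weight at least $d$ by definition of the original distance. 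Since this is the very set parametrising nontrivial $Z$-logicals of the new CSS code, and since the hypothesis $\min_{\bm{z}\in\mathcal{J}} w_H(\bm{z})\ge d$ controls the stabilizer side, we conclude $d'_Z = \min_{\bm{z}\in\mathcal{T}^\perp\setminus\mathcal{J}} w_H(\bm{z}) \ge d$. The main technical hurdle is precisely this identification of the pure-$Z$ part of the original stabilizer with $\mathcal{J}$: it is the only place where the no-pure assumption on $D$ must be used in two complementary ways, both to pin down linear independence of $K \cup D_x$ and to eliminate spurious pure-$Z$ stabilizer elements arising from mixed generators.
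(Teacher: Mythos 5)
Your proposal matches the paper's proof in structure: establish $k'=k$ via joint independence of the rows of $K$ and $D_x$; show invariance by identifying the generator coefficients of the CSS code with those of the stabilizer code; and prove $d'_Z\ge d$ by mapping a low-weight pure-$Z$ logical of $\mathcal{S'}$ to a low-weight pure-$Z$ logical of $\mathcal{S}$. You add detail that the paper relegates to Remark \ref{rem:opt_CSS_diagonal_gates} (the generator-coefficient identification and the commutativity check), and your argument for the distance bound is cleaner than the paper's, which starts from a general $(\bm{s},\bm{t})$ even though the $Z$-distance only concerns pure-$Z$ operators.

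There is, however, a shared gap worth naming. You claim that $\sum a_j K(j) + \sum b_i D_x(i) = \bm{0}$ forces all $b_i = 0$ because the resulting pure-$Z$ vector $(\bm{0}, \sum b_i D_z(i))$ contradicts the hypothesis that the row space of $D$ has no pure-$Z$ vectors. But $(\bm{0},\sum b_i D_z(i)) = \sum a_j(K(j),\bm{0}) + \sum b_i D(i)$ lives in the span of the $K$ and $D$ rows, not in the row space of $D$ alone, so the stated hypothesis does not directly apply. A concrete failure: take $n=2$, $K=[1,0]$, $J$ empty, and $D=[1,0\,|\,0,1]$. The row space of $D$ is $\{\bm{0},(1,0\,|\,0,1)\}$ and has no pure-$X$ or pure-$Z$ vector, yet $K + D_x = \bm{0}$ with $b_1=1$; here the pure-$Z$ part of $\mathcal{S}$ is $\langle Z_2\rangle$, strictly larger than $\mathcal{J}=\{\bm{0}\}$, and $k' = 1 \ne 0 = k$. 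What is genuinely needed is the stronger canonical-form assumption that $K$ spans the full pure-$X$ part of $\mathcal{S}$ and $J$ the full pure-$Z$ part (equivalently, that no nontrivial combination of $K$ and $D$ rows is pure, and symmetrically for $J$ and $D$). The paper's own proof asserts "$k'=k$ follows from the fact that $D_x\cap K=\{\bm{0}\}$" without establishing it, so it carries the identical gap; under the intended canonical-form reading both your argument and the paper's go through. Your closing sentence, which reuses the independence claim to pin down the pure-$Z$ part of $\mathcal{S}$, is then circular unless the canonical-form hypothesis is invoked directly; and the appeal to $\min_{\bm{z}\in\mathcal{J}}w_H(\bm{z})\ge d$ is not actually used anywhere in the distance bound (it plays no role in either your argument or the paper's, since $d'_Z$ ranges over $\mathcal{T}^\perp\setminus\mathcal{J}$ and never touches $\mathcal{J}$).
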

	\begin{proof}
	    From the construction of $G_{\mathcal{S'}}$, the number of physical qubits does not change ($n'=n$). Also, $k'=k$ follows from the fact that $D_x\cap K=\{\bm{0}\}$. It remains to show that the new $Z$-distance $d'_Z \ge d$.
	    
	    Assume there exists $(\bm{s},\bm{t})\in \mathcal{N}(\mathcal{S'})\setminus \mathcal{S'}$ such that $h(\bm{s},\bm{t})<d$ and $\bm{t}\neq \bm{0}$, where $h$ is the Pauli weight (number of nontrivial Pauli matrices) defined by 
	    \begin{equation}\label{eqn:pauliweight}
	        h(\bm{s},\bm{t})=w_H(\bm{s})+w_H(\bm{t})-w_H(\bm{s}*\bm{t}).
	    \end{equation}
	    
	    Then, $h(\bm{0},\bm{t})<d$ and $\bm{t}\in M^\perp \cap D_x^\perp$, which implies that $(\bm{0},\bm{t})\in \mathcal{N}(S)$. Also by definition, we have $J\cap D_z=\{\bm{0}\}$ and thus $(\bm{0},\bm{t})\in \mathcal{N}(\mathcal{S})\setminus \mathcal{S}$. However, by assumption the distance of $\mathcal{V}(\mathcal{S})$ is $d$ and thus $\mathcal{N}(\mathcal{S})\setminus \mathcal{S}$ has minimum weight $d$, which is a contradiction. Therefore, $d'_Z\ge d$. 
	\end{proof}
	
	\begin{remark}
	    \normalfont
	    \label{rem:opt_CSS_diagonal_gates}
	    Note that the values of generator coefficients are the same for the $\llbr n,k,d \rrbr $ stabilizer code and the $\llbr n'=n,k'=k,d'_Z \ge d \rrbr $ CSS code. The induced logical operator by $U_Z$ remains the same. It follows from Theorem \ref{thm:opt_CSS_among_non_degen} that given an $\llbr n,k,d \rrbr $ non-degenerate stabilizer code supporting a physical $U_Z=\sum_{\bm{v}\in \F_2^n}f(\bm{v})E(\bm{0},\bm{v})$ quantum (unitary) gate, there exists an equivalent CSS code (since the Pauli expansion of the physical gate $U_Z$ has support only on Pauli $Z$, we only compare the distance $d$ of stabilizer code with the $Z$-distance of the equivalent CSS code) supporting the same operation. Note that a similar argument applies to $U_X=\sum_{\bm{v}\in \F_2^n}f(\bm{v})E(\bm{v},\bm{0})$.
	\end{remark}
	
    \section{Proofs for All Results}
    \subsection{Proof of Lemma \ref{lemma:equ_def_gcs}}
    \label{subsec:proof_eqv_GCs_Rz}
	     Setting $\mathcal{B} = \{\bm{z}\in \mathcal{C}_1^\perp \mid \epsilon_{(\bm{0},\bm{z})} = 1\}$, we have $\mathcal{B}^\perp = \langle \mathcal{C}_1,\bm{y}\rangle $. Setting 
	     \begin{equation}
	         S_p = \sum_{\bm{z}\in \mathcal{B} + \bm{\mu} + \bm{\gamma}} \left(\cos\frac{\theta}{2}\right)^{n-w_H(\bm{z})}\left(-\imath \sin\frac{\theta}{2}\right)^{w_H(\bm{z})},
	     \end{equation} 
	     and 
	     \begin{equation}
	         S_n = \sum_{\bm{z} \in \mathcal{C}_1^\perp + \bm{\mu} + \bm{\gamma}} \left(\cos\frac{\theta}{2}\right)^{n-w_H(\bm{z})}\left(-\imath \sin\frac{\theta}{2}\right)^{w_H(\bm{z})},
	     \end{equation}
	     we may rewrite \eqref{eqn:def_gc} as 
		\begin{equation}
		 (-1)^{(\bm{\mu}\oplus \bm{\gamma})\bm{y}^T} A_{\bm{\mu},\bm{\gamma}}(\theta) = 2S_p -S_n.
		\end{equation}
		Since $\mathcal{B}+\bm{\mu} +\bm{\gamma} = \langle \mathcal{B}, \bm{\mu} \oplus \bm{\gamma} \rangle \setminus \mathcal{B}$ and $\mathcal{C}_1^\perp + \bm{\mu} + \bm{\gamma} = \langle \mathcal{C}_1^\perp, \bm{\mu} \oplus \bm{\gamma} \rangle \setminus \mathcal{C}_1^\perp$, we have
		\begin{equation}\label{eqn:sum_of_wt_enu}
		 (-1)^{(\bm{\mu}\oplus \bm{\gamma})\bm{y}^T} A_{\bm{\mu},\bm{\gamma}}(\theta)  = 2(P_{\theta}[\langle \mathcal{B}, \bm{\mu} \oplus \bm{\gamma} \rangle] - P_{\theta}[\mathcal{B}]) - (P_{\theta}[\langle \mathcal{C}_1^\perp, \bm{\mu} \oplus \bm{\gamma} \rangle ] - P_{\theta}[\mathcal{C}_1^\perp]).
		\end{equation}
		We may apply the MacWilliams Identities to obtain
		\begin{align*}
		P_{\theta}[\langle \mathcal{B}, \bm{\mu} \oplus \bm{\gamma} \rangle] 
		&= \frac{1}{|\mathcal{B}^\perp \cap (\bm{\mu} \oplus \bm{\gamma})^\perp|} P_{\mathcal{B}^\perp \cap (\bm{\mu} \oplus \bm{\gamma})^\perp} \left(\cos\frac{\theta}{2} - \imath \sin \frac{\theta}{2}, \cos\frac{\theta}{2} + \imath \sin \frac{\theta}{2}\right) \\ &=\frac{1}{|\mathcal{B}^\perp \cap (\bm{\mu} \oplus \bm{\gamma})^\perp|}\sum_{\bm{z}\in \mathcal{B}^\perp \cap (\bm{\mu} \oplus \bm{\gamma})^\perp} \left(\cos\frac{\theta}{2} - \imath\sin\frac{\theta}{2}\right)^{n-2w_H(\bm{z})} \\
		&= \frac{2}{|\mathcal{B}^\perp|} \sum_{\bm{z}\in \mathcal{B}^\perp \cap (\bm{\mu} \oplus \bm{\gamma})^\perp}\left(e^{-\imath\frac{\theta}{2}}\right)^{n-2w_H(\bm{z})},\numberthis\label{eqn:eqv_simp1}
		\end{align*}
		and similarly
		\begin{align}\label{eqn:eqv_simp2}
		P_{\theta}[\mathcal{B}] 
		&= \frac{1}{|\mathcal{B}^\perp|}\sum_{\bm{z}\in \mathcal{B}^\perp}\left(e^{-\imath\frac{\theta}{2}}\right)^{n-2w_H(\bm{z})}.
		\end{align}
		We combine \eqref{eqn:eqv_simp1} and \eqref{eqn:eqv_simp2} to obtain 
		\begin{align*}
		P_{\theta}[\langle \mathcal{B}, \bm{\mu} \oplus \bm{\gamma}\rangle] - P_{\theta}[\mathcal{B}] 
		&= \frac{2}{|\mathcal{B}^\perp|} \sum_{\bm{z}\in \mathcal{B}^\perp \cap (\bm{\mu} \oplus \bm{\gamma})^\perp}\left(e^{-\imath\frac{\theta}{2}}\right)^{n-2w_H(\bm{z})} - 
		\frac{1}{|\mathcal{B}^\perp|}\sum_{\bm{z}\in \mathcal{B}^\perp}\left(e^{-\imath\frac{\theta}{2}}\right)^{n-2w_H(\bm{z})}\\
		&= \frac{1}{|\mathcal{B}^\perp|} \left( \sum_{\bm{z}\in \mathcal{B}^\perp \cap (\bm{\mu} \oplus \bm{\gamma})^\perp}\left(e^{-\imath\frac{\theta}{2}}\right)^{n-2w_H(\bm{z})}  - \sum_{\bm{z}\in \mathcal{B}^\perp \setminus (\bm{\mu} \oplus \bm{\gamma})^\perp}\left(e^{-\imath\frac{\theta}{2}}\right)^{n-2w_H(\bm{z})} \right)\\
		&= \frac{1}{|\mathcal{B}^\perp|} \sum_{\bm{z}\in \mathcal{B}^\perp}(-1)^{(\bm{\mu} \oplus\bm{\gamma})\bm{z}^T} \left(e^{-\imath\frac{\theta}{2}}\right)^{n-2w_H(\bm{z})}.\numberthis \label{eqn:Mac1}
		\end{align*}
		Similarly,
		\begin{equation}
		P_{\theta}[\langle \mathcal{C}_1^\perp, \bm{\mu} \oplus \bm{\gamma}\rangle] - P_{\theta}[\mathcal{C}_1^\perp] = \frac{1}{|\mathcal{C}_1|} \sum_{\bm{z}\in \mathcal{C}_1} (-1)^{(\bm{\mu} \oplus \bm{\gamma})\bm{z}^T} \left(e^{-\imath\frac{\theta}{2}}\right)^{n-2w_H(\bm{z})}.\label{eqn:Mac2}
		\end{equation}
		Since $\mathcal{B}^\perp \setminus \mathcal{C}_1 = \mathcal{C}_1 +\bm{y}$, it follows from \eqref{eqn:sum_of_wt_enu}, \eqref{eqn:Mac1}, \eqref{eqn:Mac2} that 
		\begin{align*}
		 (-1)^{(\bm{\mu}\oplus \bm{\gamma})\bm{y}^T} A_{\bm{\mu},\bm{\gamma}}(\theta) 
		&= \frac{2}{|\mathcal{B}^\perp|} \sum_{\bm{z}\in \mathcal{B}^\perp}(-1)^{(\bm{\mu} \oplus \bm{\gamma})\bm{z}^T} \left(e^{-\imath\frac{\theta}{2}}\right)^{n-2w_H(\bm{z})} - \frac{1}{|\mathcal{C}_1|} \sum_{\bm{z}\in \mathcal{C}_1} (-1)^{(\bm{\mu} \oplus \bm{\gamma})\bm{z}^T} \left(e^{-\imath\frac{\theta}{2}}\right)^{n-2w_H(\bm{z})} \\
		&= \frac{1}{|\mathcal{C}_1|} \sum_{\bm{z}\in \mathcal{C}_1 + \bm{y}}(-1)^{(\bm{\mu} \oplus \bm{\gamma})\bm{z}^T} \left(e^{-\imath\frac{\theta}{2}}\right)^{n-2w_H(\bm{z})}, \numberthis
		\end{align*}
		which completes the proof.\qed
		
   \subsection{Derivation of \eqref{eqn:kraus_simplify}}
    \label{seubsec:deriv_eqn}
    \begin{align*} 
    U_Z \Pi_{\mathcal{S}} 
	&= U_Z\Pi_{\mathcal{S}_Z}\Pi_{\mathcal{S}_X} \\
	&=\frac{1}{2^{n-k_1+k_2}}\sum_{\bm{\mu}\in \F_2^n/\mathcal{C}_2^\perp}\sum_{\bm{\gamma}\in \mathcal{C}_2^\perp/\mathcal{C}_1^\perp} A_{\bm{\mu},\bm{\gamma}}\left( \sum_{\bm{u}\in \mathcal{C}_1^\perp+\bm{\mu}+\bm{\gamma}}\epsilon_{(\bm{0},\bm{u})}E(\bm{0},\bm{u})\right)\left(\sum_{\bm{a}\in \mathcal{C}_2}\epsilon_{(\bm{a},\bm{0})}E(\bm{a},\bm{0})\right)\\
	&=\frac{1}{2^{n-k_1+k_2}}\sum_{\bm{\mu}\in \F_2^n/\mathcal{C}_2^\perp}\sum_{\bm{\gamma}\in \mathcal{C}_2^\perp/\mathcal{C}_1^\perp} A_{\bm{\mu},\bm{\gamma}}\left(\sum_{\bm{a}\in \mathcal{C}_2}(-1)^{\bm{a}\bm{\mu}^T}\epsilon_{(\bm{a},\bm{0})}E(\bm{a},\bm{0})\right)\left( \sum_{\bm{u}\in \mathcal{C}_1^\perp+\bm{\mu}+\bm{\gamma}}\epsilon_{(\bm{0},\bm{u})}E(\bm{0},\bm{u})\right)\\
	&=\frac{1}{2^{n-k_1}}\sum_{\bm{\mu}\in \F_2^n/\mathcal{C}_2^\perp}\Pi_{\mathcal{S}_X(\bm{\mu})} \left(\sum_{\bm{\gamma}\in \mathcal{C}_2^\perp/\mathcal{C}_1^\perp} A_{\bm{\mu},\bm{\gamma}} \left( \sum_{\bm{u}\in \mathcal{C}_1^\perp+\bm{\mu}+\bm{\gamma}}\epsilon_{(\bm{0},\bm{u})}E(\bm{0},\bm{u})\right)\right),
	\numberthis 
	\end{align*}
	where $\Pi_{\mathcal{S}_X(\bm{\mu})} = \frac{1}{|\mathcal{C}_2|}\sum_{\bm{a}\in \mathcal{C}_2} (-1)^{\bm{a}\bm{\mu}^T} \epsilon_{(\bm{a},\bm{0})} E(\bm{a},\bm{0})$. \qed
	
	\subsection{Derivation of $\theta(\theta_L)$}
	\label{sec:log_rot}
	  Since there is only one logical qubit, $\bm{\gamma}$ is either zero or non-zero. It then follows from (76) and (77) that the effective physical operator corresponding to the syndrome $\bm{\mu}=\bm{0}$ is 
    \begin{align}
    B_{\bm{\mu}=\bm{0}} = A_{\bm{\mu}=\bm{0},\bm{\gamma}=\bm{0}}E(\bm{0},\bm{0}) + A_{\bm{\mu}=\bm{0},\bm{\gamma}\neq\bm{0}}E(\bm{0},\bm{\gamma}\neq \bm{0}).
    \end{align}
    Thus, if we observe the trivial syndrome,
    then the induced logical portion is 
    \begin{align}
    U_Z^L(\bm{\mu}=\bm{0}) = A_{\bm{\mu}=\bm{0},\bm{\gamma}=\bm{0}}I_L + A_{\bm{\mu}=\bm{0},\bm{\gamma}\neq\bm{0}}Z_L = 
    \begin{bmatrix} A_{\bm{0},\bm{\gamma}=\bm{0}}+A_{\bm{0},\bm{\gamma}\neq\bm{0}} & 0 \\ 0 & A_{\bm{0},\bm{\gamma}=\bm{0}}-A_{\bm{0},\bm{\gamma}\neq\bm{0}}\end{bmatrix}.
    \end{align}
    Since we also assume that one of the pair $(A_{\bm{\mu}=\bm{0},\bm{\gamma}=\bm{0}},A_{\bm{\mu}=\bm{0},\bm{\gamma}\neq\bm{0}})$ is real and the other is pure imaginary, we can consider $U_Z^L(\bm{\mu}=\bm{0})$ as a $Z$-rotation with angle $\theta_L$ up to some logical Pauli $Z_L$:
    \begin{align}
    U_Z^L(\bm{\mu}=\bm{0}) = 	\left\{ 
	\begin{array}{lc}
	\cos(\theta_L/2) I_L +\imath\sin(\theta_L/2)Z_L  =R_Z(\theta_L)& \text{if } A_{\bm{\mu}=\bm{0},\bm{\gamma}=\bm{0}} \text{ is real } \\
    \imath\sin(\theta_L/2)I_L + \cos(\theta_L/2) Z_L =Z_LR_Z(\theta_L) & \text{if } A_{\bm{\mu}=\bm{0},\bm{\gamma}\neq\bm{0}} \text{ is real }
	\end{array}\right., 
    \end{align}
    with 
    $\theta_L/2 =\tan^{-1}\left(\frac{\sin(\theta_L/2)}{\cos(\theta_L/2)}\right) = \tan^{-1}\left(\frac{\imath A_{\bm{\mu}=\bm{0},\bm{\gamma}\neq\bm{0}}}{A_{\bm{\mu}=\bm{0},\bm{\gamma}=\bm{0}}}\right).$
	
    \subsection{Proof of Theorem \ref{thm:GCs_prop}}
    \label{subsec:proof_Kraus_ver}
      It follows from \eqref{eqn:def_gc} that 
		\begin{align*}
		\overline{A_{\bm{\mu},\bm{\gamma}}} A_{\bm{\mu},\bm{\eta} \oplus \bm{\gamma}} &=\left(\sum_{\bm{z}\in \mathcal{C}_1^\perp+\bm{\mu}+\bm{\gamma}}\epsilon_{(\bm{0},\bm{z})}f(\bm{z}) \right)\left(\sum_{\bm{z'}\in \mathcal{C}_1^\perp+\bm{\mu}+\bm{\eta}+\bm{\gamma}}\epsilon_{(\bm{0},\bm{z'})} f(n,\bm{z'})\right)\\
		&=\sum_{\bm{w}\in \mathcal{C}_1^\perp+\bm{\eta}}\epsilon_{(\bm{0},\bm{w})}\left(\sum_{\bm{z}\in \mathcal{C}_1^\perp + \bm{\mu} + \bm{\gamma}}f(\bm{z})\overline{f(\bm{z}\oplus \bm{w})} \right).\numberthis
		\end{align*}
    Then, we have 
		\begin{align*}
		\sum_{\bm{\mu}\in \F_2^n / \mathcal{C}_2^\perp} 
		\sum_{\bm{\gamma} \in \mathcal{C}_2^\perp / \mathcal{C}_1^\perp } 
		\overline{A_{\bm{\mu},\bm{\gamma}}} A_{\bm{\mu},\bm{\eta} \oplus \bm{\gamma}}
		&=\sum_{\bm{\mu}\in \F_2^n / \mathcal{C}_2^\perp} 
		\sum_{{\bm\gamma} \in \mathcal{C}_2^\perp / \mathcal{C}_1^\perp }
		\sum_{\bm{w} \in \mathcal{C}_1^\perp+\bm{\eta}}\epsilon_{(\bm{0},\bm{w})}
		\left(\sum_{\bm{z}\in \mathcal{C}_1^\perp + \bm{\mu} + \bm{\gamma}}f(\bm{z})\overline{f(\bm{z}\oplus \bm{w})} \right)\\
		&=\sum_{\bm{w}\in \mathcal{C}_1^\perp+\bm{\eta}}\epsilon_{(\bm{0},\bm{w})}
		\left( \sum_{\bm{\mu}\in \F_2^n / \mathcal{C}_2^\perp} \sum_{\bm{\gamma} \in \mathcal{C}_2^\perp / \mathcal{C}_1^\perp }
		\sum_{\bm{z}\in \mathcal{C}_1^\perp + \bm{\mu} + \bm{\gamma}}f(\bm{z})\overline{f(\bm{z}\oplus \bm{w})} \right)\\
		&=\sum_{\bm{w}\in \mathcal{C}_1^\perp+\bm{\eta}}\epsilon_{(\bm{0},\bm{w})}
		\left( \sum_{\bm{z}\in \F_2^n}f(\bm{z})\overline{f(\bm{z}\oplus \bm{w})} \right)\\
		&=\left\{\begin{array}{lc}
	    \epsilon_{(\bm{0},\bm{0})} = 1 & \text{ if } \bm{\eta}=\bm{0}\\
    	0 & \text{ if } \bm{\eta}\neq \bm{0}
	    \end{array} \right.,\numberthis
		\end{align*}
		where the last step follows from the fact that $U_Z$ is unitary \eqref{eqn:unitarycoefficients}.    \qed
        
        \subsection{Derivation of \eqref{eqn:later_half}}
        \label{subsec:derv_simplify}
        \begin{align*}
	\Pi_{\mathcal{S}_X(\bm{\mu_0})} U_Z\Pi_{\mathcal{S}_Z} \ket{\phi}
	& = \frac{1}{|\mathcal{C}_2|}\sum_{\bm{a}\in \mathcal{C}_2} (-1)^{\bm{a}\bm{\mu_0}^T} \epsilon_{(\bm{a},\bm{0)}} E(\bm{a},\bm{0}) \sum_{\bm{\mu}\in \F_2^n/\mathcal{C}_2^\perp}\sum_{\bm{\gamma}\in \mathcal{C}_2^\perp/\mathcal{C}_1^\perp} A_{\bm{\mu},\bm{\gamma}} \epsilon_{(\bm{0},\bm{\mu} \oplus \bm{\gamma})}E(\bm{0},\bm{\mu} \oplus \bm{\gamma}) \ket{\phi}\\
	& = \frac{1}{|\mathcal{C}_2|}\sum_{\bm{\mu}}\sum_{\bm{\gamma}} 
	A_{\bm{\mu},\bm{\gamma}} \epsilon_{(\bm{0},\bm{\mu} \oplus \bm{\gamma})}E(\bm{0},\bm{\mu} \oplus \bm{\gamma})  \sum_{\bm{a}\in \mathcal{C}_2} (-1)^{\bm{a}(\bm{\mu}+\bm{\mu_0})^T} \epsilon_{(\bm{a},\bm{0})} E(\bm{a},\bm{0}) \ket{\phi}\\
	& = \frac{1}{|\mathcal{C}_2|}\sum_{\bm{\mu}}\sum_{\bm{\gamma}} A_{\bm{\mu},\bm{\gamma}} \epsilon_{(\bm{0},\bm{\bm{\mu} \oplus \bm{\gamma}})}E(\bm{0},\bm{\mu} \oplus \bm{\gamma})  \sum_{\bm{a}\in \mathcal{C}_2} (-1)^{\bm{a} (\bm{\mu}\oplus\bm{\mu_0})^T}  \ket{\phi},\numberthis \label{eqn:prob_simp_append}
	\end{align*}
	where \eqref{eqn:prob_simp_append} follows from the fact $\epsilon_{(\bm{a},\bm{0})} E(\bm{a},\bm{0})  \in \mathcal{S}$. \qed
	
	\subsection{Proof of Theorem \ref{thm:preserved_by_Uz}}
	\label{subsec:proof_thm_preserved_by_Uz}
	    Recall from \eqref{eqn:Pi_SZ_U_Z} that $U_Z\Pi_{S_\mathcal{Z}} = \Pi_{S_\mathcal{Z}}U_Z$ simplifies to
		\begin{align}
		U_Z\Pi_{\mathcal{S}_Z}
		&=\frac{1}{2^{n-k_1}}\sum_{\bm{\mu}\in \F_2^n/\mathcal{C}_2^\perp}\sum_{\bm{\gamma}\in \mathcal{C}_2^\perp/\mathcal{C}_1^\perp} A_{\bm{\mu},\bm{\gamma}}\left( \sum_{\bm{u}\in \mathcal{C}_1^\perp+\bm{\mu}+\bm{\gamma}}\epsilon_{(\bm{0},\bm{u})}E(\bm{0},\bm{u})\right). 
		\end{align}
		
		
		
		$\Leftarrow$: We assume \eqref{eqn:preserved_by_Uz} holds and derive $U_Z\Pi_{\mathcal{S}}=\Pi_{\mathcal{S}}U_Z$. By Theorem \ref{thm:GCs_prop}, we have $A_{\bm{\mu},\bm{\gamma}}=0$ when $\bm{\mu}\neq \bm{0}$. It follows from \eqref{eqn:Pi_SZ_U_Z} that
		\begin{equation}
		U_Z\Pi_{\mathcal{S}_Z}=\Pi_{\mathcal{S}_Z}U_Z=\frac{1}{2^{n-k_1}}\sum_{\bm{\gamma}\in \mathcal{C}_2^\perp/\mathcal{C}_1^\perp} A_{\bm{0},\bm{\gamma}}\left( \sum_{\bm{u}\in \mathcal{C}_1^\perp+\bm{\gamma}}\epsilon_{(\bm{0},\bm{u})}E(\bm{0},\bm{u})\right).
		\end{equation}
		For any $\bm{\gamma}\in \mathcal{C}_2^\perp/\mathcal{C}_1^\perp$ and $\bm{u}\in \mathcal{C}_1^\perp+\bm{\gamma} \subset \mathcal{C}_2^\perp$, we have $E(\bm{0},\bm{u})\Pi_{\mathcal{S}_X}=\Pi_{\mathcal{S}_X}E(\bm{0},\bm{u})$. Hence, 
		\begin{align*}
		U_Z\Pi_{\mathcal{S}}=U_Z\Pi_{\mathcal{S}_Z}\Pi_{\mathcal{S}_X}
		&=\frac{1}{2^{n-k_1}}\sum_{\bm{\gamma}\in \mathcal{C}_2^\perp/\mathcal{C}_1^\perp} A_{\bm{0},\bm{\gamma}}\left( \sum_{\bm{u}\in \mathcal{C}_1^\perp+\bm{\gamma}}\epsilon_{(\bm{0},\bm{u})}\Pi_{\mathcal{S}_X}E(\bm{0},\bm{u})\right)\\\
		&= \Pi_{\mathcal{S}_X}U_Z\Pi_{\mathcal{S}_Z}=\Pi_{\mathcal{S}_X}\Pi_{\mathcal{S}_Z}U_Z=\Pi_{\mathcal{S}}U_Z.\numberthis
		\end{align*}
		
		
		$\Rightarrow$: We assume $U_Z\Pi_{\mathcal{S}}=\Pi_{\mathcal{S}}U_Z$ and show \eqref{eqn:preserved_by_Uz}. 
        It follows from \eqref{eqn:kraus_simplify} that
		\begin{align*}
		U_Z\Pi_{\mathcal{S}} &=U_Z\Pi_{\mathcal{S}_Z}\Pi_{\mathcal{S}_X} \\
		&=\frac{1}{2^{n-k_1}}\sum_{\bm{\mu}\in \F_2^n/\mathcal{C}_2^\perp}\left(\Pi_{\mathcal{S}_X(\bm{\mu})}\sum_{\bm{\gamma}\in \mathcal{C}_2^\perp/\mathcal{C}_1^\perp} A_{\bm{\mu},\bm{\gamma}}\left( \sum_{\bm{u}\in \mathcal{C}_1^\perp+\bm{\gamma}+\bm{\mu}}\epsilon_{(\bm{0},\bm{u})}E(\bm{0},\bm{u})\right)\right)
		= \Pi_{\mathcal{S}}U_Z.
		\numberthis \label{eqn:rightdirection1}
		\end{align*}
		Pairwise orthogonality of projectors implies $\Pi_{\mathcal{S}_X(\bm{\mu})}\Pi_{\mathcal{S}_X(\bm{\mu'})}=0$ when $\bm{\mu} \neq \bm{\mu'}$ in $\F_2^n/\mathcal{C}_2^\perp$. Hence, for any $\bm{\mu_0}\in \F_2^n/\mathcal{C}_2^\perp\setminus \{\bm{0}\}$, we have 
        we have $0=\Pi_{\mathcal{S}_X(\bm{\mu_0})}\Pi_{\mathcal{S}_X}\Pi_{\mathcal{S}_Z} U_Z=\Pi_{\mathcal{S}_X(\bm{\mu_0})}(\Pi_{\mathcal{S}}U_Z) =  \Pi_{\mathcal{S}_X(\bm{\mu_0})}(U_Z\Pi_{\mathcal{S}} )$, which implies that 
		\begin{align*}
		0&=\frac{1}{2^{n-k_1}}
		\sum_{\bm{\mu}\in \F_2^n/\mathcal{C}_2^\perp}\left(\Pi_{\mathcal{S}_X(\bm{\mu_0})}\Pi_{\mathcal{S}_X(\bm{\mu})}
		\sum_{\bm{\gamma}\in \mathcal{C}_2^\perp/\mathcal{C}_1^\perp} A_{\bm{\mu},\bm{\gamma}}
		\left( \sum_{\bm{u}\in \mathcal{C}_1^\perp+\bm{\gamma}+\bm{\mu}}\epsilon_{(\bm{0},\bm{u})}E(\bm{0},\bm{u})\right)\right)\\
		&=\frac{1}{2^{n-k_1}}\Pi_{S_X(\bm{\mu_0})}
		\sum_{\bm{\gamma}\in \mathcal{C}_2^\perp/\mathcal{C}_1^\perp} A_{\bm{\mu_0},\bm{\gamma}}
		\left( \sum_{\bm{u}\in \mathcal{C}_1^\perp+\bm{\gamma}+\bm{\mu_0}}
		\epsilon_{(\bm{0},\bm{u})}E(\bm{0},\bm{u})\right)\\
		&=\frac{1}{2^{n-k_1}}\left(\frac{1}{2^{k_2}}\sum_{\bm{a}\in \mathcal{C}_2} (-1)^{\bm{a}\bm{\mu_0}^T}\epsilon_{(\bm{a},\bm{0})}E(\bm{a},\bm{0})\right) 
		\left(\sum_{\bm{\gamma}\in \mathcal{C}_2^\perp/\mathcal{C}_1^\perp} A_{\bm{\mu_0},\bm{\gamma}}
		\left( \sum_{\bm{u}\in \mathcal{C}_1^\perp+\bm{\gamma}+\bm{\mu_0}}\epsilon_{(\bm{0},\bm{u})}E(\bm{0},\bm{u})
		\right)\right)\\
		&= \frac{1}{2^{n-k_1+k_2}}
		\sum_{\bm{\gamma}\in \mathcal{C}_2^\perp/\mathcal{C}_1^\perp}
		\sum_{\bm{u}\in \mathcal{C}_1^\perp+\bm{\gamma}+\bm{\mu_0}}\sum_{\bm{a}\in\mathcal{C}_2} A_{\bm{\mu_0},\bm{\gamma}} (-1)^{\bm{a}\bm{\mu_0}^T}\imath^{\bm{a}\bm{\mu_0}^T}\epsilon_{(\bm{a},\bm{u})}E(\bm{a},\bm{u}).\numberthis
		\end{align*}
		Since Pauli matrices are linear independent, we have $A_{\bm{\mu_0},\bm{\gamma}}=0$ for all $\bm{\mu}\in\F_2^n/\mathcal{C}_2^\perp\setminus \{\bm{0} \}$ and all $\bm{\gamma}\in \mathcal{C}_2^\perp /\mathcal{C}_1^\perp$, and $\eqref{eqn:preserved_by_Uz}$ holds. \qed
	
	\subsection{Proof of Lemma \ref{lemma:two_div_conds_qfd}}
	\label{subsec:proof_lemma_qfd_div}
	
    		$\Rightarrow$: Assume \eqref{eqn:div_cond_qfd} holds for all $\bm{v_1}, \bm{v_2} \in \mathcal{C}_1+\bm{y}$ such that $\bm{v_1} \oplus \bm{v_2} \in \mathcal{C}_2$. Then, \eqref{eqn:first_div_cond_qfd} is satisfied. Let $\bm{v_1},\bm{v_2}\in (\mathcal{C}_1 +\bm{y})/(\mathcal{C}_2 +\bm{y})$ and $\bm{v_1}\oplus \bm{v_2} \in \mathcal{C}_2$. Then we can write $\bm{v_1} = \bm{u_1}+\bm{w}+\bm{y}$ and $\bm{v_2} = \bm{u_2}+\bm{w}+\bm{y}$ for $\bm{u_1},\bm{u_2}\in \mathcal{C}_2$ and $\bm{w}\in \mathcal{C}_1 /\mathcal{C}_2$. We simplify \eqref{eqn:div_cond_qfd} as
    		\begin{align}
    		2^l & \mid   (\bm{u_1}+\bm{w}+\bm{y}) R (\bm{u_1}+\bm{w}+\bm{y})^T - (\bm{u_2}+\bm{w}+\bm{y}) R (\bm{u_2}+\bm{w}+\bm{y})^T \label{eqn:simpfy_div_cond_qfd1}\\
    		2^l & \mid   \left((\bm{u_1}+\bm{y}) R (\bm{u_1}+\bm{y})^T - (\bm{u_2}+\bm{y}) R (\bm{u_2}+\bm{y})^T \right)+ 2\left((\bm{u_1}+\bm{y}) R \bm{w}^T - (\bm{u_1}+\bm{y}) R \bm{w}^T\right) \label{eqn:simpfy_div_cond_qfd2}\\
    		2^l & \mid   2(\bm{u_1}-\bm{u_2}) R \bm{w}^T, \label{eqn:simpfy_div_cond_qfd3}
    		\end{align} 
    		since $\bm{u_1}+\bm{y}, \bm{u_2}+\bm{y} \in \mathcal{C}_2 +\bm{y}$. Thus, \eqref{eqn:sec_div_cond_qfd} is also satisfied.
    		
    		$\Leftarrow$: We simply reverse \eqref{eqn:simpfy_div_cond_qfd1}, \eqref{eqn:simpfy_div_cond_qfd2}, and \eqref{eqn:simpfy_div_cond_qfd3}. \qed
    		
    \subsection{Proof of Theorem \ref{thm:div_cond_RZ}}
    \label{subsec:proof_div_Rz}
    The proof idea is the same as that of Theorem \ref{thm:div_cond_qfd}
	We take $U_Z = R_Z\left(\frac{\pi}{p}\right)$ and simplify \eqref{eqn:A_mu,gamma} using \eqref{eqn:preserved_by_Uz}: 
	\begin{align*}
	1
	&=\sum_{\bm{\gamma}\in \mathcal{C}_2^\perp/\mathcal{C}_1^\perp} \left|A_{\bm{0},\bm{\gamma}}\left(\frac{\pi}{p}\right)\right|^2\\
	&=\sum_{\bm{\gamma}\in \mathcal{C}_2^\perp /\mathcal{C}_1^\perp}\frac{1}{|\mathcal{C}_1|^2} \sum_{\bm{z_1},\bm{z_2}\in \mathcal{C}_1+\bm{y}} (-1)^{\bm{\gamma} (\bm{z_1}\oplus\bm{z_2})^T} \left(e^{\imath\frac{\pi}{p}}\right)^{w_H(\bm{z_1})-w_H(\bm{z_2})}.\numberthis
	\end{align*}
	Setting $\bm{w}=\bm{z_1}\oplus\bm{z_2}$ and $\bm{z}=\bm{z_2}$, we obtain
	\begin{align*}
	1
	& = \frac{1}{|\mathcal{C}_1|^2} \sum_{\bm{w}\in \mathcal{C}_1} 
	\sum_{\bm{z}\in \mathcal{C}_1+\bm{y}} \left(e^{\imath\frac{\pi}{p}}\right)^{w_H(\bm{w}\oplus \bm{z})-w_H(\bm{z})}  \sum_{\bm{\gamma} \in \mathcal{C}_2^\perp /\mathcal{C}_1^\perp}(-1)^{\bm{\gamma} \bm{w}^T}\\
	&= \frac{1}{|\mathcal{C}_1|^2} \frac{|\mathcal{C}_1|}{|\mathcal{C}_2|} \sum_{\bm{w}\in \mathcal{C}_2} 
	\sum_{\bm{z}\in \mathcal{C}_1+\bm{y}} \left(e^{\imath\frac{\pi}{p}}\right)^{w_H(\bm{w}\oplus\bm{z})-w_H(\bm{z})}  \\
	&=\frac{1}{|\mathcal{C}_1||\mathcal{C}_2|} \sum_{\bm{w}\in \mathcal{C}_2} \sum_{\bm{z}\in \mathcal{C}_1+\bm{y}} \left(e^{\imath\frac{\pi}{p}}\right)^{w_H(\bm{w})-2w_H(\bm{w}*\bm{z})}, \numberthis \label{eqn:div_last_RZ}
	\end{align*}
	Note that \eqref{eqn:div_last_RZ} implies every term in the double sum is equal to $1$, which completes the proof.
	
	\subsection{Proof of Lemma \ref{lemma:simplify}}
	\label{subsec:proof_connect_lem_1}
		It follows from \eqref{eqn:A_mu,gamma} that
		\begin{equation} 
		|A_{\bm{0},\bm{\gamma}}(\theta)|^2= \frac{1}{|\mathcal{C}_1|} \sum_{\bm{w}\in \mathcal{C}_1} (-1)^{ \bm{\gamma} \bm{w}^T} s_{\bm{w}}, 
		\end{equation}
		where 
		\begin{equation}
		s_{\bm{w}} \coloneqq \frac{1}{|\mathcal{C}_1|}\sum_{\bm{z}\in \mathcal{C}_1+\bm{y}} \left(e^{\imath\theta}\right)^{w_H(\bm{w})-2w_H(\bm{w}*\bm{z})}.
		\end{equation}
		Then
		\begin{align*}
		\sum_{\bm{\gamma} \in \mathcal{C}_2^\perp / \mathcal{C}_1^\perp} |A_{\bm{0},\bm{\gamma}}(\theta)|^2 
		&= \frac{1}{|\mathcal{C}_1|} \sum_{\bm{\gamma} \in \mathcal{C}_2^\perp / \mathcal{C}_1^\perp} \left(\sum_{\bm{w}\in \mathcal{C}_2} (-1)^{\bm{\gamma}\bm{w}^T} s_{\bm{w}} + \sum_{\bm{w}\in \mathcal{C}_1\setminus \mathcal{C}_2} (-1)^{\bm{\gamma} \bm{w}^T} s_{\bm{w}}\right)\\
		&=  \frac{1}{|\mathcal{C}_1|} \sum_{\bm{\gamma} \in \mathcal{C}_2^\perp / \mathcal{C}_1^\perp} \sum_{\bm{w}\in \mathcal{C}_2}s_{\bm{w}} + \frac{1}{|\mathcal{C}_1|}\sum_{\bm{w}\in \mathcal{C}_1\setminus \mathcal{C}_2} \sum_{\bm{\gamma} \in \mathcal{C}_2^\perp / \mathcal{C}_1^\perp} (-1)^{\bm{\gamma} \bm{w}^T}s_{\bm{w}} \\
		&= \frac{1}{|\mathcal{C}_1|} \frac{|\mathcal{C}_1|}{|\mathcal{C}_2|} \sum_{\bm{w}\in \mathcal{C}_2}s_{\bm{w}} = \frac{1}{|\mathcal{C}_2|}\sum_{\bm{w}\in \mathcal{C}_2}s_{\bm{w}},\numberthis \label{eqn:connect_lem}
		\end{align*}
		where the last step follows from the fact for any $\bm{w}\in \mathcal{C}_1 \setminus \mathcal{C}_2$, $\sum_{\bm{\gamma} \in \mathcal{C}_2^\perp / \mathcal{C}_1^\perp} (-1)^{\bm{\gamma} \bm{w}^T} = 0$.
		Thus, \eqref{eqn:connect_lem} equals $1$ 
		if and only if $s_{\bm{w}}=1$ for all $\bm{w}\in \mathcal{C}_2 $. Note that $s_{\bm{0}} = 1$, and for all non-zero $\bm{w}$, we have
		\begin{align*}
		s_{\bm{w}} 
		&= \frac{1}{|\mathcal{C}_1|} \sum_{\bm{z}\in \mathcal{C}_1} \left(e^{\imath \theta}\right)^{w_H(\bm{w})-2w_H(\bm{w}*(\bm{z}\oplus \bm{y}))} \\
		&=\frac{1}{|\mathcal{D}_{\bm{w}}|} \sum_{\bm{v}\in \mathcal{D}_{\bm{w}}}\left(e^{\imath \theta}\right)^{w_H(\bm{w}*(\bm{v}\oplus \bm{y}))}\\
		&=\frac{1}{|\mathcal{D}_{\bm{w}}|}\sum_{\bm{x}\in \mathcal{D}_{\bm{w}}+\bm{w}*\bm{y}} \left(e^{\imath\theta}\right)^{w_H(\bm{w})-2w_H(\bm{x})}.\numberthis
		\end{align*}
		Thus, $\sum_{\bm{\gamma}\in \mathcal{C}_2^\perp /\mathcal{C}_1^\perp } \left|A_{\bm{0},\bm{\gamma}}(\theta)\right|^2 =1$ if and only if \eqref{eqn:D_w} holds 
		for all non-zero $\bm{w}\in \mathcal{C}_2 $. \qed

	\subsection{Proof of Lemma \ref{lemma:space_equal}}
	\label{subsec:proof_connect_lem_2}
		We first show that $\mathcal{D}_{\bm{w}} + \bm{w}*\bm{y} \subseteq \mathrm{proj}_{\bm{w}}(\tilde{\mathcal{Z}}_{\bm{w}}^\perp) + \bm{y}'$. 
		Let $\bm{z}\in \mathcal{C}_1$. Then, $\bm{w}*\bm{z}\oplus \bm{w}*\bm{y} \in \mathcal{D}_{\bm{w}} + \bm{w}*\bm{y}$. Let $\bm{v}\in \mathcal{Z}_{\bm{w}} \subseteq \mathcal{C}_1^\perp $. We observe
		\begin{equation}
		\left(\bm{w}*(\bm{z}\oplus \bm{y}) \oplus \bm{y}'\right)*\bm{v} = \bm{z}*\bm{w}*\bm{v} \oplus \bm{y}*\bm{w}*\bm{v} \oplus \bm{y}'*\bm{v} = \bm{z}*\bm{v} \oplus \bm{y}*\bm{v} \oplus \bm{y}'*\bm{v},
		\end{equation}
		where the last step follows from $\mathrm{supp}(\bm{x})\subseteq \mathrm{supp}(\bm{w})$. Since $\bm{x}\in \mathcal{C}_1^\perp$ and $\bm{z}\in \mathcal{C}_1$, $w_H(\bm{z}*\bm{v}) = 0 \bmod 2$. We consider two cases. If $\bm{v}\in \mathcal{B}_{\bm{w}} \subseteq \mathcal{Z}_{\bm{w}}$, then $w_H(\bm{y}*\bm{v}) = 0 \bmod 2$ and $w_H(\bm{y}'*\bm{v}) = 0 \bmod 2$. Otherwise, $\bm{v}\in \mathcal{Z}_{\bm{w}}\setminus \mathcal{B}_{\bm{w}}$. Then $w_H(\bm{y}*\bm{v}) = 1 \bmod 2$ and $w_H(\bm{y}'*\bm{v}) = 1 \bmod 2$. For both cases, $w_H(	\left(\bm{w}*(\bm{z}\oplus \bm{y}) \oplus \bm{y}'\right)*\bm{v}) = 0 \bmod 2$. Thus, $	\bm{w}*(\bm{z}\oplus \bm{y}) \oplus \bm{y}' \in \mathrm{proj}_{\bm{w}}(\tilde{\mathcal{Z}}_{\bm{w}}^\perp)$, which implies that $\bm{w}*(\bm{z}\oplus \bm{y}) \in \mathrm{proj}_{\bm{w}}(\tilde{\mathcal{Z}}_{\bm{w}}^\perp)+\bm{y}'$. Then, we have $\mathcal{D}_{\bm{w}} + \bm{w}*\bm{y} \subseteq \mathrm{proj}_{\bm{w}}(\tilde{\mathcal{Z}}_{\bm{w}}^\perp) + \bm{y}'$.
		
		It remains to show that $|\mathcal{D}_{\bm{w}}| = | \mathrm{proj}_{\bm{w}}(\tilde{\mathcal{Z}}_{\bm{w}}^\perp) |$. We observe that $\mathcal{D}_{\bm{w}}=\mathcal{C}_1\big|_{\bm{1}-\bm{w}} = (\mathcal{C}_1^\perp\big|_{\bm{1}-\bm{w}})^\perp$. Thus, $\dim(\mathcal{D}_{\bm{w}}) = w_H(\bm{w}) - d_{\bm{w}} = \dim(\mathcal{Z}_{\bm{w}}^\perp) = \dim(\mathrm{proj}_{\bm{w}}(\tilde{\mathcal{Z}}_{\bm{w}}^\perp) ),$ which completes the proof. \qed
	
	\subsection{Proof of Lemma \ref{lemma:Mac1}}
	\label{subsec:proof_connect_lem_3}
		We rewrite \eqref{eqn:sec} as
		\begin{equation}
		2\sum_{\bm{v}\in \mathcal{B}_{\bm{w}}} \left(\imath \tan\theta \right)^{w_H(\bm{v})} -\sum_{\bm{v}\in \mathcal{Z}_{\bm{w}}} \left(\imath \tan\theta \right)^{w_H(\bm{v})}= \left(\sec\theta\right)^{w_H(\bm{w})},
		\end{equation}
		and rearrange to obtain
		\begin{equation}
		2\sum_{\bm{v}\in \mathcal{B}_{\bm{w}}}\left(\cos\theta\right)^{w_H(\bm{w})-w_H(\bm{v})}\left(\sin\theta\right)^{w_H(\bm{v})} 
		- \sum_{\bm{v}\in \mathcal{Z}_{\bm{w}}}\left(\cos\theta\right)^{w_H(\bm{w})-w_H(\bm{v})}\left(\sin\theta\right)^{w_H(\bm{v})} = 1.
		\end{equation}
		We apply the MacWilliams Identities to $P_{2\theta}[\mathcal{B}_{\bm{w}}]$ and $P_{2\theta}[\mathcal{Z}_{\bm{w}}]$ ($P_{\theta}[\mathcal{C}]$ is deifned in \eqref{eqn:key_sub_Mac} for any angle $\theta$ and linear code $\mathcal{C}$) to obtain
		\begin{equation}\label{eqn:Mac3}
		\frac{2}{|\mathcal{B}_{\bm{w}}^\perp|} \sum_{\bm{z}\in \mathcal{B}_{\bm{w}}^\perp} \left(e^{\imath \theta}\right)^{w_H(\bm{w})-2w_H(\bm{z})} - \frac{1}{|\mathcal{Z}_{\bm{w}}^\perp|} \sum_{\bm{z}\in \mathcal{Z}_{\bm{w}}^\perp} \left(e^{\imath \theta}\right)^{w_H(\bm{w})-2w_H(\bm{z})} = 1.
		\end{equation}
		Since $|\mathcal{B}_{\bm{w}}^\perp| = 2|\mathcal{Z}_{\bm{w}}^\perp|$,  $\mathcal{B}_{\bm{w}}^\perp = \mathrm{proj}_{\bm{w}}(\tilde{\mathcal{B}}_{\bm{w}}^\perp) $, and $\mathcal{Z}_{\bm{w}}^\perp = \mathrm{proj}_{\bm{w}}(\tilde{\mathcal{Z}}_{\bm{w}}^\perp)$, we obtain 
		\begin{equation}
		\frac{1}{\left|\mathrm{proj}_{\bm{w}}(\tilde{\mathcal{Z}}_{\bm{w}}^\perp)\right|}
		\sum_{\bm{v}\in \mathrm{proj}_{\bm{w}}(\tilde{\mathcal{Z}}_{\bm{w}}^\perp) + \bm{y'}} \left(e^{i\theta}\right)^{w_H(\bm{w})-2w_H(\bm{v})} = 1,
		\end{equation}
		which completes the proof. \qed

\end{document}